\documentclass[a4paper,11pt]{article}
\usepackage{jheppub}
\usepackage{amsthm}
\usepackage{booktabs}
\usepackage{tikz}
\usetikzlibrary{arrows.meta,positioning}

\graphicspath{{figs/}{training_data/}{./}}

\newcommand{\C}{\mathcal{C}}
\newcommand{\rvw}{R^2_{\rm vw}}
\newcommand{\relL}{\mathrm{rel}L^2}

\newtheorem{theorem}{Theorem}
\newtheorem{proposition}{Proposition}
\newtheorem{lemma}{Lemma}
\newtheorem{corollary}{Corollary}
\newtheorem{definition}{Definition}
\theoremstyle{remark}
\newtheorem{remark}{Remark}
\title{The Information Content of Krylov Observables: A Machine Learning Approach}

\author{Ritam Basu}
\affiliation{Department of Theoretical Physics, Tata Institute of Fundamental
Research, Homi Bhabha Road, Mumbai 400005, India}
\emailAdd{ritam.basu@tifr.res.in}

\abstract{
We employ machine learning techniques to quantitatively estimate the amount of information contained in three Krylov-space observables: the spread complexity $\C(t)$, the discrete Wigner negativity $N(t)$, and the normalized negativity
$\chi(t)=N(t)/|S(t)|$---where $S(t)$ is the survival amplitude--proposed recently as a second-moment infall probe \cite{Basu:2026inf}. Tiny residual networks (16-32 neurons) and boosted trees are trained on half of the dataset ($\sim57{,}000$ labeled evolutions). The data cover the GUE, GOE and Poisson spectra, the integrable SL$(2,\mathbb R)$/CFT sector, and the chaos interpolation $H(\varepsilon)=H_{{\rm SL}(2,\mathbb R)} +\varepsilon R_0 W_{\rm GUE}$. Any moment uniquely determines the thermofield temperature at $R^2\simeq0.999$. Neither moment, however, can reconstruct the fine structure of the form factor ($\rvw\simeq0.18$ for all ensembles). We can take average values of the target function for some arbitrary small window to resolve this issue. But it does not solve the problem and a wider window will destroy the dip-ramp physics of the form factor . However, the spectral form factor (SFF) strictly refines both the moments $N$(t) and $C$(t).  So, SFF strictly dominates any second moment. The coarse $e^S$ plateau is nonetheless recovered by either moment at $R^2=0.861$, mostly by the first 20\% of $\C(t)$. Also, observables  $C$(t) or $N$(t) alone suffices to classify the symmetry class at up to 98\% precision. In the integrable sector the observables carry equal information, due to exact slaving of the negative binomial distribution. The negativity outperforms the others in disentangling the $(h,\alpha)$ degeneracy ($N\rightarrow h$ correlation at 99.9\%). Along the interpolation, the asymmetry difference of $\chi$ versus $\C$ is turned on by chaos --- it increases from +0.33 to +0.77 as the level statistics cross over to the GUE. Though the difference between  raw negativity (\emph{N}(t)) and the spread complexity (\emph{C}(t)) decays to zero. The informational advantage of the second moment observables over the first moment observable (\emph{C}(t) is therefore a signature of chaos, carried specifically by the normalised negativity ($\chi(t)$). We derive an analytical mechanism and a quantitative bound for it.
}

\begin{document}
\maketitle
\flushbottom

\section{Introduction}
\label{sec:intro}

The Krylov approach to quantum dynamics compresses the dynamics of the system
to a one-dimensional hopping process. For an initial state $|K_0\rangle$ and a
Hamiltonian $H$, the Lanczos procedure generates an orthonormal sequence
$\{|K_n\rangle\}$ along which $H$ is tridiagonal, and the dynamical
content of the evolution is completely determined by the chain wavefunction
$\psi_n(t)=\langle K_n|e^{-iHt}|K_0\rangle$. The first moment of the latter,
the \emph{spread complexity}
\begin{equation}
\C(t)\;=\;\sum_n n\,|\psi_n(t)|^2,
\label{eq:spread}
\end{equation}
introduced in \cite{Balasubramanian:2022tpr} following the operator growth
programme of \cite{Parker:2018yvk,Barbon:2019wsy,Rabinovici:2020ryf,
Dymarsky:2019elm}, has emerged as the key diagnostic for quantum chaos
\cite{Dymarsky:2021bjq,Rabinovici:2021qqt,Rabinovici:2023yex,Caputa:2021sib,
Avdoshkin:2022xuw,Erdmenger:2023wjg,Balasubramanian:2023mag} (for a review,
see \cite{Nandy:2024review}) and --- via the complexity-geometry duality
framework \cite{Susskind:2014rva,Stanford:2014jda,Brown:2015bva} and its
quantitative Krylov implementation in holography \cite{Rabinovici:2023yex,
Iliesiu:2021ari,Heller:2024dss} --- of black hole interiors.

The first moment is however a drastic compression of $\psi_n(t)$. In
\cite{Basu:2024krw,Basu:2025wig} we introduced phase space diagnostics of
the same dynamics: the discrete Wigner function of the chain wavefunction and
its integrated negativity $N(t)$, a genuinely second-moment quantity; and in
\cite{Basu:2026inf} the \emph{normalised negativity}
\begin{equation}
\chi(t)\;=\;\frac{N(t)}{|S(t)|}\,,\qquad S(t)=\psi_0(t).
\label{eq:chi}
\end{equation}
We argued there that the latter is the proper second-moment diagnostic of infall in AdS$_3$:
the division by the survival amplitude washes out the trivial decay of the return
probability and brings out genuine phase space spreading.

This opens up a question that is clear-cut, operational, and --- to our knowledge
--- has not yet been addressed: \emph{How much usable information do these curves actually contain,
about the theory generating them and about each other?} More concretely:
does $\C(t)$ determine $N(t)$, or does the negativity contain information that the
complexity has stripped away, and if so \emph{at what point}? Is the temperature
of a thermofield double (TFD) evolution recoverable from a single curve? What features of the spectral
form factor (SFF) survive in the compression to a single curve, and which are irreversibly lost?
Is a single complexity curve enough to determine the symmetry class of the ensemble? And
finally, critically: \textbf{is the informational advantage of the (normalised) negativity
a feature of \emph{chaotic} dynamics}, as the physical picture of \cite{Basu:2026inf} predicts?

``Who determines whom?'' is a question about the existence of functions between
data sets, and supervised machine learning (ML) is a natural tool for answering it:
a regression $X\to Y$ will only succeed if the information about $Y$ is contained
in $X$ in a learnable form, and therefore the test-set scores are operational
lower bounds on mutual information content. Machine learning has been applied to
the diagnosis of quantum chaos from dynamical data before \cite{Kharkov:2019mqc};
a first step in the Krylov setting was made in \cite{Bak:2025mlq}, which trained convolutional networks
on the \emph{forward} map, state $\to$ complexity. Here we study the \emph{inverse}
and \emph{cross} mappings --- from the observable curves to physical labels and
each other --- across an entire integrability-chaos spectrum.

\paragraph{Strategy.} We generate five large labeled data sets
(Table~\ref{tab:data}): TFD evolution on GUE, GOE, and Poisson spectral ensembles
over up to $19$ Hilbert space dimensions $D\in[101,1009]$ and $7$ temperatures;
the exactly solvable SL$(2,\mathbb R)$ discrete-series (2d-CFT primaries) sector
over a dense grid of $2{,}079$ theories $(h,\alpha)$; and an interpolating family
\begin{equation}
H(\varepsilon)\;=\;H_{{\rm SL}(2,\mathbb R)}\;+\;\varepsilon\,R_0\,
W_{\rm GUE},
\label{eq:interp0}
\end{equation}
from the integrable chain to random matrix chaos, with the former's own level statistics
$\langle r\rangle$ recorded per sample as a measure of internal chaos. We train a
fixed, minimalistic model class --- residual multilayer perceptrons of width
$16$--$32$ mixed with gradient-boosted decision trees --- under a protocol
sharpened against the problems we identified and document (metric artifacts,
dynamic-range issues, selection bias, test leakage; Sec.~\ref{sec:protocol}).

\paragraph{Summary of results.}: Step by step we will see:

\begin{enumerate}
\item \textbf{Temperature is encoded into either curve} (Sec.~\ref{sec:gue}, \ref{sec:goepoisson}).
$\C\to\beta$ and $N\to\beta$ give $R^2=0.999$ and $0.998$ on GUE; $0.998$ on GOE;
$0.983$ and $0.997$ on Poisson. The Poisson deficit reflects the larger fluctuations of
uncorrelated levels, but does not signal any encoding failure: widening the Poisson
grid to $D\le1009$ pushes the score from $0.979$ to $0.983$ exactly as
expected from the reduction of fluctuations.
\item \textbf{The SFF contains  more information than either curve} (Sec.~\ref{sec:sff}).
$\C\to{\rm SFF}$ attains $\rvw=0.185$ (GUE), $0.177$ (GOE), $0.184$ (Poisson)
--- no statistical difference across ensembles --- reproducing the dip and envelope
of the SFF while missing every single sample-specific fluctuation;
$N\to{\rm SFF}$ yields the same result ($0.188$ on GUE). Our results suggests that no time averaging of the SFF  can help us to predict it from $C$(t) or $N$(t). We also argue that averaging windows sufficient to do so will also trivialize the task, by erasing the dip-ramp structure altogether. W e give evidence why  this failure is physical, not some statistical fluctuation.
\item \textbf{The $e^{S}$ plateau is hidden in the beginning of the complexity curve}
(Sec.~\ref{sec:sff}). The late-time SFF plateau is predicted from $\C(t)$ at $R^2=0.861$,
and even from the first fifth of the curve at $0.854$; the same holds for GOE ($\approx0.866/0.859$) and
Poisson ($\approx0.836/0.83$).
\item \textbf{One curve is sufficient to distinguish the universality class}
(Sec.~\ref{sec:classify}). Dimension-dependent classifiers based on $\C(t)$ alone
distinguish GUE from Poisson essentially perfectly in each dimension, and
GUE from GOE --- identical densities, level repulsion distinction only ---
with accuracy rising from $\approx0.79$ at $D=101$ to $\approx0.97$-$0.98$ at
$D\sim10^{3}$; the negativity provides slightly worse discrimination.
\item \textbf{The integrable 2d-CFT sector is informationally symmetric, with a
degeneracy resolved uniquely by the negativity} (Sec.~\ref{sec:cft}). Both
observables perfectly predict the composite parameter $h\alpha^2$ ($R^2=1.000$) ---
the leading order term in all the curves --- while the individual parameters are best
resolved by the negativity ($N\to h=0.999$, $N\to\alpha=0.998$), followed by the
complexity ($0.984,0.985$) and $\chi$ ($0.974,0.974$). The asymmetry of
$\C\leftrightarrow N$ reconstruction is consistent with zero, as required
by the exact negative binomial hierarchy of the Krylov moments in this sector.
\item \textbf{Headline: the $\chi$ surplus switches on with chaos}
(Sec.~\ref{sec:chaos}). On \eqref{eq:interp0}, the asymmetry gap
$R^2(\chi\to\C)-R^2(\C\to\chi)$ rises monotonically from $+0.33$ at
$\varepsilon=0.005$ to $+0.77$ at $\varepsilon\simeq0.12$ --- precisely
where the average $\langle r\rangle$ begins to show the GUE plateau ---
and saturates near $0.6$-$0.7$. Though the raw negativity gap decreases from $+0.22$ to zero. The gap for pure GUE data is small for the raw negativity ($+0.019$) and can be
eliminated by smoothing the target (Sec.~\ref{sec:smooth}) in appropriate regime. By smoothing here we mean--- averaging the data in relevant regime such that no sensitive physical information being averaged out. The raw negativity gap is noise-driven, while the robust $\chi$ gap of the chaotic regime is a feature of the
\emph{normalised} negativity. 
\end{enumerate}

Result 6 is the operational version of the proposal in \cite{Basu:2026inf}. In the integrable sector the Krylov first- and second-moment observables are informationally equivalent; as the dynamics becomes chaotic the tower decouples, and the second-moment information ---made visible by normalising with the survival amplitude, which anyhow  becomes inaccessible to the first moment ($C(t)$).

All results are obtained with very small networks, survive halving the training data, and are insensitive to siz eof the neurons in the network. Also several are protected by a dual-metric agreement rule and seed averaging that we adopted after explicitly tracing spurious
asymmetries to metric. We document these audits
(Secs.~\ref{sec:protocol}, \ref{sec:cftcurves}) as comparative  machine-learning claims in physics are only as strong as their protocol. The paper is organised as follows:
\begin{enumerate}
    \item Section~\ref{sec:setup} defines the observables, the datasets, and the fast numerical pipeline. Section~\ref{sec:ml} specifies the architecture, its robustness, and the evaluation protocol. 
    \item Section~\ref{sec:gue} presents the complete
    random-matrix programme --- the GUE baseline, the SFF lossiness and information budget, the GOE and Poisson replications, and universality-class
    discrimination. 
    \item Section~\ref{sec:cft} develops the integrable CFT sector ---theory, degeneracy analysis, curve reconstructions, and limitations. Section~\ref{sec:chaos} presents the integrable to chaos interpolation and the headline-result, 
    \item  Section~\ref{sec:mech} derives its analytical mechanism in theorem--proof form.  Section~\ref{sec:disc} distills where each observable is most useful.
\end{enumerate}

\section{Observables and the Learning methodology}
\label{sec:setup}

\subsection{Krylov chain and the four channels}
\label{sec:channels}

For a Hamiltonian $H$ and initial state $|K_0\rangle$, the Lanczos recursion
\begin{equation}
|A_{n+1}\rangle=(H-a_n)|K_n\rangle-b_n|K_{n-1}\rangle,\qquad
a_n=\langle K_n|H|K_n\rangle,\quad b_n=\|A_n\|,
\end{equation}
generates the Krylov basis in which $H$ is tridiagonal and the state evolves
as a wavepacket $\psi_n(t)$ on a semi-infinite (or, for finite systems,
$D$-site) chain. We take the initial state to be a 
thermofield double state (TFD)
\begin{equation}
|\psi_\beta\rangle=\frac1{\sqrt{Z(\beta)}}\sum_n e^{-\beta E_n/2}|n,n\rangle,
\end{equation}
for the random matrix datasets. We know that a TFD  can be thought as the canonical bulk dual of the eternal black hole \cite{Maldacena:2001kr}. Krylov problem depends only on the spectrum $\{E_n\}$ alone (eigenvectors never enter). Thus, changing the ensemble is a controlled substitution of
eigenvalue statistics. Numerically we rotate $|\psi_\beta\rangle$ to the first basis vector by a rank-one Householder reflection and tridiagonalise; the
resulting chain Hamiltonian is real symmetric tridiagonal and is diagonalised with a dedicated $O(D^2)$ solver.

Per sample we record four time series on a common grid of $n_t=100$ points in
the rescaled time $t=\tau/D$:

\begin{itemize}
\item \textbf{Spread complexity} ($\C(t)$), Eq.~\eqref{eq:spread}.
\item \textbf{Discrete Wigner negativity} $N(t)$. For odd prime chain length
$D$ the chain is a discrete phase space and the Wootters--Wigner function
\cite{Wootters:1987,Gibbons:2004hz,Gross:2006} of $|\psi(t)\rangle$ is
defined through the phase-point
operators $A_{x,p}$; the negativity is the integrated negative volume
$N=\frac1D\sum_{x,p}|W_{xp}|$ (normalised so that $N=1$ for the initial basis
state); Wigner negativity is a standard marker of nonclassicality and a
resource for quantum computation
\cite{Kenfack:2004zd,Veitch:2012nqp,Mari:2012ee,Howard:2014mag}, and its stabilizer-complexity cousin has recently been studied for Hawking radiation \cite{BasuPaul:2025stab}. We can compute the negativity throughout using an algebraically exact identity using a discrete Fourier transform
(Eq.~\eqref{eq:fftneg})that recasts the phase-point-operator sum . The identity holds to machine precision ($\sim10^{-13}$) at every dimension tested ($D=7$ to $D=1009$). Concretely, the Wigner negativity can
be written as:,
\begin{equation}
N(t)\;=\;\frac{1}{D}\sum_{x,k}\Big|\,D\,\big[\mathrm{ifft}_\ell\,
\rho\big]_{xk}\Big|,
\qquad
\rho_{x\ell}\;=\;\psi_{(2x-\ell-1)\ \mathrm{mod}\ D}(t)\,\bar\psi_{\ell}(t),
\label{eq:fftneg}
\end{equation}
which reduces the cost per time point from $O(D^3)$ to $O(D^2\log D)$ and eliminates the $O(D^3)$-sized phase array from memory. Together with the rank-one Householder and the tridiagonal eigensolver this accelerates the full
pipeline by two orders of magnitude ($\sim100$--$280\times$ per sample at $D\simeq200$), which is what makes the $\sim57{,}000$-sample programme of this
paper feasible on commodity hardware.
\item \textbf{Spectral form factor} ${\rm SFF}(t)=|Z(\tau)|^2/D^2$ with
$Z(\tau)=\sum_n e^{-iE_n\tau}$, computed from the sample's spectrum.
\item \textbf{Survival amplitude and normalised negativity}
$S(t)=\psi_0(t)$ and $\chi(t)=N(t)/|S(t)|$, Eq.~\eqref{eq:chi}. We note for
later use that for the TFD at inverse temperature $\beta$, $|S(t)|^2$
coincides with the (filtered) per-sample SFF; this makes some
$\chi$-involving comparisons trivially confounded, and we flag them explicitly wherever they occur.
\end{itemize}
Every dataset additionally stores, per sample, the mean consecutive-gap ratio $\langle r\rangle$ \cite{Oganesyan:2007,Atas:2013}; the generated spectra are required to reproduce the ensemble surmises (GUE $0.5996$, GOE $0.5307$, Poisson $0.3863$) before any learning is performed. This is the usual correctness certificate and can be seen in each of the diagnostic plots shown in this paper.

\subsection{The five datasets}
\label{sec:datasets}

\begin{table}[t]
\centering\small
\begin{tabular}{@{}lllll@{}}
\toprule
Setting & Swept parameters & Samples & Split & Time grid\\
\midrule
GUE     & $D\!\in\![101,701]$ (13 primes), $\beta\!\in\![0,5]$ (7) &
$13{,}650$ & random $50/50$ & $t\in[0,2]$\\
GOE     & same & $13{,}650$ & random $50/50$ & $t\in[0,2]$\\
Poisson & $D\!\in\![101,1009]$ (19 primes), same $\beta$ & $19{,}950$ &
random $50/50$ & $t\in[0,2]$\\
CFT (SL$(2,\mathbb R)$) & $h\!\in\![0.25,6]$ (231),
$\alpha\!\in\![0.6,1.8]$ (9) & $2{,}079$ & held-out $h$ & $t\in[0,1]$\\
CFT$\to$chaos & $h\!\in\![0.3,4]$, $\alpha\!\in\!\{0.8,1.2,1.6\}$,
$\varepsilon\!\in\![0,0.3]$ & $7{,}680$ & held-out $h$, per $\varepsilon$ &
$t\in[0,1]$\\
\bottomrule
\end{tabular}
\caption{All chain dimensions are odd primes, as required by the Wootters construction and the FFT identity~\eqref{eq:fftneg}. The three
spectral ensembles are width-matched (spectral standard deviation $\simeq1$), so no classifier can exploit a trivial energy scale. The interpolation set comprises $16$ balanced strata of $480$ samples each at chain dimension $D=251$. Poisson set and the classification runs extend to $D=1009$.}
\label{tab:data}
\end{table}

The random-matrix sets use $150$ independent Hamiltonian draws per $(D,\beta)$ block.  Poisson set draws independent levels from the matched semicircle density by rejection sampling. The CFT set is built directly from the exact discrete-series Lanczos coefficients (Sec.~\ref{sec:cfttheory}).
Each sample is verified against the closed forms for $\C(t)$ and $S(t)$ at the $10^{-8}$ level. A truncation guard discards any sample whose occupation of the last chain site exceeds $10^{-10}$ (none were discarded on the quoted grid at $n_{\max}=503$). The interpolation set is described in Sec.~\ref{sec:chaosmodel} with its  truncation protocol.

\subsection{Machine-learning methodology}
\label{sec:ml}

\subsubsection{Architecture and training}
\label{sec:arch}

All tasks share one backbone: a residual (skip-connection) multilayer
perceptron \cite{He:2015res}, Fig.~\ref{fig:net}. The input curve ($n_t=100$ values;
concatenated for two-channel inputs) is standardised per feature, projected to
width $W$, passed through three residual blocks
(Dense--BatchNorm--ReLU--Dropout--Dense--BatchNorm, added to the block input
through an identity skip, then ReLU), and mapped linearly to the output ---
one number for scalar regression, $n_t$ numbers for curve reconstruction
(standardised targets, inverted at evaluation). We use $W=16$ for the
random-matrix scans and $W=32$ for the CFT and interpolation studies; dropout
$0.2$--$0.3$; $L^2$ weight decay $10^{-4}$; Adam with initial rate $10^{-3}$;
early stopping on validation loss with best-weight restoration (patience
$30$--$40$) and learning-rate halving on plateau. Wide-dynamic-range channels
are learned logarithmically: $\log_{10}{\rm SFF}$, $\log_{10}\chi$, and ---
in the CFT settings, where $\C$ spans a $\sim500\times$ range over the
parameter grid --- $\log_{10}(\C+10^{-2})$. These are monotone reparameterization, not smoothing. For scalar targets, a histogram gradient-boosted regressor is trained alongside the network. The model (network, trees, or their mean) is selected on a validation slice of the training data. Classification tasks use the gradient-boosted classifier.

\begin{figure}[t]
\centering
\begin{tikzpicture}[>=Stealth, node distance=7mm,
  box/.style={draw, rounded corners, minimum width=25mm, minimum height=8mm,
              font=\small, align=center}]
  \node[box] (in) {input curve\\($n_t$ values)};
  \node[box, right=of in] (proj) {Dense $W$\\ReLU + BN};
  \node[box, right=of proj] (rb) {Residual\\block $\times 3$};
  \node[box, right=of rb] (out) {Dense\\$\to$ output};
  \draw[->] (in)--(proj); \draw[->] (proj)--(rb); \draw[->] (rb)--(out);
\end{tikzpicture}
\hspace{8mm}
\begin{tikzpicture}[>=Stealth, node distance=3.6mm,
  op/.style={draw, rounded corners, minimum width=32mm, minimum height=5mm,
             font=\footnotesize, align=center}]
  \node[op] (x) {$x$ (width $W$)};
  \node[op, below=of x] (d1) {Dense $W$ + ReLU};
  \node[op, below=of d1] (bn1){BatchNorm};
  \node[op, below=of bn1](dr) {Dropout};
  \node[op, below=of dr] (d2) {Dense $W$};
  \node[op, below=of d2] (bn2){BatchNorm};
  \node[circle, draw, below=of bn2, inner sep=1pt] (add) {$+$};
  \node[op, below=of add] (re) {ReLU $\to$ out};
  \draw[->](x)--(d1);\draw[->](d1)--(bn1);\draw[->](bn1)--(dr);
  \draw[->](dr)--(d2);\draw[->](d2)--(bn2);\draw[->](bn2)--(add);\draw[->](add)--(re);
  \draw[->] (x.east) -- ++(2.0,0) |- (add.east)
        node[pos=0.22, right, font=\footnotesize]{skip};
\end{tikzpicture}
\caption{\textbf{Left:} the residual MLP used for every task in this paper,
with $W=16$ (random-matrix scans) or $W=32$ (CFT and interpolation).
\textbf{Right:} One residual block. The identity skip lets each block learn a correction to its input, which stabilises training at small width/neurons.}
\label{fig:net}
\end{figure}

\subsubsection{Smallness and robustness}
\label{sec:robust}

One of the methodological features chosen for this paper is that of model
minimality. We state its implications as results in themselves:
\begin{itemize}
\item \emph{Sixteen neurons suffice.} Every random-matrix number quoted in
Sec.~\ref{sec:gue} was obtained at $W=16$.
\item \emph{Half the data suffices.} The same numbers were obtained with the
test fraction raised to $50\%$, i.e.\ training on only half of each dataset;
they are unchanged, at the quoted precision, relative to a standard $70/30$
split.
\item \emph{Width does not matter --- until it hurts.} Rerunning
representative tasks at $W=128$ changes no headline score; in one
small-stratum curve task a $W=128$ model was observed to \emph{destabilise},
inflating an asymmetry to $+0.17$ in a sector where the analytics of
Sec.~\ref{sec:cfttheory} demand zero. Small width is therefore not a
compromise but a safeguard.
\item \emph{No overfitting regime.} Training and validation losses are logged
for every fit and remain close throughout; representative pairs are visible in
the training panels of the scalar-task figures below.
\end{itemize}
The conclusion we draw, and will lean on when interpreting comparisons: the
scores below reflect the information structure of the observables, not the
capacity of the models.

\subsubsection{Evaluation protocol}
\label{sec:protocol}

Several conclusions of this paper are \emph{comparisons} between learning
tasks, and comparisons inherit every bias of their protocol. The following
rules were adopted after explicit audits.  We traced spurious asymmetries to identifiable artifacts. We record the artifacts alongside the rules because they are in our experience, generic to ML-for-physics
comparisons.

\begin{enumerate}
\item \textbf{Honest splits.} Ensemble data (random draws) use random splits. Parameter grid data use \emph{held-out parameter
values}, a predetermined collection of $h$ points that is left out of the training set altogether. We do this --- because
otherwise our test would measure the generalisation to unseen theories rather than the interpolation between adjacent points. In the interpolation
study, the same held-out collection is used at each $\varepsilon$.
\item \textbf{Dual metrics with an agreement flag.} Both the variance-weighted coefficient of determination $\rvw$ and the
mean per-sample relative $L^2$ error are reported for all the curve tasks. Both quantities are quoted for every gap. For the raw CFT channels,
a single-metric evaluation yielded an apparent $\C\!\leftrightarrow\!N$ gap of $+0.28$ .
\item \textbf{Seed averaging.} All the gaps in the interpolation study are averaged over different training seeds and reported
with the scatter as an error bar.
\item \textbf{No test leakage.} The network/tree/blend model selection is based solely on a validation slice of the training
data --- the test set is used only once.
\item \textbf{No selection on physical outcomes.} Any sample retention criteria must not correlate with the labels. \emph{Audit:}
an edge occupation check, appropriate for the truncated infinite chains, was found to discard preferentially the most chaotic part of the interpolation
dataset --- a biased selection correlated with $\varepsilon$ --- and was switched to the procedure described in Sec.~\ref{sec:chaosmodel}.
\end{enumerate}

\section{Random-matrix ensembles: GUE, GOE, and Poisson}
\label{sec:gue}

\subsection{GUE: dataset and diagnostics}

Figure~\ref{fig:diagGUE} shows the GUE dataset at a glance: the
ramp--peak--slope--plateau of $\C(t)/D$ and its ordering in $\beta$; the
smooth monotone growth and saturation of $N(t)/\sqrt D$; the SFF
dip--ramp--plateau; the collapse of $\C/D$ across all thirteen dimensions
after time rescaling; the mean gap ratio pinned to the GUE surmise $0.5996$
for every one of the $91$ $(D,\beta)$ blocks; and the uniform sample counts.
Panel (e) is the nontrivial certificate: any error in ensemble generation,
seeding, or merging would displace it.

\begin{figure}[t]
\centering
\includegraphics[width=\textwidth]{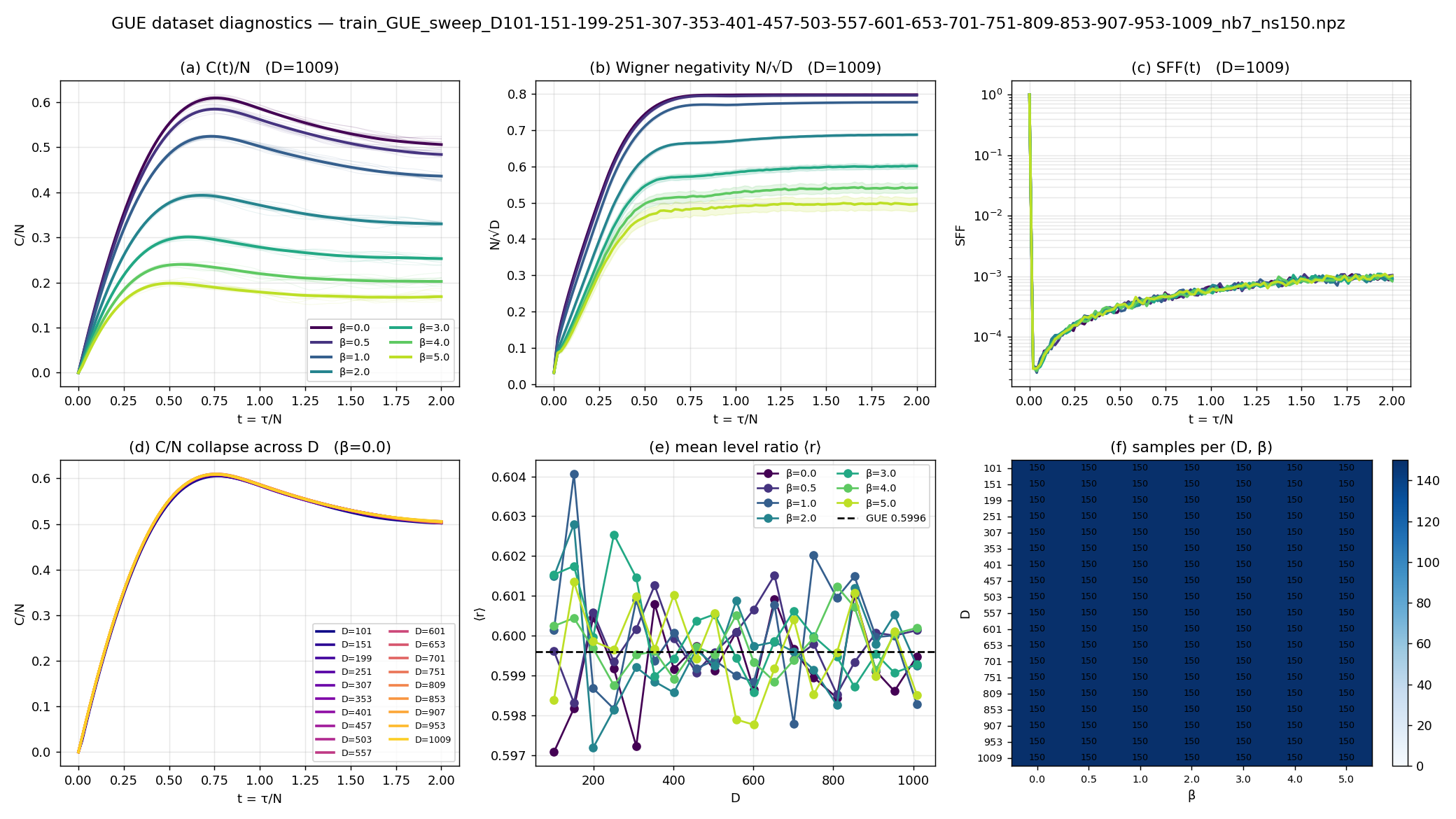}
\caption{GUE dataset diagnostics ($13{,}650$ samples; $D=101$--$701$, seven
temperatures). (a) $\C/D$ per $\beta$ at the largest $D$; (b) $N/\sqrt D$;
(c) SFF; (d) collapse of $\C/D$ across $D$; (e) $\langle r\rangle$ on the GUE
line $0.5996$ for all $(D,\beta)$; (f) samples per block.}
\label{fig:diagGUE}
\end{figure}

\subsection*{Prediction on temperature}
\label{sec:guebeta}

The first, calibrating question: does either moment know the temperature of the TFD state it evolved from? The answer is essentially exact.
$\C(t)\to\beta$ attains $R^2=0.999$ (RMSE $0.059$ over $\beta\in[0,5]$) and
$N(t)\to\beta$ attains $0.998$ ($0.077$). The scatter plots (Fig.~\ref{fig:guebeta}) show the seven temperatures resolved as tight
vertical clusters. The training panels show train and validation losses descending together with the number of epochs . We recall the conditions--- width-$16$ networks trained on half the data. Either curve is a near-complete record of $\beta$. Thus the
temperature dependence visible by eye in Fig.~\ref{fig:diagGUE}(a,b) is fully
learnable.

\begin{figure}[t]
\centering
\includegraphics[width=0.9\textwidth]{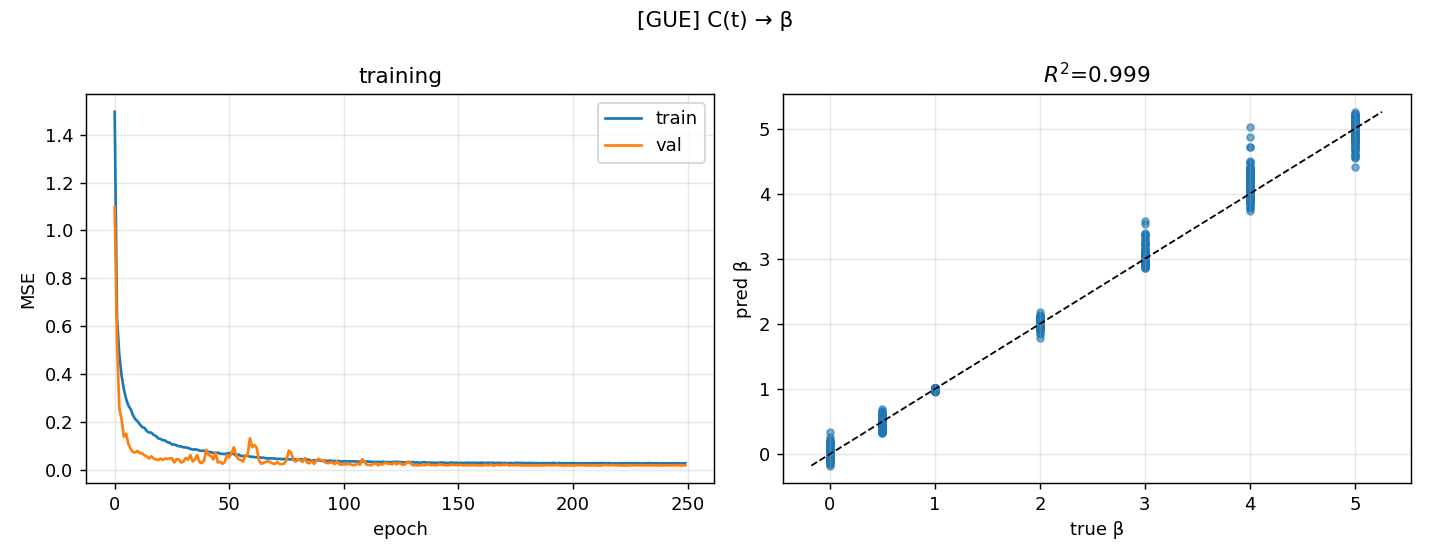}\\[1mm]
\includegraphics[width=0.9\textwidth]{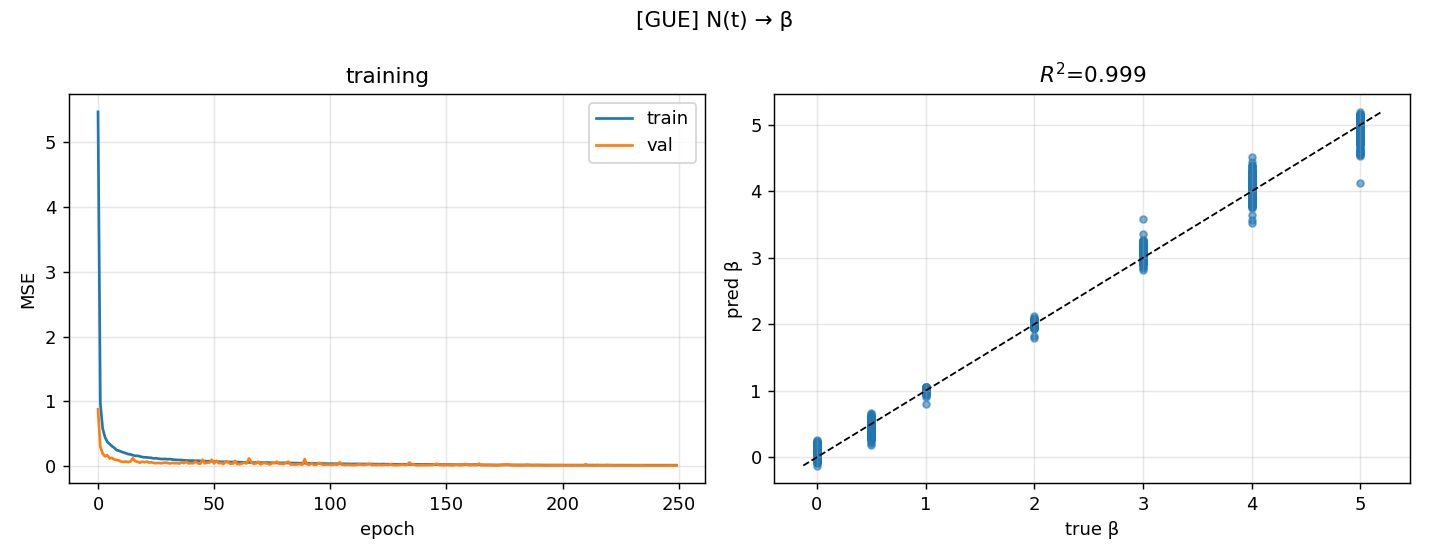}
\caption{GUE temperature regression. \textbf{Top:} $\C(t)\to\beta$,
$R^2=0.999$. \textbf{Bottom:} $N(t)\to\beta$, $R^2=0.998$. Left panels:
train/validation losses (no overfitting); right panels: predicted vs true
$\beta$ on held-out samples.}
\label{fig:guebeta}
\end{figure}

\subsection*{Mutual reconstruction of the two moments}
\label{sec:guemutual}

Is there a possibility to reconstruct each moment from the other? To very good precision, yes: $N\to\C$
produces $\rvw=0.994$ with an average relative error of $0.035$, while $\C\to N$ produces $0.973$ and a relative error of $0.024$ (Fig.~\ref{fig:gueNC}).
Both reconstructions recover the ramp, the peak and the plateau of the target from the test samples.

Two notes, both of which get more attention in the upcoming
sections. Firstly, despite the asymmetry between $N\to\C$ and $\C\to N$ in the value of $\rvw$ ($+0.019$), using relative errors in this comparison makes
the former to perform worse than the latter. In the framework of our protocol, it means that we have no evidence for this GUE asymmetry ($\C \leftrightarrow N$) and do not consider it. 
Secondly, Sec.~\ref{sec:smooth} demonstrates that this $+0.019$ goes away after target smoothing and it is contained within the high-frequency part of the
raw negativity, i.e.\ it is noise. On the other hand, the second-moment surplus, which is real and observable, occurs only for the normalised negativity and only in the chaos region of the interpolation (Sec.~\ref{sec:chaos}).

\begin{figure}[t]
\centering
\includegraphics[width=\textwidth]{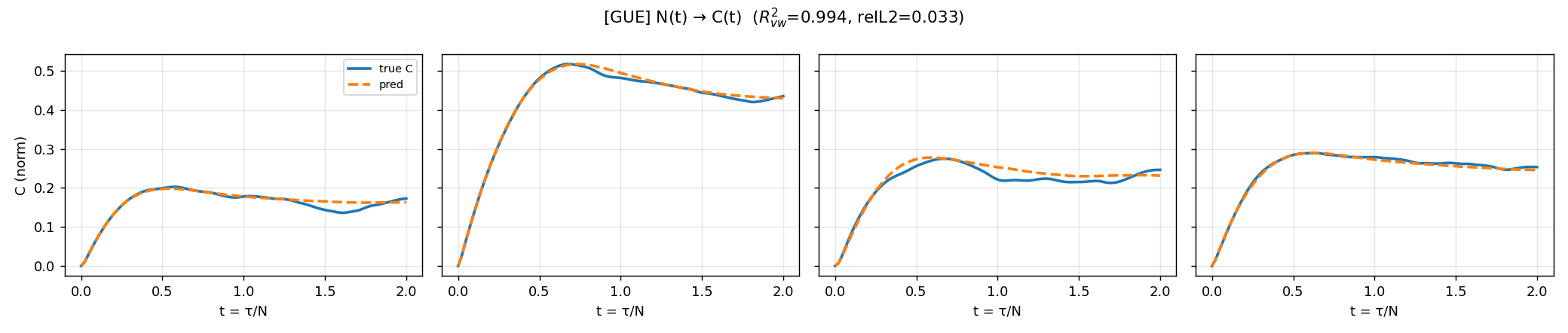}\\[1mm]
\includegraphics[width=\textwidth]{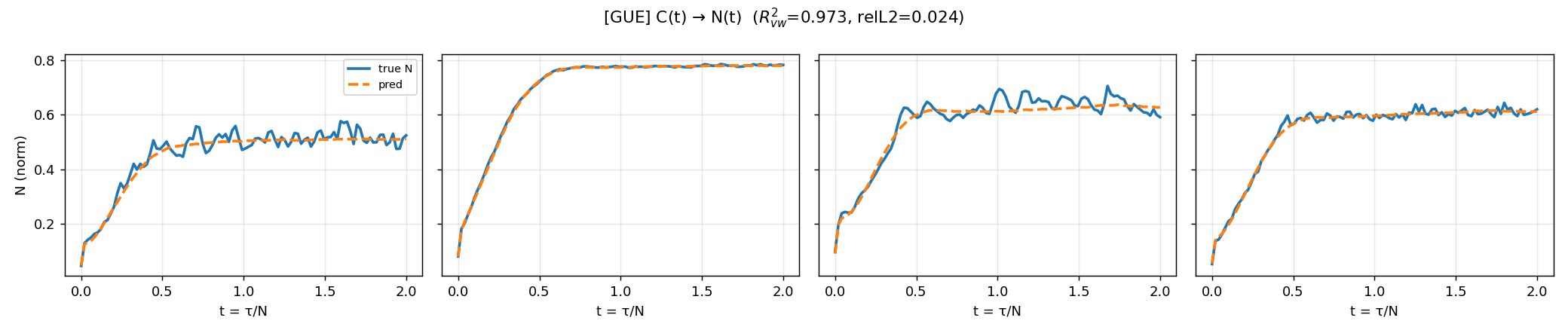}
\caption{GUE curve reconstructions on held-out samples (true solid, prediction
dashed). \textbf{Top:} $N\to\C$, $\rvw=0.994$, $\relL=0.035$.
\textbf{Bottom:} $\C\to N$, $\rvw=0.973$, $\relL=0.024$. The two metrics
order the directions oppositely; per our protocol no surplus is claimed from
this data.}
\label{fig:gueNC}
\end{figure}

\subsection{What survives of the spectral form factor}
\label{sec:sff}

\subsubsection{The fine SFF is not reconstructible}
\label{sec:sfffail}

The SFF of \emph{one single} spectrum is a highly oscillatory object. Its ramp is a superposition of cosines of all energy differences, and its fluctuations contain the information about the rigidity of the spectrum in that particular sample. The reconstruction problem $\C(t)\to\log_{10}{\rm SFF}(t)$ fails in a quite instructive way (Fig.~\ref{fig:guesff}). The predicted signal locks on the initial slope, the dip position, and the smooth mean of the ramp and plateau, i.e.\ on the ensemble-averaged envelope conditioned on $(D,\beta)$. 
It is essentially constant across all sample-to-sample variations. Numerically, $\rvw=0.185$ with $\relL=0.175$ for GUE. $N\to{\rm SFF}$ performs
similarly: $0.188$, $0.174$. The failure is similar for GOE ($0.177$) and Poissonian ($0.184$; Sec.~\ref{sec:goepoisson}) statistics. The reason is straightforward. The functionals $\C(t)$ and $N(t)$ depend smoothly on low-order moments of the Krylov occupation $|\psi_n(t)|^2$. The sample-to-sample variability of the latter is small and slow. Although the fluctuations of the former that we seek to capture are $O(1)$ in the logarithm and oscillatory at the inverse mean-level-spacing time. Thus, any transformation from the Krylov occupations to one or two smooth functionals is precisely a low-pass filtering of the spectral information. The model
learned the only thing the data could possibly contain: the conditional mean. \emph{In that precise sense the SFF refines the information content of
moments}. So we conclude the pair $(\C,N)$ is very far from a sufficient statistic for spectral rigidity.

\begin{figure}[t]
\centering
\includegraphics[width=\textwidth]{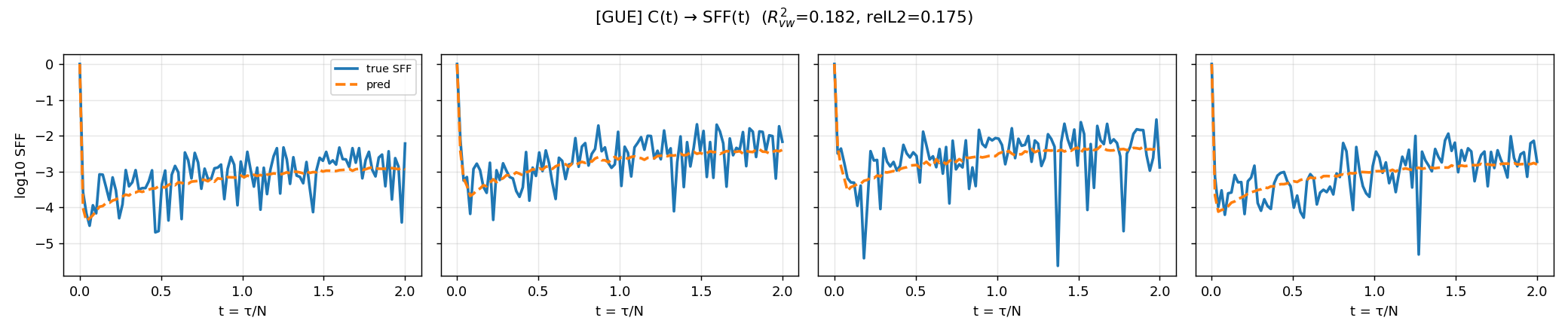}\\[1mm]
\includegraphics[width=\textwidth]{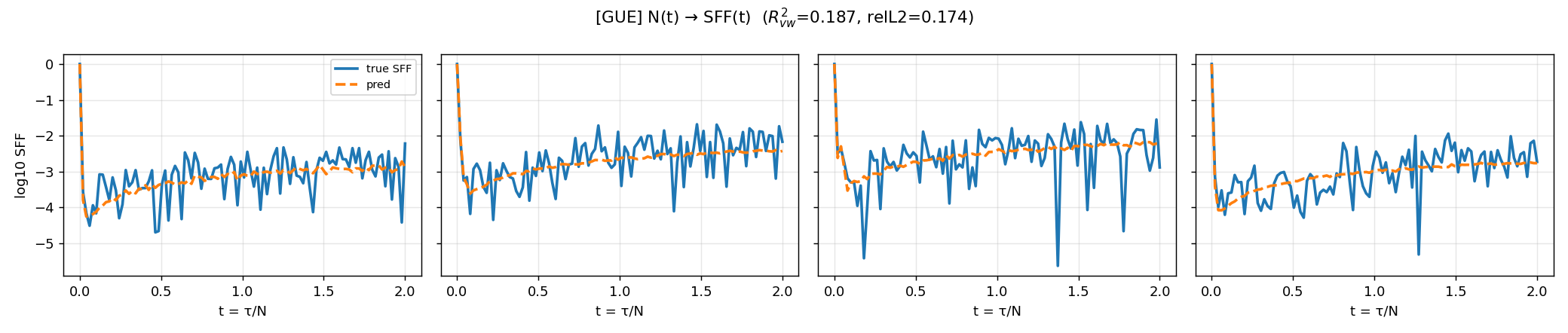}
\caption{The lossiness result on GUE. \textbf{Top:} $\C\to{\rm SFF}$,
$\rvw=0.185$. \textbf{Bottom:} $N\to{\rm SFF}$, $\rvw=0.188$. In every
held-out sample the prediction (dashed) reproduces the dip and the smooth
envelope and none of the sample-specific ramp fluctuations.}
\label{fig:guesff}
\end{figure}

\subsubsection{Smoothing the target does not rescue the task}
\label{sec:sffsmooth}

It may be argued that the fluctuations are an irreducible sample-noise term and that the true target is the locally averaged SFF. To test for this explicitly, the target $\log_{10}{\rm SFF}$ was replaced by the centred moving average of this quantity over windows $\Delta t\simeq 0.26$ ($t=\tau/D$ units), while keeping the input unchanged and retraining the tasks per window. The gain is small and gradual – none of the physically admissible windows can restore the performance to match that of the $\C\!\leftrightarrow\!N$ tasks.
The admissible window is limited by the physics, since the dip is at $t \lesssim 0.1$, and the structured portion of the ramp has a width of $\Delta t\sim0.1$--$0.5$. For windows larger than $\Delta t \simeq 0.25$, one would be averaging the dip-ramp structure itself, and a good score on such a target would be trivial --- one would be predicting a curve with no structure at all. The lossiness discussed in Sec.~\ref{sec:sfffail} is therefore physical, since the random portion of the SFF survives at all scales which retain the structure of the SFF.

\subsubsection{The information budget in time: where the $e^{S}$ plateau lives}
\label{sec:infobudget}

The coarse late-time plateau of the SFF --- the $\sim e^{S}$ scale, dual to
the long-time volume plateau of the black-hole interior
\cite{Iliesiu:2021ari} --- is a different story. Predicting the (log) plateau
height from $\C(t)$ succeeds at $R^2=0.861$; and sweeping how much of the
complexity curve the model may see (Fig.~\ref{fig:gueib}) shows $R^2=0.854$
already from the first $20\%$ of the curve, a flat intermediate region, and a
small final increment once $\C(t)$ itself saturates. The non-perturbative
plateau scale is thus largely \emph{latent in the short-time complexity
dynamics}, with only a modest genuinely-late-time top-up --- a quantitative
``information budget in time'' for the $e^{S}$ scale. The same experiment on
GOE gives $\approx0.866$ (full) versus $\approx0.859$ (early), and on Poisson
$\approx0.836$ versus $\approx0.83$ (Sec.~\ref{sec:goepoisson}): the
budget's shape is ensemble-universal; the ceilings of the two rigid
ensembles coincide within our precision and lie above the Poisson one.

\begin{figure}[t]
\centering
\includegraphics[width=0.72\textwidth]{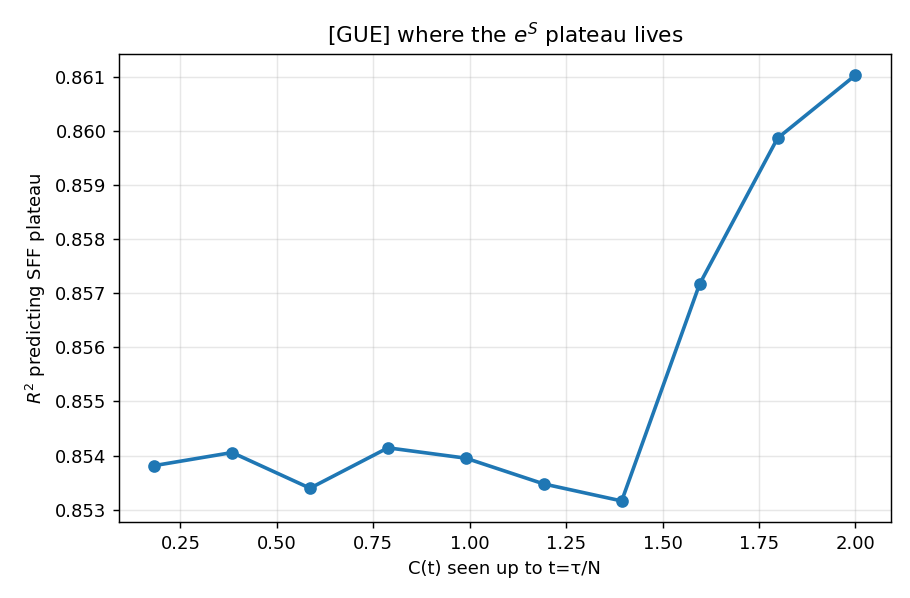}
\caption{GUE information budget: $R^2$ for predicting the late-time SFF
plateau versus the time up to which the model may see $\C(t)$. The first
fifth of the curve already delivers $0.854$ of the full-curve $0.861$; note
the deliberately zoomed vertical scale.}
\label{fig:gueib}
\end{figure}

\subsubsection{Routing through the chain: the Lanczos ceiling}
\label{sec:lanczos}

The observables of this paper are functionals of the Krylov chain data: the
tridiagonal pair $\{a_n,b_n\}$ determines the chain Hamiltonian exactly,
hence the spectrum, hence --- together with the initial vector --- every
channel we study. The chain data is therefore a \emph{lossless} encoding:
$R^2_\star(\{a_n,b_n\}\!\to\!\cdot)=1$ in principle for every target. This
suggests routing the information flow through the chain explicitly, with
four reconstruction tasks at fixed $D$ (the dimension of $\{a_n,b_n\}$
varies with $D$, so a single-dimension block is used) plus one ceiling task:
\begin{itemize}
\item $\C\to\{a_n,b_n\}$ and $N\to\{a_n,b_n\}$: do the smooth moments
determine the microscopic chain couplings that generated them?
\item $\{a_n,b_n\}\to\C$ and $\{a_n,b_n\}\to N$: the forward direction ---
in-principle exact --- tests how much of a lossless encoding a small network
can actually decode.
\item $\{a_n,b_n\}\to\log{\rm SFF}$: the \emph{ceiling}. Since the chain
data determines the spectrum exactly, this task separates the two possible
readings of the lossiness result of Sec.~\ref{sec:sfffail}: a low score here
would mean that part of the $\C\to{\rm SFF}$ failure is a decoding
limitation of the model class; a high score confirms that the failure is
information genuinely absent from the moments, since the same network
class recovers the SFF once the lossless encoding is supplied.
\end{itemize}
Figure~\ref{fig:lanczos} presents the five tasks on a single-dimension GUE
block ($D=1009$, all temperatures, $1{,}050$ samples) under the standard
protocol; the companion script recomputes $\{a_n,b_n\}$ per sample from the
stored seeds and certifies the reconstruction against the stored curves
(agreement at the $10^{-13}$ level) before any training.

The outcome is decisive on the ceiling question. The forward decodings
succeed --- $\{a_n,b_n\}\to\C$ at $\rvw=0.795$ and $\{a_n,b_n\}\to N$ at
$0.761$ --- confirming that a small network can read a large fraction of a
lossless encoding. The inverse routings $\C\to\{a_n,b_n\}$ and
$N\to\{a_n,b_n\}$ sit at $\rvw\simeq0.50$ and $0.46$: the smooth moments pin
down the coupling profile in the mean (the $b_n$ ramp is recovered almost
perfectly; Fig.~\ref{fig:lanczos}a,b) but not its sample-specific
fluctuations, precisely the behaviour Sec.~\ref{sec:mechdicho} predicts for
a self-averaging observable. This reading is reinforced by the
$D$-dependence: repeating the block at $D=251$ raises the inverse scores to
$\rvw\simeq0.61$ and $0.60$, so $\C,N\to\{a_n,b_n\}$ \emph{degrades} with
dimension while the forward $\{a_n,b_n\}\to\C,N$ stays high --- exactly the
signature of a self-averaging encoder whose unresolved fluctuation content
grows as a share of the microscopic data as $D$ increases. The ceiling task
is the sharp one:
$\{a_n,b_n\}\to\log_{10}{\rm SFF}$ returns $\rvw\simeq0$ (measured
$-0.025$) --- \emph{no better than predicting the mean}, even though the
chain data determines the spectrum, hence the SFF, exactly. The same network
class that decodes $\C$ and $N$ from $\{a_n,b_n\}$ at $\rvw\simeq0.75$--$0.80$
cannot
decode the fine SFF at all. This promotes the lossiness result of
Sec.~\ref{sec:sfffail} from a statement about the moments to a statement
about learnability itself: the fine SFF is not a decodable function of the
Krylov data at this model depth, so the failure of $\C\to{\rm SFF}$
($\rvw=0.185$) is information-theoretic in character, not a shortfall of the
moment channel relative to the lossless one. Spectral rigidity is encoded in
$\{a_n,b_n\}$ but only through a map too intricate for the class of decoders
that suffices for every other task in this paper --- itself a quantitative
statement about where the SFF sits on the complexity ladder.

\begin{figure}[t]
\centering
\includegraphics[width=\textwidth]{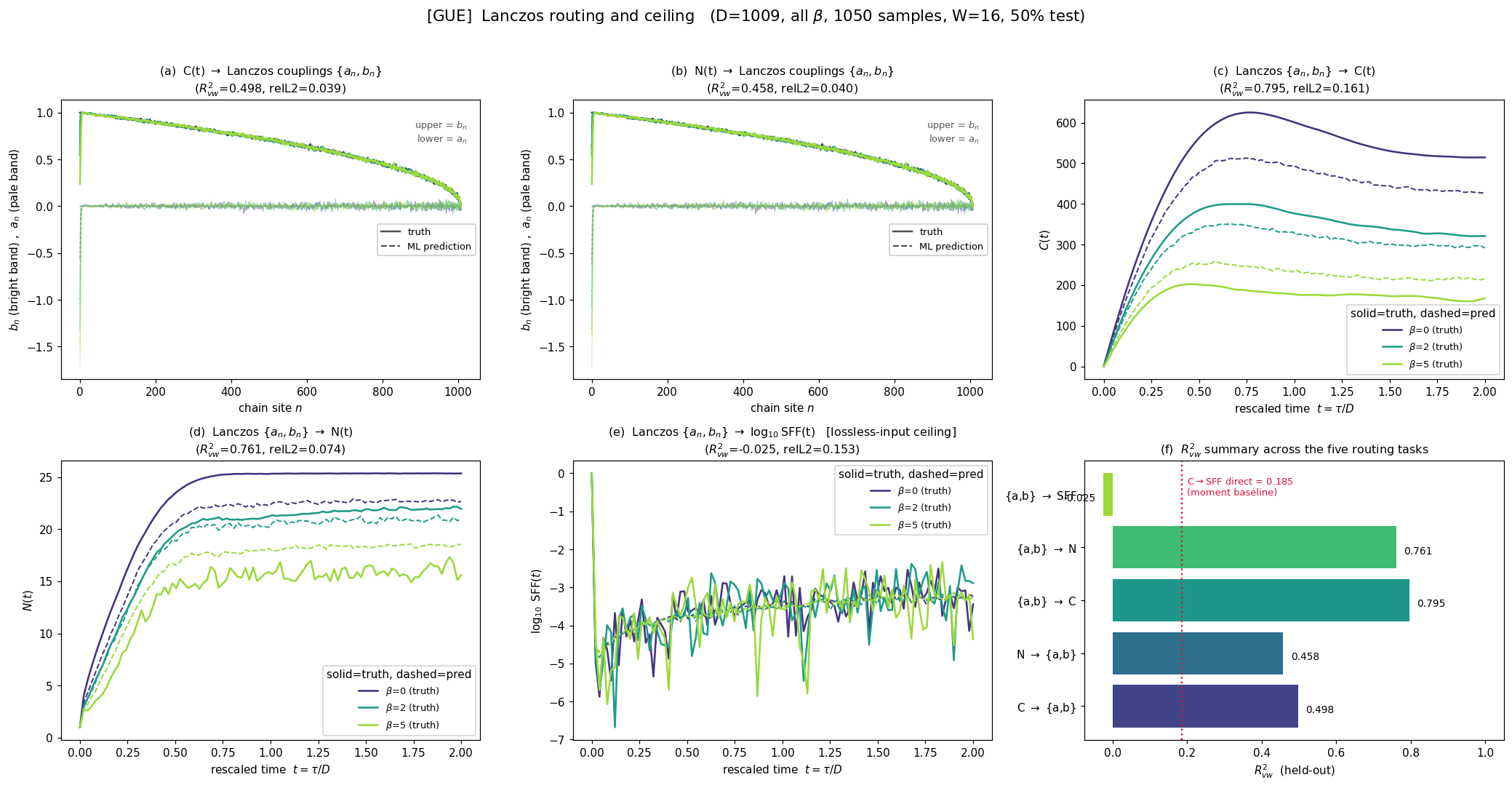}
\caption{\textbf{The Lanczos ceiling on GUE} ($D=1009$, all temperatures,
$1{,}050$ samples, $W=16$, $50\%$ test; solid = truth, dashed = prediction;
one colour per temperature $\beta\in\{0,2,5\}$). (a,b) Inverse routing:
$\C(t)$ and $N(t)$ regressed onto the chain couplings $\{a_n,b_n\}$ --- the
bright band is the $b_n$ ramp, the pale band the $a_n$ profile
($\rvw=0.498,0.458$; the mean ramp is captured, the fluctuations are not).
(c,d) Forward routing $\{a_n,b_n\}\to\C(t)$ and $\to N(t)$ ($\rvw=0.795,
0.761$): the lossless encoding is largely decodable. (e) The ceiling task
$\{a_n,b_n\}\to\log_{10}{\rm SFF}(t)$: despite the chain data determining
the spectrum exactly, the network scores $\rvw\simeq0$ (measured $-0.025$)
--- the fine SFF is not decodable at this depth. (f) Score summary; the
crimson line marks the direct $\C\to{\rm SFF}$ baseline ($0.185$). Panel (e)
is the sharp form of the lossiness result of Sec.~\ref{sec:sfffail}.}
\label{fig:lanczos}
\end{figure}

\subsection{Ensemble universality: GOE and Poisson}
\label{sec:goepoisson}

The pipeline of Sec.~\ref{sec:setup} depends on the spectrum alone
\cite{MehtaBook,Guhr:1997ve}. Changing the ensemble is therefore a one-line,
fully controlled substitution, and every
GUE task above was repeated verbatim on two
further ensembles: GOE (real symmetric matrices; level repulsion
$\beta_{\rm Dyson}=1$) and Poisson (independent levels drawn from the matched
semicircle density; no repulsion). Table~\ref{tab:ens} collects the results.

\begin{table}[t]
\centering
\begin{tabular}{@{}lccc@{}}
\toprule
Task & GUE & GOE & Poisson\\
\midrule
$\langle r\rangle$ certificate & $0.5996$ & $0.5307$ & $0.3863$\\
$\C\to\beta$ & $0.999$ & $0.998$ & $0.983$\\
$N\to\beta$ & $0.998$ & $0.998$ & $0.997$\\
$N\to\C$ \hfill ($\rvw$;\,$\relL$) & $0.994$;\,$0.035$ & $0.992$;\,$0.038$ &
$0.982$;\,$0.050$\\
$\C\to N$ & $0.973$;\,$0.024$ & $0.972$;\,$0.024$ & $0.968$;\,$0.025$\\
$\C\to{\rm SFF}$ & $0.185$;\,$0.175$ & $0.177$;\,$0.181$ & $0.184$;\,$0.189$\\
SFF plateau from $\C$ (full / early $20\%$) & $0.861/0.854$ &
$\approx0.866/0.859$ & $\approx0.836/0.83$\\
\bottomrule
\end{tabular}
\caption{Cross-ensemble replication (width-16 networks; training on $50\%$
of each dataset; the Poisson column uses the extended $D\le1009$ set). Every qualitative statement of
the GUE analysis survives the change of ensemble; the
quantitative shifts are physical and discussed in the text.}
\label{tab:ens}
\end{table}

\subsubsection*{GOE: a second chaotic class}
\label{sec:goe}

Figure~\ref{fig:diagGOE} shows the GOE diagnostics with $\langle r\rangle$
pinned to the GOE surmise $0.5307$ in every $(D,\beta)$ block. The results
replicate GUE nearly number for number: temperature at $0.998$
(Fig.~\ref{fig:goeA}), mutual reconstruction at $0.992/0.972$
(Fig.~\ref{fig:goeNC}), SFF lossiness at $0.177$ (Fig.~\ref{fig:goeSFF}),
plateau budget $\approx0.866$ full versus $\approx0.859$ early
(Appendix~\ref{app:figs}). The lesson is universality, nothing in the GUE analysis is a GUE accident. It is the phenomenology
of a rigid chaotic spectrum.

\begin{figure}[t]
\centering
\includegraphics[width=\textwidth]{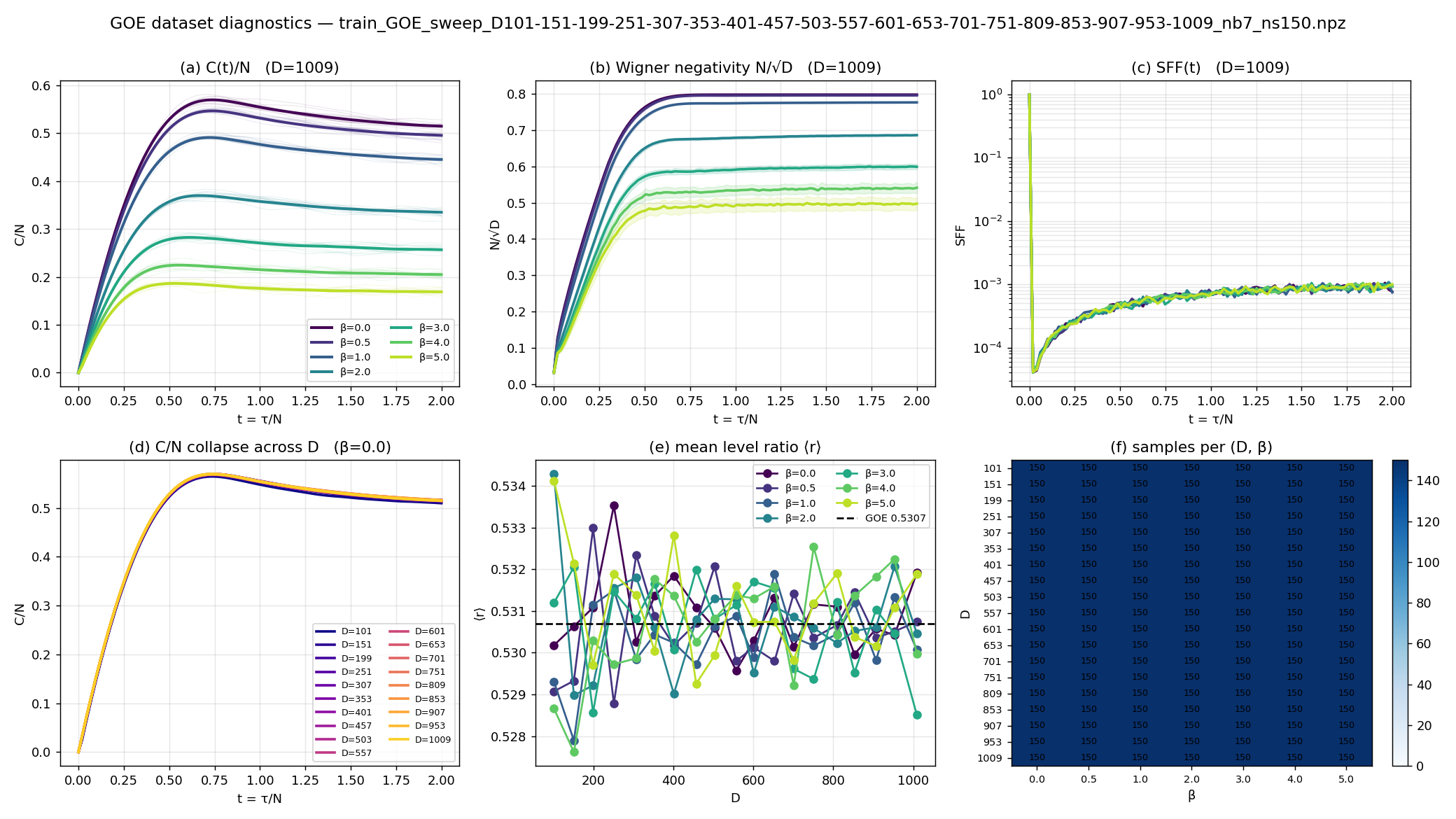}
\caption{GOE dataset diagnostics ($13{,}650$ samples). The
$\langle r\rangle$ panel sits on the GOE line $0.5307$ for all $(D,\beta)$
blocks --- the ensemble certificate.}
\label{fig:diagGOE}
\end{figure}

\begin{figure}[t]
\centering
\includegraphics[width=0.9\textwidth]{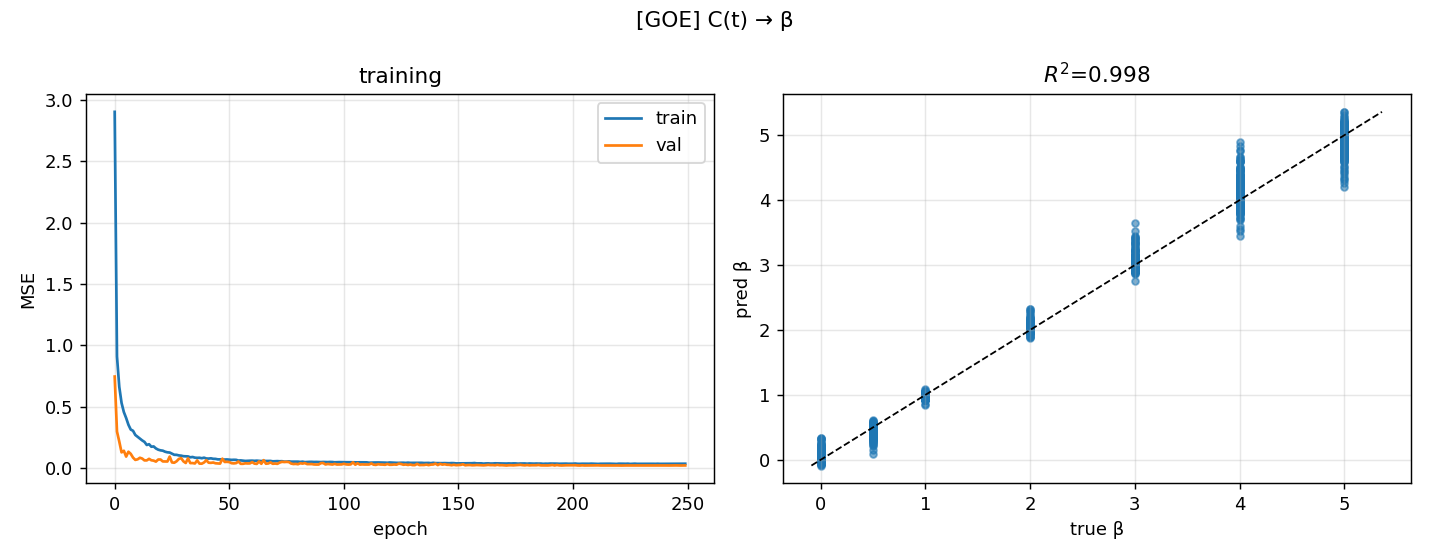}
\caption{GOE temperature regression $\C(t)\to\beta$, $R^2=0.998$.}
\label{fig:goeA}
\end{figure}

\begin{figure}[t]
\centering
\includegraphics[width=\textwidth]{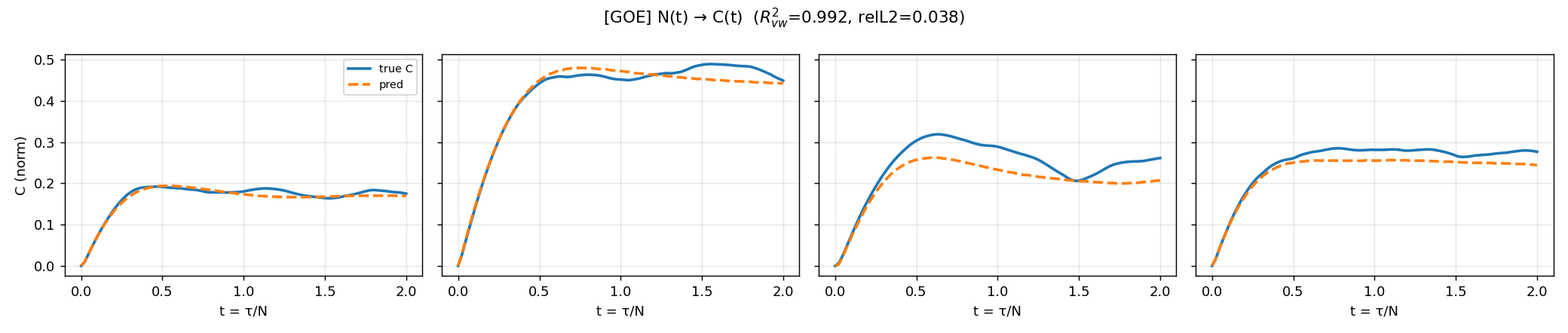}\\[1mm]
\includegraphics[width=\textwidth]{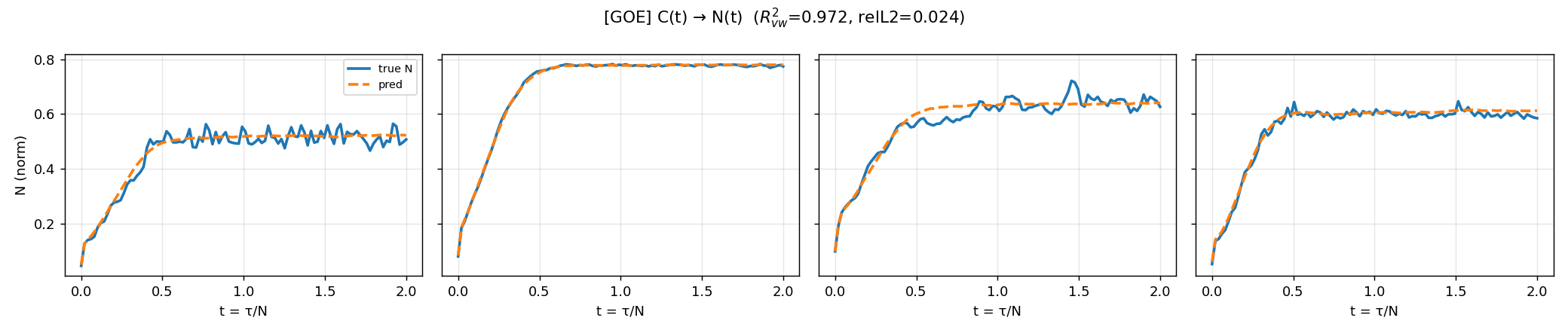}
\caption{GOE mutual reconstruction. \textbf{Top:} $N\to\C$, $\rvw=0.992$,
$\relL=0.038$. \textbf{Bottom:} $\C\to N$, $\rvw=0.972$, $\relL=0.024$. As on
GUE, the two metrics order the directions oppositely; no surplus is claimed.}
\label{fig:goeNC}
\end{figure}

\begin{figure}[t]
\centering
\includegraphics[width=\textwidth]{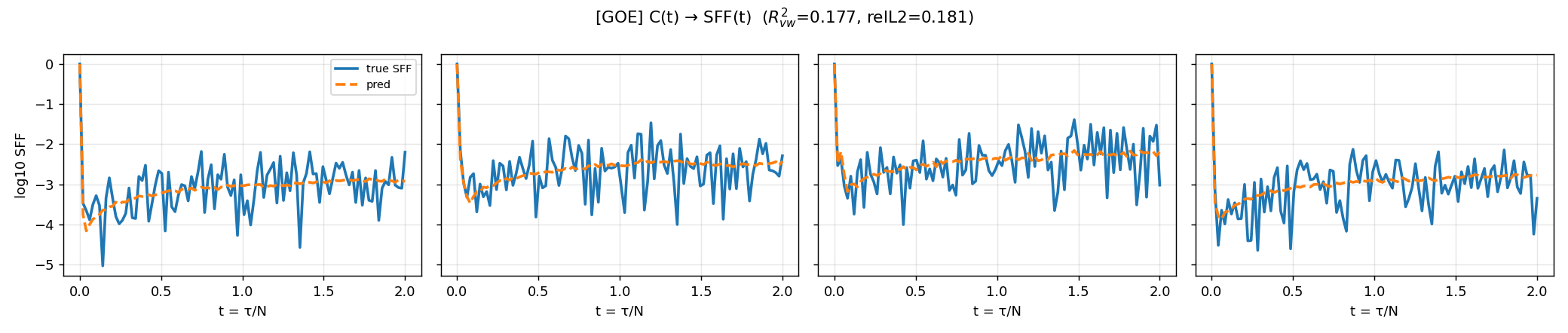}
\caption{GOE lossiness: $\C\to{\rm SFF}$, $\rvw=0.177$ --- the smooth
envelope, none of the rigidity fluctuations.}
\label{fig:goeSFF}
\end{figure}

\subsubsection*{Poisson: the integrable spectral contrast}
\label{sec:poisson}

The Poisson set (Fig.~\ref{fig:diagPoisson}) replaces level repulsion with
independence while keeping the mean density fixed; it is also our largest
set, extended to nineteen dimensions $D\le1009$ ($19{,}950$ samples). Three
instructive facts, all visible in Table~\ref{tab:ens}:

\begin{itemize}
\item $\C\to\beta$ reaches $0.983$ and $N\to\beta$ $0.997$
(Fig.~\ref{fig:poiA}). The small residual deficit relative to the chaotic
ensembles is a fluctuation effect, not lost encoding: uncorrelated levels
fluctuate more than rigid ones at fixed $D$, and extending the grid from
$D\le701$ to $D\le1009$ raises the $\C$ score from $0.979$ to $0.983$ ---
the deficit shrinks with dimension exactly as fluctuation averaging
predicts. The scatter plot shows correct centres with slightly wider
clusters.
\item $\C\to{\rm SFF}$ sits at $0.184$ ($\relL=0.189$;
Fig.~\ref{fig:poiSFF}), statistically indistinguishable from GUE ($0.185$)
and GOE ($0.177$). The \emph{share} of the SFF carried by the smooth
envelope is ensemble-universal; what differs across ensembles is only the
character of the unpredictable remainder --- rigidity fluctuations on the
chaotic side, clustering fluctuations of independent levels here. Neither is
recoverable from the moments.
\item The plateau-budget ceiling, $\approx0.836$ full versus $\approx0.83$
early, remains the lowest of the three: with weaker spectral correlations,
less of the late-time plateau is determined by the smooth curve, and the
ordering GUE $\approx$ GOE $>$ Poisson of the ceilings tracks spectral
rigidity.
\end{itemize}
The mutual $\C\leftrightarrow N$ reconstruction is accurate and tightens on
the larger set (Fig.~\ref{fig:poiNC}; $0.982/0.968$), consistent with the
moment-slaving expectation for weakly correlated spectra; the remaining
panels are collected in Appendix~\ref{app:figs}.

\begin{figure}[t]
\centering
\includegraphics[width=\textwidth]{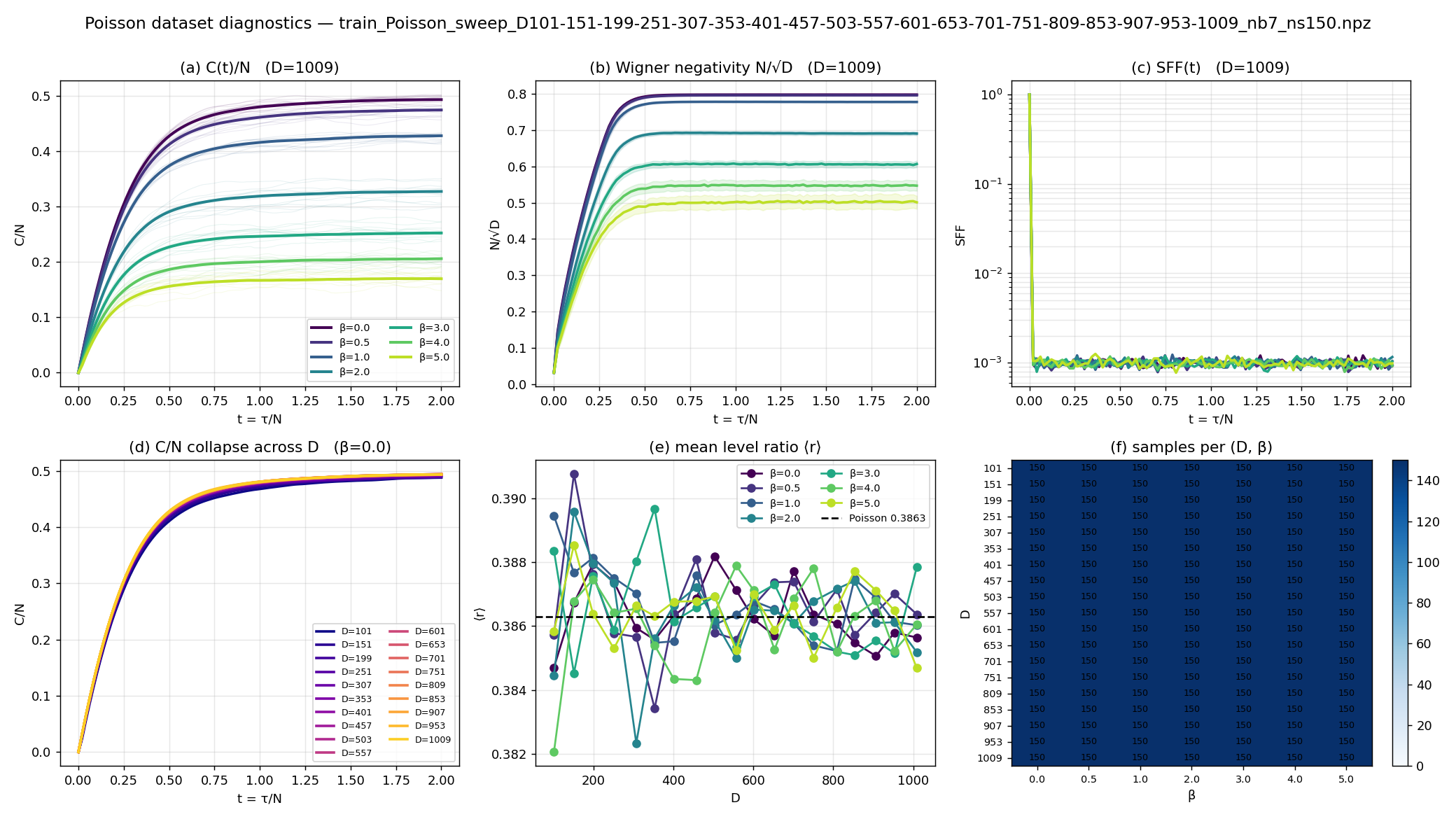}
\caption{Poisson dataset diagnostics ($19{,}950$ samples; $D=101$--$1009$,
nineteen primes); $\langle r\rangle$ on the Poisson line $0.3863$
throughout.}
\label{fig:diagPoisson}
\end{figure}

\begin{figure}[t]
\centering
\includegraphics[width=0.9\textwidth]{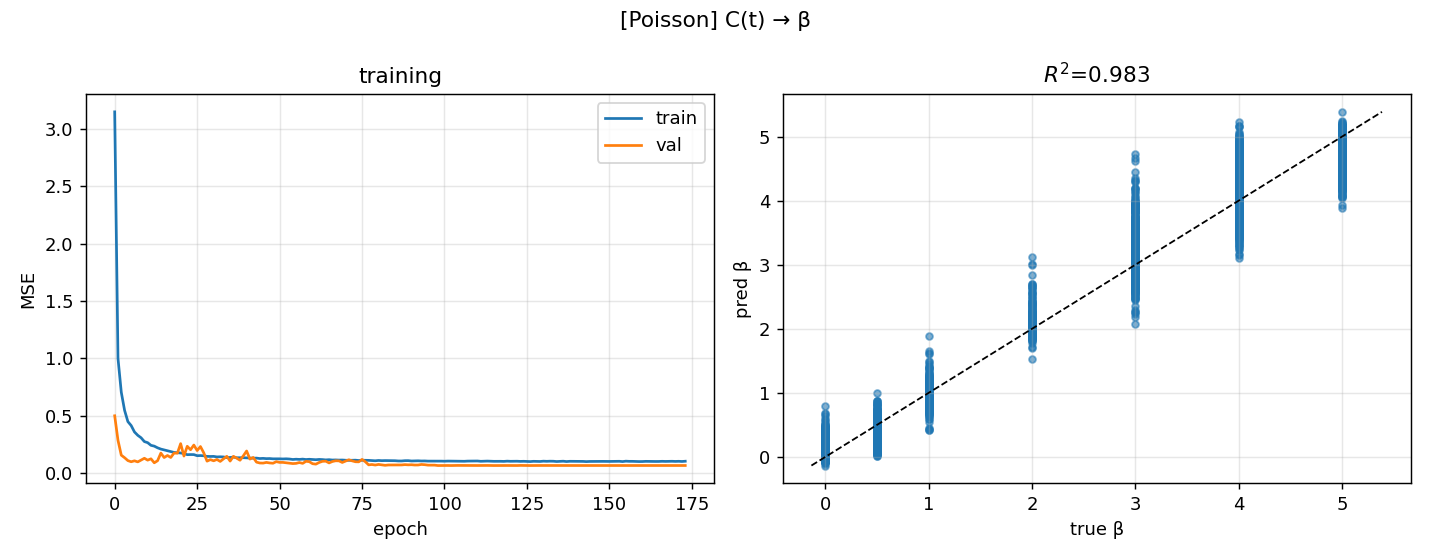}
\caption{Poisson temperature regression on the extended set,
$R^2=0.983$: correct centres, slightly wider clusters --- the fluctuations
of uncorrelated levels, not a loss of encoding.}
\label{fig:poiA}
\end{figure}

\begin{figure}[t]
\centering
\includegraphics[width=\textwidth]{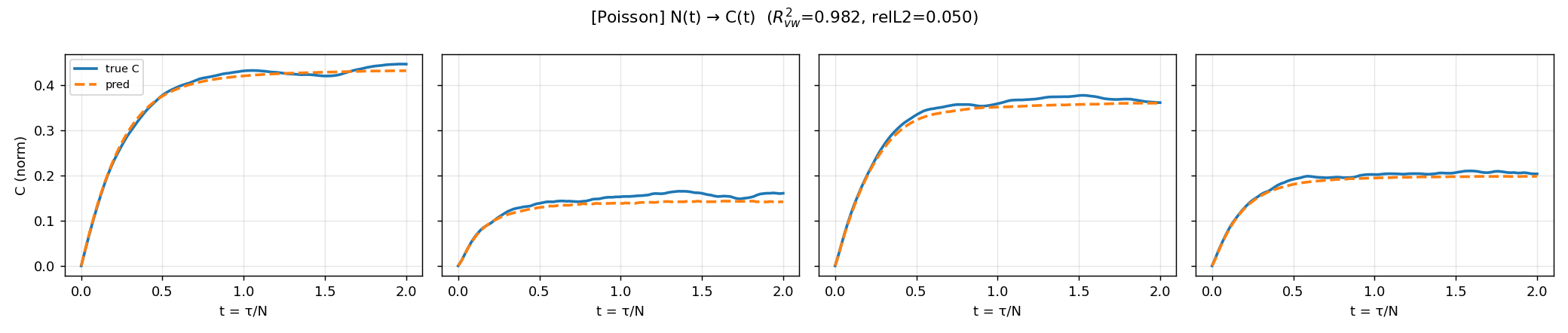}
\caption{Poisson mutual reconstruction, $N\to\C$: $\rvw=0.982$,
$\relL=0.050$ (the direction $\C\to N$, $0.968$, appears in
Appendix~\ref{app:figs}).}
\label{fig:poiNC}
\end{figure}

\begin{figure}[t]
\centering
\includegraphics[width=\textwidth]{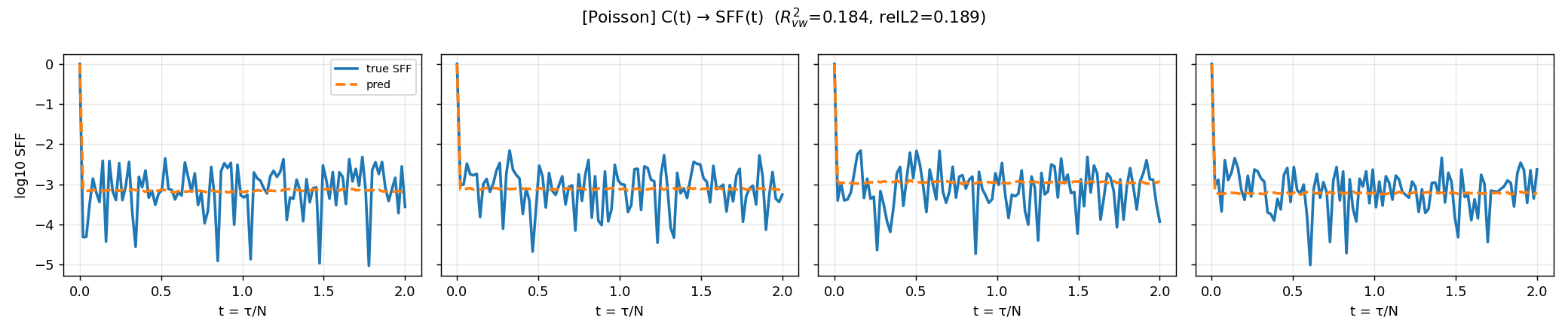}
\caption{Poisson lossiness: $\C\to{\rm SFF}$, $\rvw=0.184$ --- the same
envelope-only reconstruction as in the chaotic ensembles; here the missed
remainder is the clustering fluctuations of independent levels.}
\label{fig:poiSFF}
\end{figure}

\subsection{Identification of the universality class}
\label{sec:classify}

The discriminative version of the ensemble comparison is the sharpest: given
the $\C(t)$ of a \emph{single} sample, identify its parent ensemble. We train
per-dimension binary classifiers (gradient-boosted trees on the normalised
curve; $50\%$ test).

\begin{figure}[t]
\centering
\includegraphics[width=0.49\textwidth]{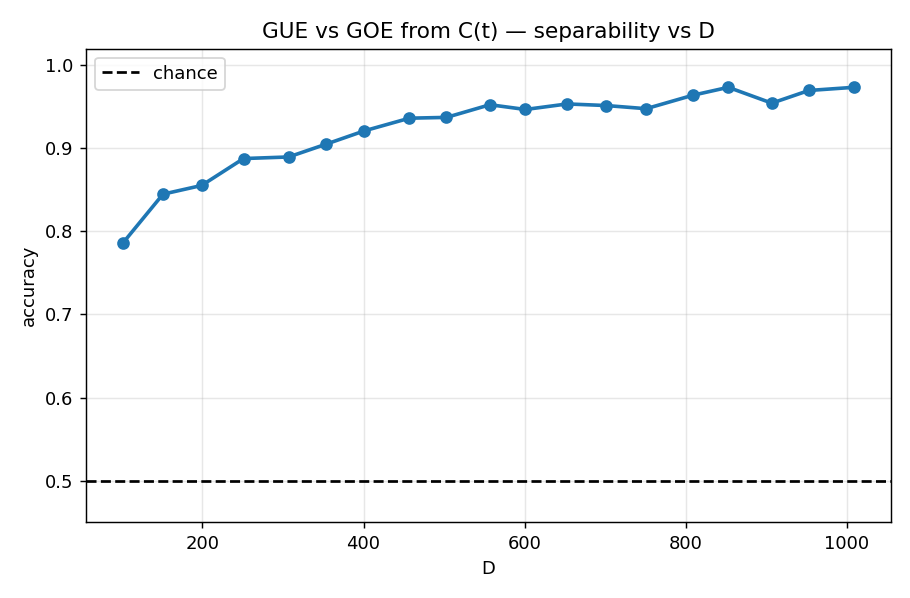}\hfill
\includegraphics[width=0.49\textwidth]{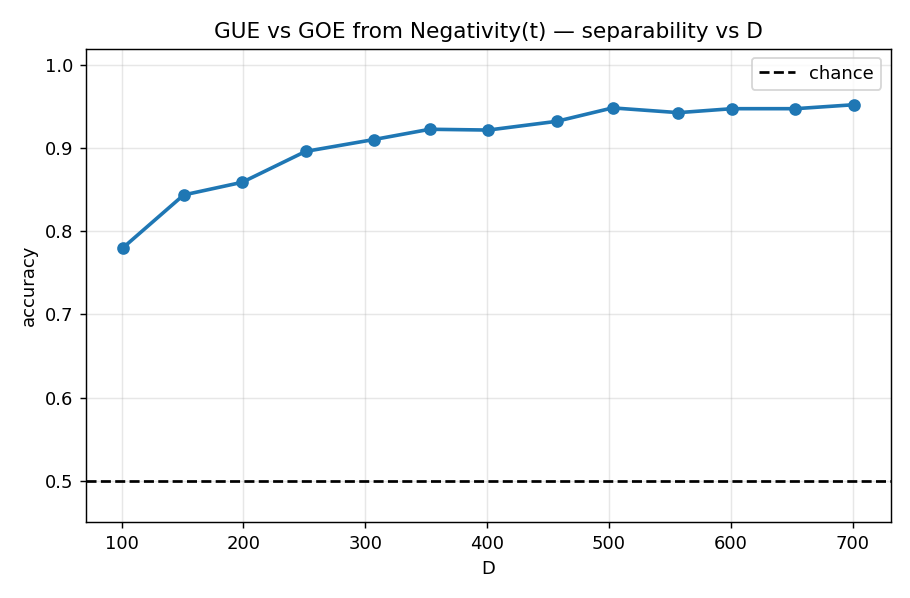}\\[1mm]
\includegraphics[width=0.6\textwidth]{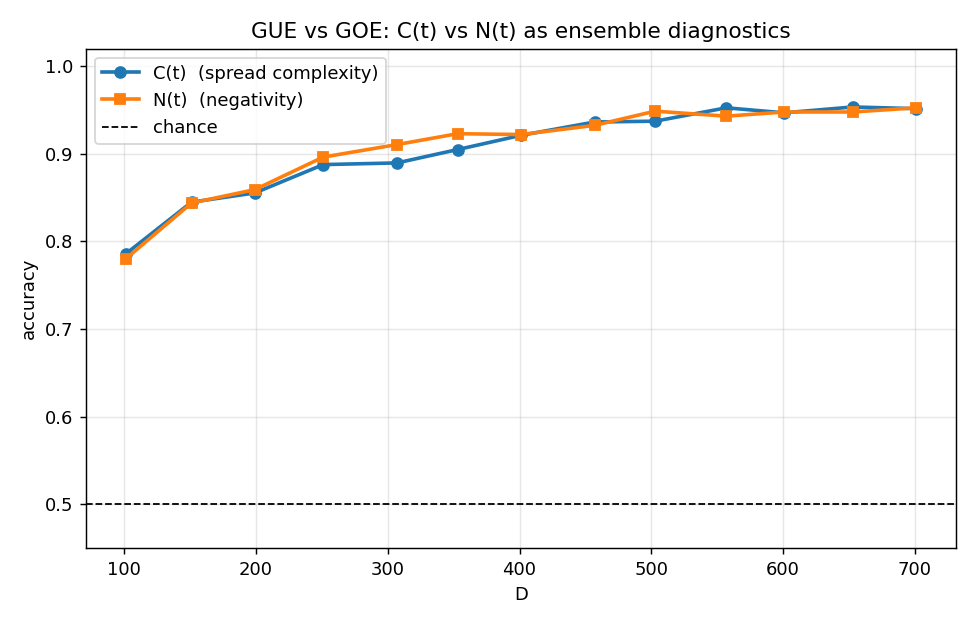}
\caption{The hard test: GUE vs GOE, per dimension. \textbf{Top left:} from
$\C(t)$ --- accuracy rises from $\approx0.79$ at $D=101$ to
$\approx0.97$--$0.98$ at $D\sim700$--$1000$. \textbf{Top right:} from $N(t)$
--- same trend, slightly lower ($\approx0.78\to0.95$). \textbf{Bottom:}
direct comparison; the first moment is the marginally better
ensemble-diagnostic at every $D$.}
\label{fig:clGOE}
\end{figure}

\begin{figure}[t]
\centering
\includegraphics[width=0.49\textwidth]{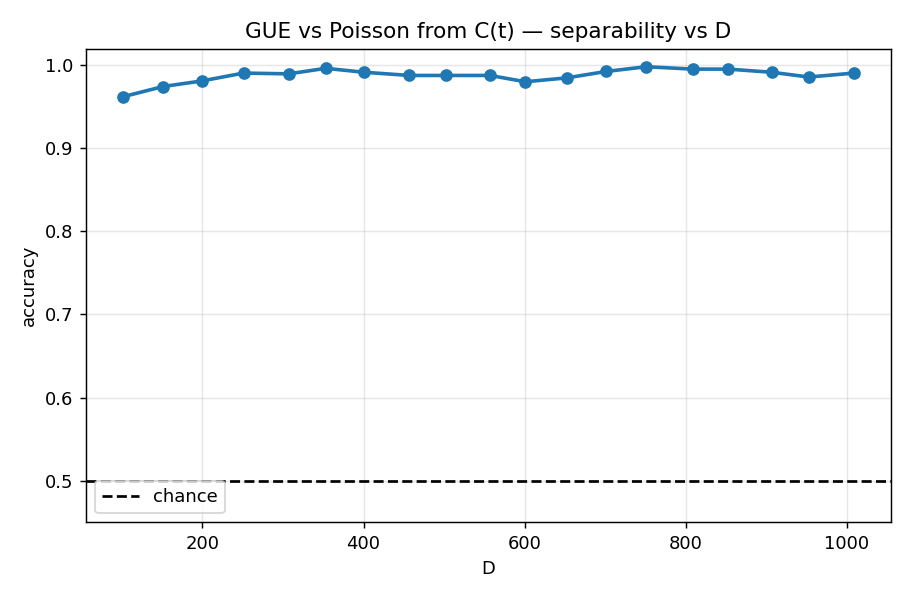}\hfill
\includegraphics[width=0.49\textwidth]{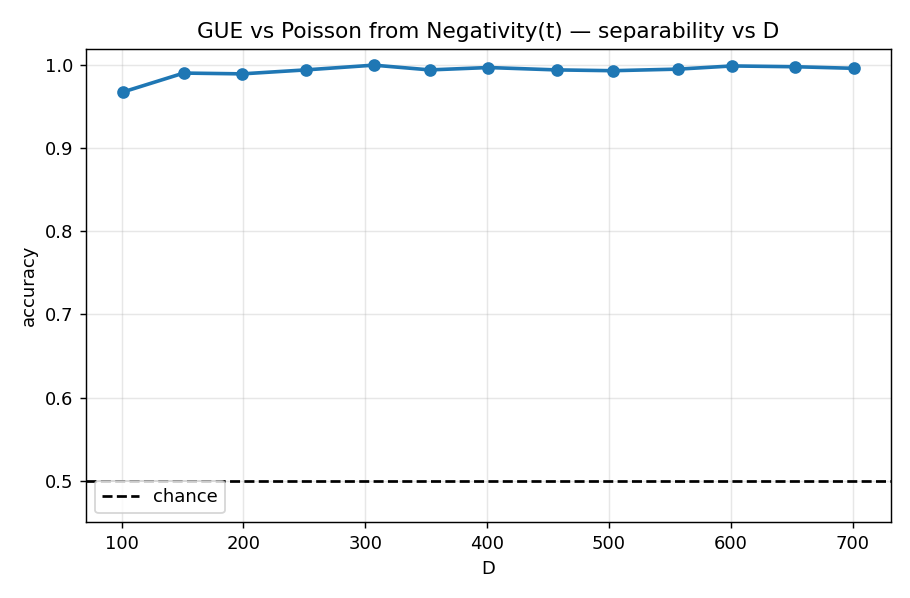}\\[1mm]
\includegraphics[width=0.6\textwidth]{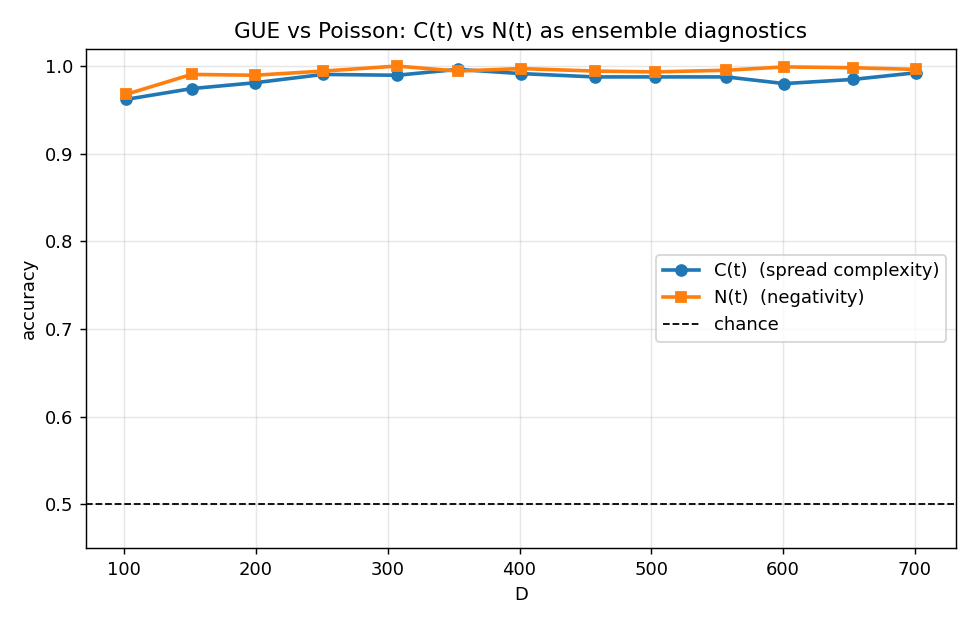}
\caption{The calibration test: GUE vs Poisson (chaotic vs integrable
spectra). Both channels separate the classes essentially perfectly at every
dimension up to $D=1009$ ($\approx0.96$ at $D=101$, $\to1.00$).}
\label{fig:clPoi}
\end{figure}

Two findings (Figs.~\ref{fig:clGOE}, \ref{fig:clPoi}):

\begin{itemize}
\item \textbf{Chaotic vs integrable is trivial.} The dichotomy is the
operational content of the Bohigas--Giannoni--Schmit and Berry--Tabor
conjectures \cite{Bohigas:1983er,Berry:1977ab}: GUE vs Poisson is separated
at $\approx0.96$--$1.00$ for every $D$ from either channel. This calibrates
the experiment: the curve shapes differ visibly, and the classifier finds it.
\item \textbf{Dyson class survives the compression --- and sharpens with
$D$.} GUE vs GOE is the stringent test: identical mean densities, identical
gross curve shapes, different level repulsion only. From $\C(t)$ the accuracy
climbs from $\approx0.79$ ($D=101$) to $\approx0.97$--$0.98$
($D\sim10^{3}$); from $N(t)$, $\approx0.78\to0.95$. The repulsion-induced
imprint on the smooth curves is $O(1/D)$-masked by statistical fluctuations at
small $D$ and emerges cleanly as they die away. The first moment is,
marginally but consistently, the better universality-class diagnostic.
\end{itemize}

Together with Sec.~\ref{sec:sff} this locates $\C(t)$ precisely on the
information ladder: it retains the \emph{class} of the spectral correlations
(and the coarse $e^{S}$ scale) while discarding their sample-specific fine
structure.

\section{The integrable CFT sector}
\label{sec:cft}

\subsection{Theory: the SL$(2,\mathbb R)$ chain and exact moment slaving}
\label{sec:cfttheory}

The natural integrable counterpart of the random-matrix datasets is the
SL$(2,\mathbb R)$ discrete-series sector, for three reasons. First, it is
\emph{exactly} the Krylov structure of two-dimensional CFT: the module of a
primary of weight $h$ under the global conformal algebra carries the
discrete-series representation $D_h^+$, with
\begin{equation}
L_0|h,n\rangle=(h+n)|h,n\rangle,\qquad
L_\pm|h,n\rangle=\sqrt{(n+\tfrac{1\mp1}{2})(n+2h-\tfrac{1\pm1}{2})}\,
|h,n\pm1\rangle .
\end{equation}
Second, the natural quench Hamiltonian $H=\alpha(L_++L_-)$ acting on the
highest-weight state generates Perelomov coherent states, whose Krylov chain
is the module itself: $a_n=0$ and
\begin{equation}
b_n=\alpha\sqrt{n(n+2h-1)} ,
\label{eq:bn}
\end{equation}
with closed-form dynamics
\begin{equation}
\C(t)=2h\sinh^2(\alpha t),\qquad
S(t)=\mathrm{sech}^{2h}(\alpha t),\qquad
|\psi_n(t)|^2=\binom{n+2h-1}{n}\tanh^{2n}(\alpha t)\,
\mathrm{sech}^{4h}(\alpha t).
\label{eq:nb}
\end{equation}
Third --- and this is what makes it the \emph{control} --- the occupation
\eqref{eq:nb} is a negative-binomial distribution in $n$, a one-parameter
family at fixed $h$: every functional of $|\psi_n|^2$, in particular the
Wigner negativity, is analytically \emph{slaved} to the first moment. The
sharp prediction for our programme is therefore that \textbf{no
reconstruction asymmetry between $\C$ and $N$ can exist in this sector}; any
apparent asymmetry is a protocol artifact. This is the sector in which we
discovered, and repaired, the metric pathologies of
Sec.~\ref{sec:protocol}.

We also note the holographic reading: the thermal/BTZ autocorrelator of a 2d
CFT primary is exactly of the form \eqref{eq:nb} with
$\alpha\leftrightarrow\pi/\beta$, so this sector is the semiclassical
(infinite-entropy) limit of the holographic CFT --- the point of departure for
the chaos interpolation of Sec.~\ref{sec:chaos}.

The dataset comprises $2{,}079$ theories on a dense grid ($231$ values of
$h\in[0.25,6]$, $9$ of $\alpha\in[0.6,1.8]$), each verified against
\eqref{eq:nb} at the $10^{-8}$ level, at chain truncation $n_{\max}=503$ with
the strict edge guard of Sec.~\ref{sec:datasets}. Splits hold out entire $h$
values (Sec.~\ref{sec:protocol}): the test is generalisation to unseen
theories.

\subsection{Scalar tasks: the $h\alpha^2$ degeneracy and its resolution}
\label{sec:cftdeg}

At small $\alpha t$, Eq.~\eqref{eq:nb} gives $\C(t)\simeq2h\alpha^2t^2$: to
leading order the curves know only the \emph{product} $h\alpha^2$.
The data confirm this degeneracy with precision (Table~\ref{tab:cft},
Fig.~\ref{fig:cftcomp}): every channel predicts $h\alpha^2$ at $R^2=1.000$.
Resolving $h$ and $\alpha$ \emph{individually} requires sensitivity to the
subleading $\sinh$/$\mathrm{sech}$ structure, and here the channels separate
(Fig.~\ref{fig:cfth}): the negativity attains $N\to h=0.999$ and
$N\to\alpha=0.998$, ahead of the complexity ($0.984$, $0.985$) and of $\chi$
($0.974$, $0.974$). In residual terms the negativity's unexplained variance
for $h$ is an order of magnitude below the complexity's: \emph{the second
moment resolves what the first moment blurs}, here in a fully integrable
setting. That $\chi$ trails both is also physical: dividing by
$S=\mathrm{sech}^{2h}(\alpha t)$ partially cancels the very $h$-dependence
being regressed.

\begin{table}[t]
\centering
\begin{tabular}{@{}lccc@{}}
\toprule
channel & $\to h$ & $\to\alpha$ & $\to h\alpha^2$\\
\midrule
$\C$    & $0.984$ & $0.985$ & $1.000$\\
$N$     & $\mathbf{0.999}$ & $\mathbf{0.998}$ & $1.000$\\
$\chi$  & $0.974$ & $0.974$ & $1.000$\\
\bottomrule
\end{tabular}
\caption{Integrable-sector degeneracy table (held-out $h$; validation-chosen
models). All channels saturate the leading-order composite $h\alpha^2$; the
negativity best resolves the individual labels.}
\label{tab:cft}
\end{table}

\begin{figure}[t]
\centering
\includegraphics[width=0.32\textwidth]{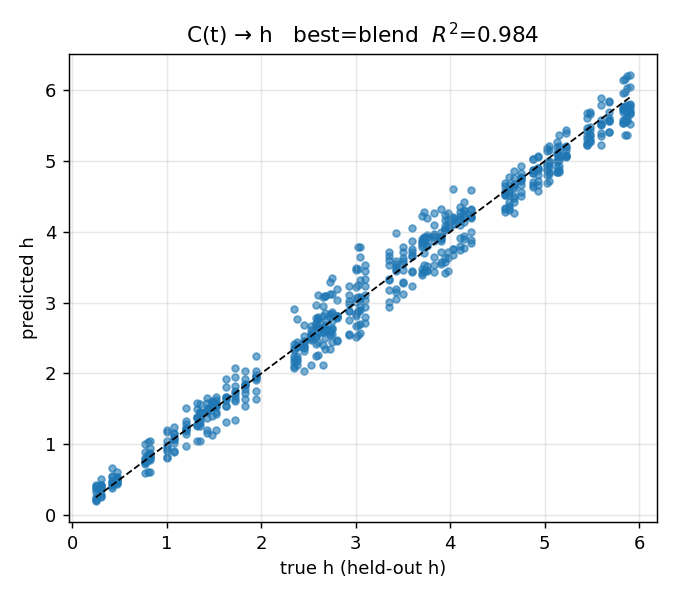}\hfill
\includegraphics[width=0.32\textwidth]{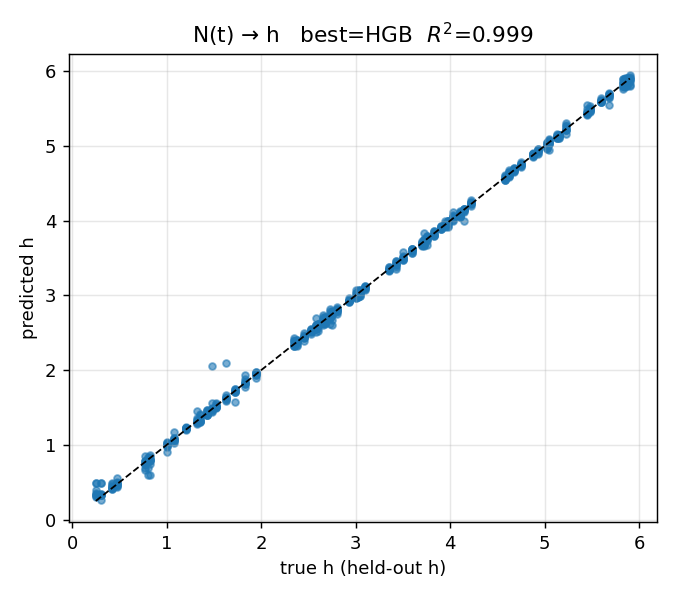}\hfill
\includegraphics[width=0.32\textwidth]{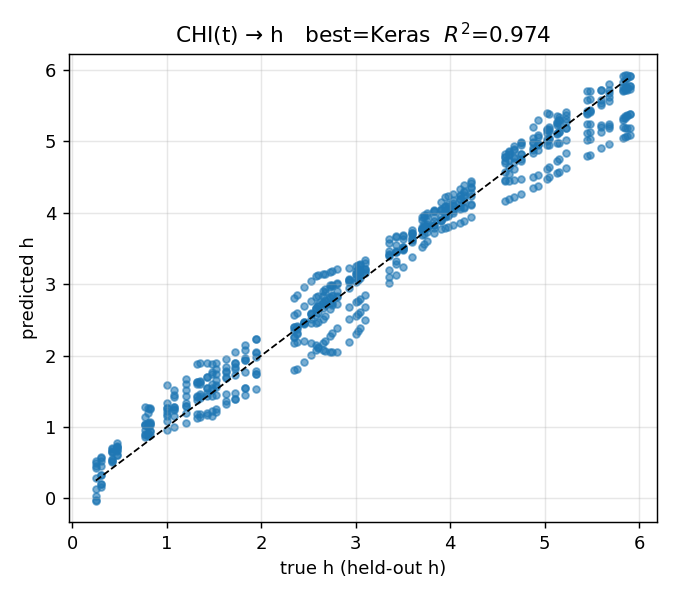}
\caption{Resolving the conformal weight on held-out $h$ values:
$\C\to h$ ($0.984$), $N\to h$ ($0.999$), $\chi\to h$ ($0.974$). The
$\to\alpha$ triplet (Appendix~\ref{app:figs}) shows the same ordering.}
\label{fig:cfth}
\end{figure}

\begin{figure}[t]
\centering
\includegraphics[width=0.32\textwidth]{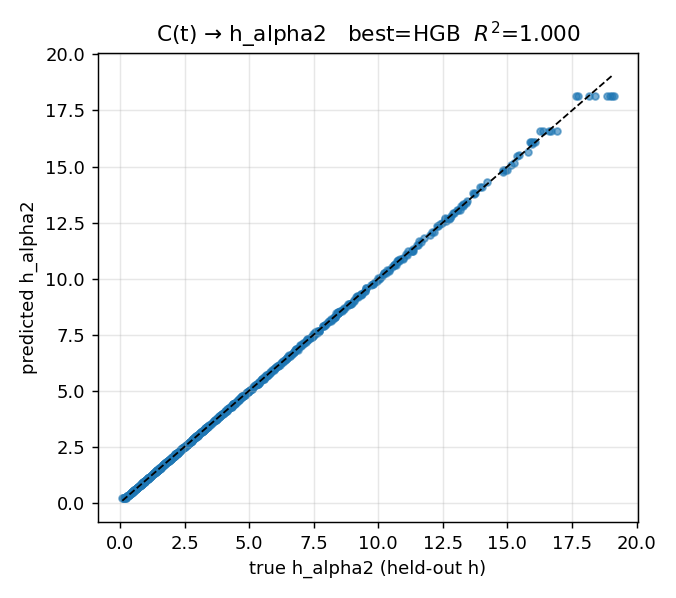}\hfill
\includegraphics[width=0.32\textwidth]{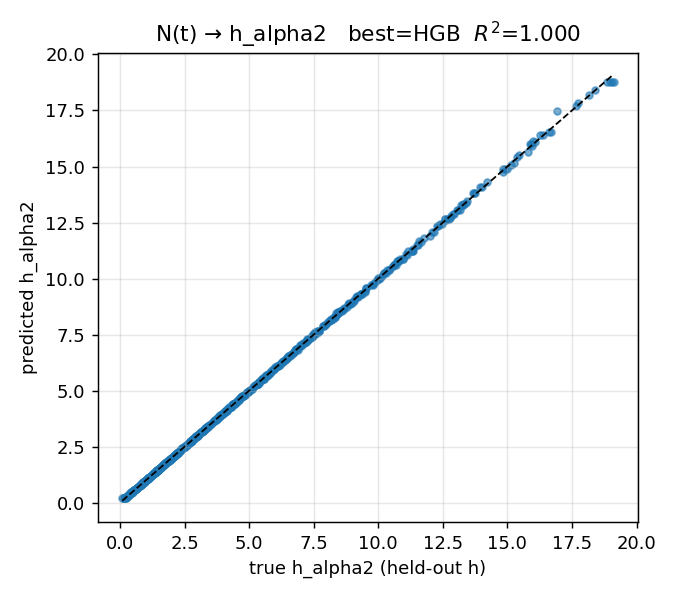}\hfill
\includegraphics[width=0.32\textwidth]{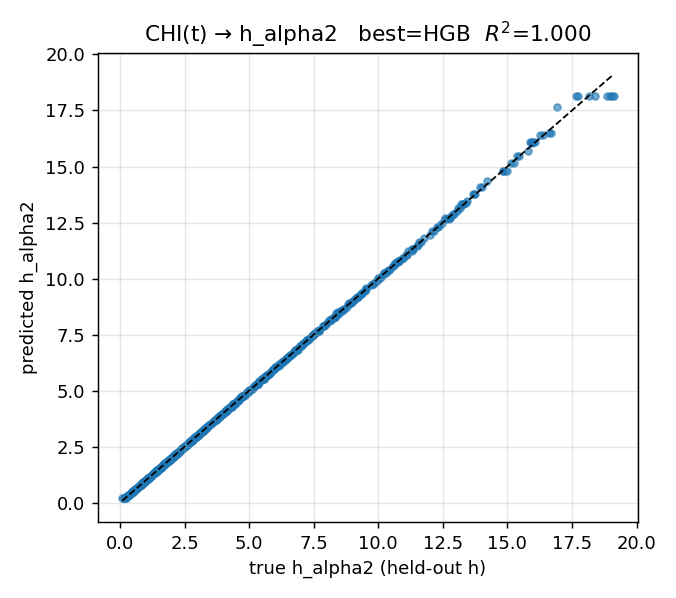}
\caption{The degeneracy: every channel predicts the composite $h\alpha^2$ at
$R^2=1.000$ --- the leading-order information content of all three curves.}
\label{fig:cftcomp}
\end{figure}

\subsection{Curve reconstruction and the metric-artifact }
\label{sec:cftcurves}

All six directed maps among $\{\C,N,\chi\}$ succeed at high fidelity. The
$\chi$ pairs (Fig.~\ref{fig:cftchi}) give $\C\to\chi$ at $\rvw=0.985$ against
$\chi\to\C$ at $0.958$, and $N\to\chi$ at $0.992$ against $\chi\to N$ at
$0.968$ --- mild, metric-dependent orderings with no stable surplus, as the
slaving demands.

The instructive case is the $\C\leftrightarrow N$ pair
(Fig.~\ref{fig:cftexhibit}), which we display deliberately in its
\emph{uncorrected} raw-channel form: the titles read $\rvw=0.679$ for
$\C\to N$ against $0.955$ for $N\to\C$ --- an apparent asymmetry of $+0.28$.
This is the metric artifact of Sec.~\ref{sec:protocol}, exhibited: across the
grid the raw $\C$ curves span a $\sim500\times$ dynamic range, variance
weighting concentrates the loss on the large-$h$ giants, and the small-$h$
half of the test set is fit arbitrarily badly without penalty; the
relative-error metric, which weights samples equally, reverses the ordering.
Re-learning the same map in the logarithmic $\C$ channel at matched width
restores symmetry: $0.944$ against $0.949$, a gap of $+0.005$ --- compatible
with zero, exactly as the negative-binomial slaving requires. We keep the raw
panels in the paper because they are the cleanest illustration we know of how
a dynamic-range pathology can manufacture a physics-flavoured ML claim; every
asymmetry quoted in Sec.~\ref{sec:chaos} is protected against this failure
mode by construction.

\begin{figure}[t]
\centering
\includegraphics[width=\textwidth]{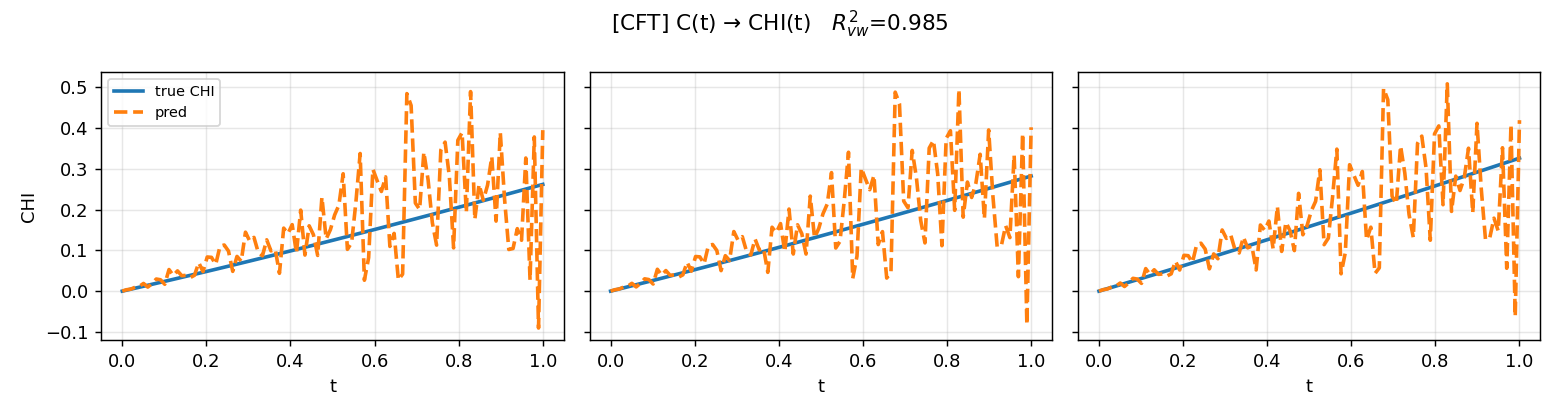}\\[1mm]
\includegraphics[width=\textwidth]{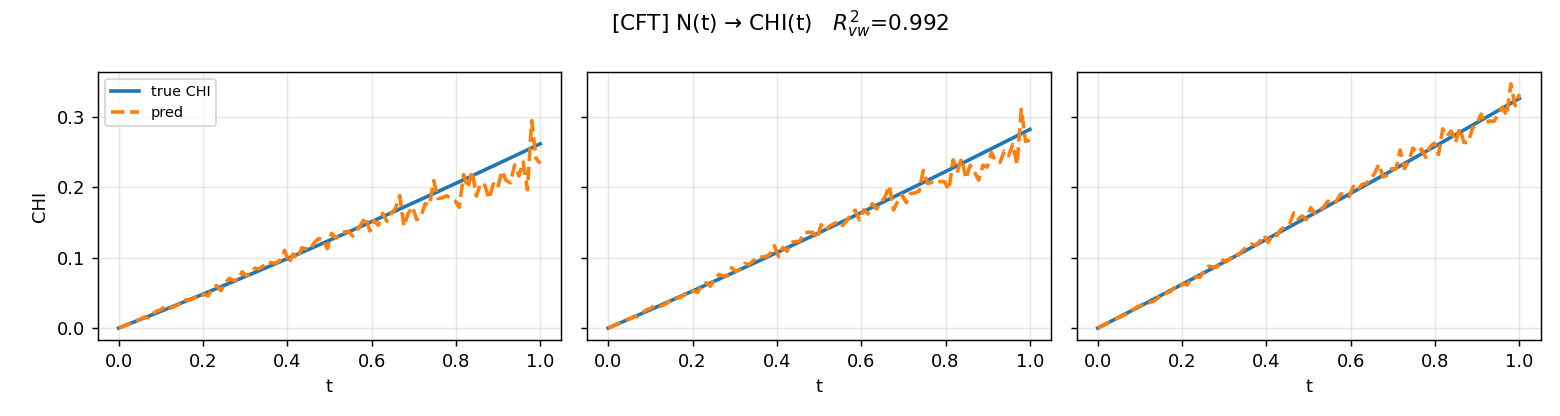}
\caption{Integrable-sector $\chi$ reconstructions (held-out $h$):
$\C\to\chi$ ($\rvw=0.985$) and $N\to\chi$ ($0.992$); the reverse maps
($0.958$, $0.968$) appear in Appendix~\ref{app:figs}. No stable asymmetry
survives the dual-metric rule --- the integrable control.}
\label{fig:cftchi}
\end{figure}

\begin{figure}[t]
\centering
\includegraphics[width=\textwidth]{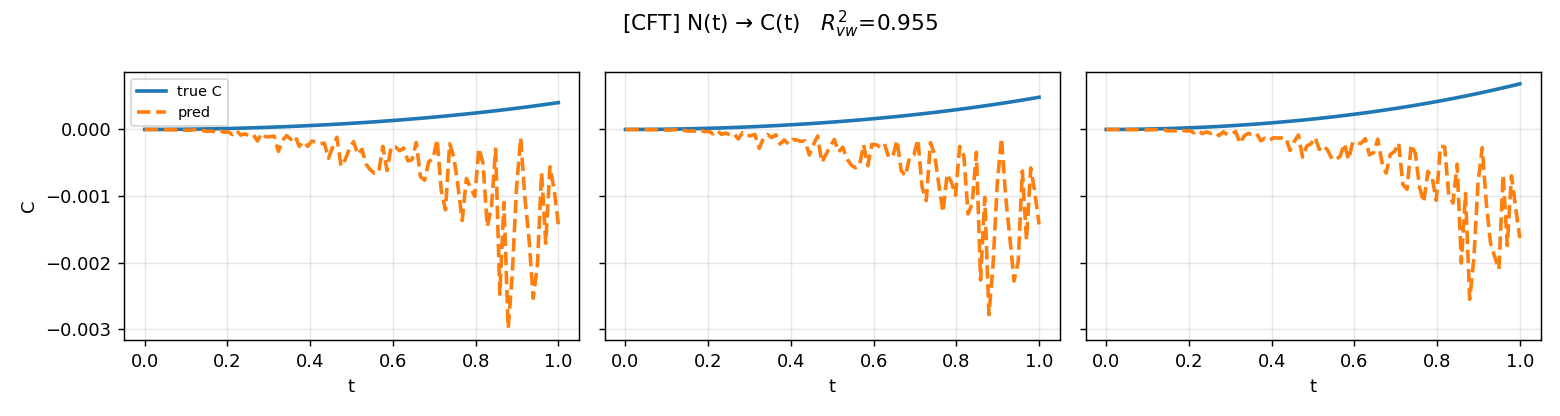}\\[1mm]
\includegraphics[width=\textwidth]{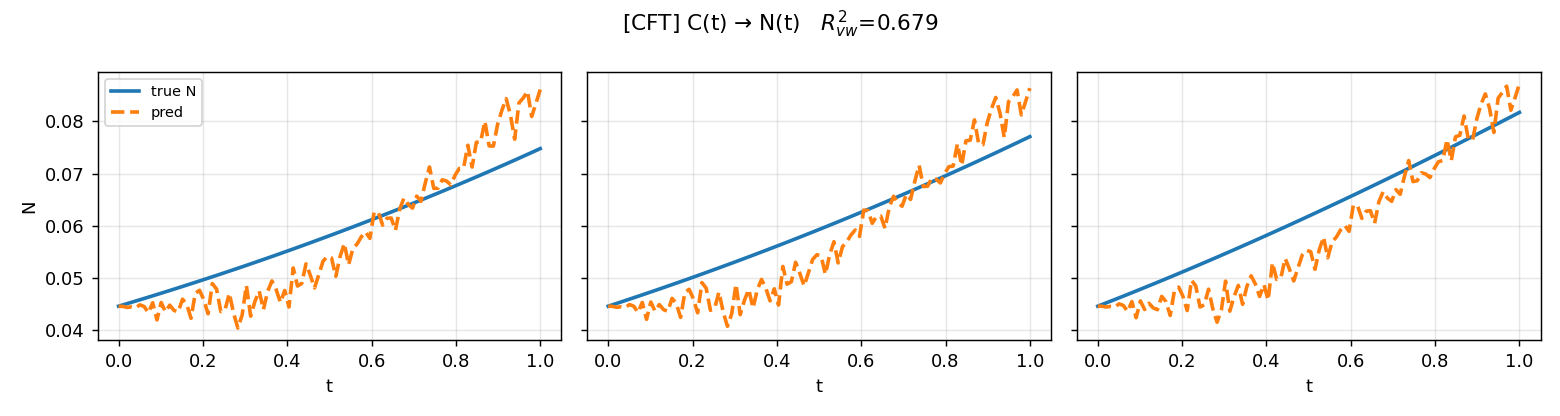}
\caption{\textbf{The metric-artifact exhibit} (raw channels, deliberately
uncorrected). Titles: $N\to\C$ at $\rvw=0.955$, $\C\to N$ at $0.679$ --- an
apparent $+0.28$ surplus. The relative-error metric reverses the sign, and
the logarithmic-channel refit closes the gap to $+0.005$: the ``surplus'' is
a dynamic-range pathology, not physics. See Sec.~\ref{sec:cftcurves}.}
\label{fig:cftexhibit}
\end{figure}

\subsection{Limitations }
\label{sec:cftlim}

Four limitations bound the claims of this section. (i) \emph{Truncation:}
the chain is cut at $n_{\max}=503$; the guard certifies edge occupation
below $10^{-10}$, but the sector is fundamentally infinite-dimensional and
all statements are at finite truncation. (ii) \emph{Primary sector only:}
we work in the global SL$(2,\mathbb R)$ module. The full Virasoro module of
a 2d CFT is not a one-dimensional chain (its descendant lattice is organised
by partitions), and our discrete-phase-space construction requires the chain;
extending the negativity to the full module is an open problem. (iii)
\emph{Deterministic grid:} there is no sampling noise, so generalisation is
tested by held-out theories; scores are not comparable across split designs,
only within them. (iv) \emph{Holographic reading is semiclassical:} the
identification $\alpha\leftrightarrow\pi/\beta$ with the BTZ autocorrelator
is a statement about the infinite-entropy limit; finite-entropy corrections
are precisely what Sec.~\ref{sec:chaos} models, and only in a random-matrix
proxy.

\section{CFT to chaos}
\label{sec:chaos}

\subsection{Motivation and model}
\label{sec:chaosmodel}

The semiclassical two-point function of a holographic CFT fixes exactly the integrable chain of Sec.~\ref{sec:cfttheory}. Chaos is not visible there. It enters through the finite-entropy discreteness of the black-hole spectrum, whose late-time statistics are random-matrix universal \cite{Cotler:2016fpe,Saad:2019lba}. The minimal model of this completion is an additive
random-matrix deformation of the chain,
\begin{equation}
H(\varepsilon)\;=\;H_{{\rm SL}(2,\mathbb R)}\;+\;\varepsilon\,R_0\,
W_{\rm GUE},
\label{eq:interp}
\end{equation}
with $H_{{\rm SL}(2,\mathbb R)}$ the truncated chain \eqref{eq:bn} at
$D=251$. $W_{\rm GUE}$ a unit-normalised GUE draw . $R_0$ is the spectral
half-width of the unperturbed chain, so that $\varepsilon$ measures the perturbation in units of the total bandwidth. For every sample the deformed
Hamiltonian is \emph{re-tridiagonalised from the same initial vector}, so
that $\C$, $N$, $S$ and $\chi$ remain exact Krylov observables at every $\varepsilon$. At $\varepsilon=0$ the analytic forms \eqref{eq:nb} are
reproduced at the $10^{-15}$ level.

The chain's own spectral statistics certify the interpolation
(Table~\ref{tab:eps}): the unperturbed chain is \emph{rigid}
(``picket-fence'', $\langle r\rangle=0.978$), and $\langle r\rangle$ crosses
to the GUE plateau $0.600$ by $\varepsilon\simeq0.12$. The dataset comprises
$16$ balanced strata of $480$ samples ($40$ $h$-values $\times$ $3$ $\alpha$
$\times$ $4$ realisations at $\varepsilon>0$; $160\times3$ distinct theories
at $\varepsilon=0$), $7{,}680$ samples in all.

\paragraph{Truncation protocol} At $\varepsilon=0$ the
chain truncates an infinite module. The strict edge guard applies. At $\varepsilon>0$, Eq.~\eqref{eq:interp} defines a \emph{closed} $D$-dimensional model such that nothing is truncated
and occupation of the far end of the chain
is physical equilibration, exactly as in the GUE datasets. It reaches
$\sim10^{-1}$ in the most chaotic corner. Accordingly \emph{no} samples are discarded at $\varepsilon>0$, and the maximal edge occupation is stored per
sample as a diagnostic (reported in Table~\ref{tab:eps}). We emphasise this
because the opposite choice --- guarding at all $\varepsilon$ --- discards preferentially the most chaotic samples, a selection bias correlated with the control parameter. We made, detected, and repaired exactly this error, and the audit is folded into the protocol of Sec.~\ref{sec:protocol}.

\subsection{Results}
\label{sec:chaosresults}

At every $\varepsilon$ stratum we train, under the full protocol
(held-out $h$ common to all strata; dual metrics; two seeds), the scalar
tasks $\{\C,N,\chi\}\to h$ and the four curve maps
$\C\leftrightarrow N$, $\C\leftrightarrow\chi$, and form the seed-averaged
asymmetry gaps
\begin{equation}
{\rm gap}_N=\rvw(N\to\C)-\rvw(\C\to N),\qquad
{\rm gap}_\chi=\rvw(\chi\to\C)-\rvw(\C\to\chi).
\end{equation}
Table~\ref{tab:eps} lists four representative strata; Fig.~\ref{fig:headline}
shows the full grid.

\begin{table}[t]
\centering\small
\begin{tabular}{@{}lcccccc@{}}
\toprule
$\varepsilon$ & $\langle r\rangle$ & max edge & $\C\to h$ &
${\rm gap}_N$ & ${\rm gap}_\chi$ & metrics\\
\midrule
$0.015$ & $0.766$ & $3\times10^{-10}$ & $+0.50$ &
$+0.225\pm0.039$ & $+0.471\pm0.036$ & agree\\
$0.035$ & $0.658$ & $1\times10^{-8}$ & $-0.09$ &
$+0.156\pm0.004$ & $+0.531\pm0.008$ & agree\\
$0.15$ & $0.601$ & $6\times10^{-2}$ & $-0.16$ &
$+0.006\pm0.007$ & $+0.691\pm0.015$ & agree\\
$0.20$ & $0.600$ & $1\times10^{-1}$ & $-0.16$ &
$+0.001\pm0.007$ & $+0.614\pm0.027$ & agree\\
\bottomrule
\end{tabular}
\caption{Representative strata of the interpolation ($480$ samples each;
gaps seed-averaged $\pm$ spread; ``agree'' = the relative-$L^2$ metric
confirms the sign of both gaps). The edge column is the stored equilibration
diagnostic. The $\varepsilon=0$ anchor is supplied by the integrable control
of Sec.~\ref{sec:cft} (its fine $h$-grid shares too few held-out values with
the coarse $\varepsilon>0$ grid for an in-study point, and duplicating
realisations would leak).}
\label{tab:eps}
\end{table}

\begin{figure}[t]
\centering
\includegraphics[width=\textwidth]{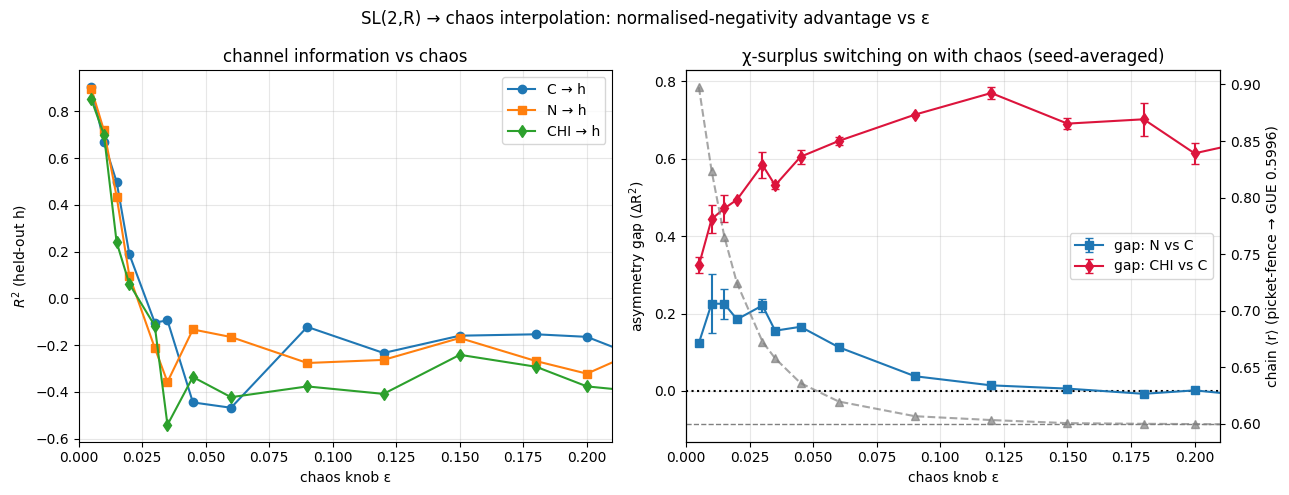}
\caption{\textbf{Headline.} The SL$(2,\mathbb R)\to$chaos interpolation over
the full $\varepsilon$ grid. \textbf{Left:} $R^2$ for
$\{\C,N,\chi\}\to h$ on held-out $h$. All three channels lose the microscopic
label \emph{together} as the random matrix washes it out of the dynamics ---
the sanity panel. \textbf{Right:} the seed-averaged asymmetry gaps with the
chain $\langle r\rangle$ (grey, right axis) as chaos certificate. The
$\chi$-gap (red) switches on and grows monotonically from $+0.33$ at
$\varepsilon=0.005$ to $+0.77$ at $\varepsilon\simeq0.12$ --- the point where
$\langle r\rangle$ first reaches the GUE plateau --- then saturates near
$0.6$--$0.7$. The raw-$N$ gap (blue) starts at $+0.13$--$0.22$ in the
transition region and decays to zero deep in chaos. Every point carries
both-metric agreement.}
\label{fig:headline}
\end{figure}

Three statements, in decreasing order of importance:

\begin{enumerate}
\item \textbf{The $\chi$-surplus is a chaos effect.} ${\rm gap}_\chi$ rises
monotonically through the $\langle r\rangle$ crossover and saturates once the
statistics are GUE. Its onset tracks the chaos certificate point by point,
with error bars from seed spread and two-metric agreement at every
$\varepsilon$.
\item \textbf{The surplus is carried by the \emph{normalised} negativity.}
${\rm gap}_N$ behaves oppositely, decaying to zero deep in chaos --- and
recall (Sec.~\ref{sec:smooth}) that the small pooled-GUE raw-$N$ gap is
removed by target smoothing. It is the division by the survival amplitude,
the defining operation of $\chi$ \cite{Basu:2026inf}, that exposes
second-moment information inaccessible to the first moment. Mechanically this
is natural: $|S(t)|$ carries the erratic, rigidity-sensitive return
amplitude, so $\chi$ retains fine-grained data that the smooth moments
discard, while in the integrable sector $S$ is itself slaved
(Eq.~\eqref{eq:nb}) and the division adds nothing.
\item \textbf{Both endpoints are independently anchored.} The
$\varepsilon=0$ anchor is the integrable control (gaps compatible with zero,
as slaving demands); the large-$\varepsilon$ behaviour joins the pure-GUE
phenomenology of Sec.~\ref{sec:gue}.
\end{enumerate}

Taken together: \emph{in the integrable sector the Krylov moment tower is
slaved and the observables are informationally equivalent; as the dynamics
becomes chaotic the tower decouples, and specifically the \textbf{"normalised" Wigner
negativity acquires a large, robust informational surplus over the Spread
complexity as well as the  Wigner negativity} This is the operational, machine-learned content of the
second-moment infall proposal of \cite{Basu:2026inf}}.

\subsection{Limitations of the interpolation study}
\label{sec:chaoslim}

(i) $\varepsilon$ is a \emph{model} parameter: Eq.~\eqref{eq:interp} is a
proxy for finite-entropy ($e^{-S}$) effects, not a coupling of any specific
CFT; the physical claim is about the integrable-vs-chaotic dichotomy, not
about a particular deformation. (ii) Single dimension: all strata are at
$D=251$; the $D$-scaling of the saturated gap is open and is the natural next
experiment. (iii) Error bars are training-seed spreads at fixed data, not
full statistical uncertainties. (iv) The left panel shows that beyond
$\varepsilon\simeq0.03$ no channel retains the microscopic label $h$; the
asymmetry gaps are therefore statements about mutual curve information, not
about label recovery. (v) The $\varepsilon=0$ stratum is anchored by the
separate integrable study rather than an in-figure point, for the split-design
reason stated under Table~\ref{tab:eps}.

\section{An analytical mechanism for the $\chi$-surplus}
\label{sec:mech}

The headline of Sec.~\ref{sec:chaos} is an empirical curve. This section
derives the mechanism behind it, in theorem--proof form. The logic: we fix
what an $R^2$ score measures (Sec.~\ref{sec:mechr2}); prove that the
\emph{population} asymmetry vanishes identically at $\varepsilon=0$ and
promote the small measured residue there to a calibrated learnability floor;
establish the self-averaging dichotomy that separates $\C,N$ from $S$ in the
chaotic phase; and assemble these into a lower bound on the asymmetry gap
(Sec.~\ref{sec:mechbound}), whose endpoints, consistency checks and
falsifiable predictions close the section (Sec.~\ref{sec:mechpred}). Every
heuristic step is isolated inside an explicitly labelled lemma, and the two
decisive numerical inputs are \emph{measured} elsewhere in this paper.

\subsection{The population $R^2$ as an information functional}
\label{sec:mechr2}

\begin{definition}[Population score]\label{def:rstar}
For square-integrable random variables $X$ (input curve) and $Y$ (target),
\begin{equation}
R^2_\star(X\!\to\!Y)\;\equiv\;
1-\frac{\mathbb E\,[\mathrm{Var}(Y\,|\,X)]}{\mathrm{Var}(Y)} .
\label{eq:rstar}
\end{equation}
\end{definition}

\begin{lemma}[$R^2_\star$ is an information functional]\label{lem:r2}
\leavevmode
\begin{enumerate}
\item[(i)] \emph{Optimality and faithfulness.} The conditional mean
$\hat Y=\mathbb E[Y|X]$ minimises the mean-square error among all measurable
predictors and attains \eqref{eq:rstar}; $R^2_\star=1$ iff $Y$ is
$\sigma(X)$-measurable (up to null sets); $X\perp Y$ implies $R^2_\star=0$;
and $R^2_\star>0$ implies $I(X;Y)>0$.
\item[(ii)] \emph{Calibration.} If $(X,Y)$ is jointly Gaussian,
$I(X;Y)=-\tfrac12\ln(1-R^2_\star)$. In general,
\begin{equation}
I(X;Y)\;\ge\;-\tfrac12\ln\!\big(1-R^2_\star\big)\;-\;c_Y ,
\qquad
c_Y\equiv\tfrac12\ln\!\big(2\pi e\,\mathrm{Var}\,Y\big)-h(Y)\;\ge\;0 ,
\label{eq:mibound}
\end{equation}
with $c_Y$ the non-Gaussianity of the marginal of $Y$ alone.
\item[(iii)] \emph{Data processing.} For any measurable $g$,
$R^2_\star(g(X)\!\to\!Y)\le R^2_\star(X\!\to\!Y)$, with equality when $g$ is
invertible.
\end{enumerate}
\end{lemma}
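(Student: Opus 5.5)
The plan is to prove the three parts in order from the variance decomposition, using the orthogonal-projection picture of conditional expectation in $L^2$. Throughout we take $Y$ scalar, which is the case for the scalar tasks and for each time point of a curve target; for curve targets $\rvw$ is a variance-weighted average of such scalar scores. We also assume that $Y$ and $Y\,|\,X$ admit densities, so that differential entropies exist; this assumption is needed only for (ii).

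For (i), write $m(X)=\mathbb E[Y|X]$. For any measurable $f$, expanding $\mathbb E[(Y-f(X))^2]$ around $m(X)$ kills the cross term, because $Y-m(X)$ is orthogonal to every $L^2$ function of $X$ (tower property). This gives
\[
\mathbb E[(Y-f(X))^2]=\mathbb E[\mathrm{Var}(Y|X)]+\mathbb E[(m(X)-f(X))^2],
\]
so $m$ is optimal and attains \eqref{eq:rstar}. Faithfulness then follows directly:
\begin{itemize}
\item $R^2_\star=1$ iff $\mathbb E[\mathrm{Var}(Y|X)]=0$, iff $Y=m(X)$ a.s., i.e.\ $Y$ is $\sigma(X)$-measurable up to null sets.
\item If $X\perp Y$, then $m\equiv\mathbb E Y$, so $\mathbb E[\mathrm{Var}(Y|X)]=\mathrm{Var}\,Y$ and $R^2_\star=0$.
\item By contraposition, $R^2_\star>0$ forces dependence. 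Since $I(X;Y)=0$ iff $X\perp Y$, this gives $I>0$.
\end{itemize}

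For (ii), the Gaussian case is a direct computation. For jointly Gaussian $(X,Y)$, $\mathrm{Var}(Y|X)$ is deterministic and equals $\mathrm{Var}\,Y\,(1-R^2_\star)$, since $R^2_\star$ is the squared multiple correlation. Subtracting the two Gaussian entropies gives $I=-\tfrac12\ln(1-R^2_\star)$.

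For the general bound, write $I(X;Y)=h(Y)-h(Y|X)$ and proceed in three steps:
\begin{itemize}
\item Pointwise in $x$, the maximum-entropy property of the Gaussian gives $h(Y|X=x)\le\tfrac12\ln(2\pi e\,\mathrm{Var}(Y|X=x))$.
\item Averaging over $X$ and using concavity of $\ln$ (Jensen) gives $h(Y|X)\le\tfrac12\ln\big(2\pi e\,\mathbb E[\mathrm{Var}(Y|X)]\big)=\tfrac12\ln\big(2\pi e\,\mathrm{Var}\,Y\big)+\tfrac12\ln(1-R^2_\star)$.
\item Substituting into $h(Y)-h(Y|X)$ yields \eqref{eq:mibound}, with $c_Y\ge0$ being again the maximum-entropy inequality, now for the marginal of $Y$.
\end{itemize}

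For (iii), note that $\sigma(g(X))\subseteq\sigma(X)$. Conditional expectation onto the smaller $\sigma$-algebra is the orthogonal projection onto a smaller closed subspace of $L^2$. Hence $\mathrm{Var}(\mathbb E[Y|g(X)])\le\mathrm{Var}(\mathbb E[Y|X])$, and by the law of total variance this is exactly $R^2_\star(g(X)\to Y)\le R^2_\star(X\to Y)$. Equality holds when the two $\sigma$-algebras coincide. This is the case when $g$ is invertible with measurable inverse, which is automatic for Borel bijections between standard Borel spaces by Kuratowski's theorem; we state it this way to be safe.

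The only delicate step is (ii). Conditional differential entropy must be well defined: if $Y|X$ has an atom, $h(Y|X)=-\infty$ and the bound holds trivially. We will also check that Jensen is applied in the right direction, to $\ln$ of the conditional variance rather than to $h$ itself. Everything else is standard Hilbert-space bookkeeping.
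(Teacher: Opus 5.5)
Your proposal is correct and follows essentially the same route as the paper: the $L^2$ projection theorem for (i), the pointwise Gaussian maximum-entropy bound followed by Jensen on $\ln\mathrm{Var}(Y|X=x)$ for (ii), and the nesting $\sigma(g(X))\subseteq\sigma(X)$ with the law of total variance for (iii). Your extra care about atoms in $Y|X$ and about requiring a measurable inverse for the equality case of (iii) refines the paper's terser argument without changing it.
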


\begin{proof}
(i) is the $L^2$ projection theorem; the faithfulness statements follow from
$\mathbb E[\mathrm{Var}(Y|X)]=0 \Leftrightarrow Y=\mathbb E[Y|X]$ a.s., and
from independence $\Rightarrow \mathrm{Var}(Y|X)=\mathrm{Var}(Y)$; if
$I(X;Y)=0$ then $X\perp Y$, whence $R^2_\star=0$ --- the contrapositive
gives the last claim. (ii) The Gaussian identity is standard
\cite{CoverThomas2006}. In general, conditionally on $X=x$ the Gaussian
maximises entropy at fixed variance, so
$h(Y|X)\le\mathbb E_x\big[\tfrac12\ln(2\pi e\,\mathrm{Var}(Y|X=x))\big]
\le\tfrac12\ln\!\big(2\pi e\,\mathbb E[\mathrm{Var}(Y|X)]\big)$, the second
step by Jensen (concavity of $\ln$). Then
$I=h(Y)-h(Y|X)\ge h(Y)-\tfrac12\ln(2\pi e\,\mathbb E[\mathrm{Var}(Y|X)])
=-\tfrac12\ln(1-R^2_\star)-c_Y$, and $c_Y\ge0$ again by maximum entropy.
The exact MMSE--information bridge along a Gaussian channel path is
\cite{GuoShamai2005}. (iii) $\sigma(g(X))\subseteq\sigma(X)$, and
conditional variance decreases under refinement of the conditioning
$\sigma$-algebra (tower property $+$ conditional Jensen); invertible $g$
generates the same $\sigma$-algebra.
\end{proof}

\begin{remark}[Trained scores and gap decomposition]\label{rem:trained}
A trained predictor evaluated once on held-out data satisfies
$R^2_{\rm test}\le R^2_\star$ up to sampling fluctuations controlled by the
test-set size: measured scores are certified \emph{lower bounds} on
\eqref{eq:rstar} and hence, by \eqref{eq:mibound}, on information content.
Lemma~\ref{lem:r2}(iii) also legitimises our logarithmic channel maps
(Sec.~\ref{sec:arch}): they are invertible reparametrisations. A measured
directional gap decomposes as
\begin{equation}
{\rm gap}_{\rm meas}
=\big[R^2_\star(Y\!\to\!X)-R^2_\star(X\!\to\!Y)\big]+\delta ,
\label{eq:gapdecomp}
\end{equation}
with $\delta$ the differential approximation/estimation error of the two
trained directions; the next subsection measures $|\delta|$ \emph{in situ}.
\end{remark}

\subsection*{Exact population symmetry at $\varepsilon=0$; the learnability
floor}
\label{sec:mechslave}

\begin{proposition}[One-parameter slaving]\label{prop:slave}
For $H=\alpha(L_++L_-)$ acting on the highest-weight state of $D_h^+$, the
chain wavefunction is
\begin{equation}
\psi_n(t)\;=\;\sqrt{\binom{n+2h-1}{n}}\,
\big(\!-i\tanh\alpha t\big)^{n}\,\mathrm{sech}^{2h}(\alpha t) ,
\qquad \tau\equiv\tanh^2(\alpha t),
\label{eq:psin}
\end{equation}
i.e.\ $\psi_n=e^{-i\pi n/2}\sqrt{p_n(\tau)}$ with $p_n$ the
negative-binomial occupation \eqref{eq:nb} and a state-independent phase.
Consequently every functional of the state is a fixed function of
$(h,\tau)$; in particular
\begin{equation}
\C=2h\,\frac{\tau}{1-\tau},\qquad
|S|=(1-\tau)^{h},\qquad
N=\mathfrak N_h(\tau),\qquad
\chi=\mathfrak N_h(\tau)\,(1-\tau)^{-h},
\label{eq:slaved}
\end{equation}
with $\mathfrak N_h$ a fixed analytic function.
\end{proposition}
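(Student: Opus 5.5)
I will prove the proposition by solving the Schrödinger equation on the Krylov chain directly, and then reading off every observable from the resulting closed form. With $a_n=0$ and $b_n=\alpha\sqrt{n(n+2h-1)}$ from \eqref{eq:bn}, the chain amplitudes obey
\begin{equation*}
i\,\dot\psi_n=b_{n+1}\psi_{n+1}+b_n\psi_{n-1},\qquad \psi_n(0)=\delta_{n0}.
\end{equation*}
I will check that the ansatz \eqref{eq:psin} satisfies this system. First write $\psi_n=c_n\,(-i u)^n\,s^{2h}$ with $u=\tanh\alpha t$, $s=\mathrm{sech}\,\alpha t$ and $c_n=\sqrt{\binom{n+2h-1}{n}}$. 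Using $\dot u=\alpha s^2$ and $\dot s=-\alpha u s$, we get
\begin{equation*}
\dot\psi_n=\alpha\,\psi_n\big(n s^2/u-2h u\big).
\end{equation*}
On the right-hand side, the ratios $c_{n+1}/c_n=\sqrt{(n+2h)/(n+1)}$ give $b_{n+1}c_{n+1}/c_n=\alpha(n+2h)$ and $b_nc_{n-1}/c_n=\alpha n$. The right side therefore equals $\alpha\psi_n\big[(n+2h)(-iu)+n/(-iu)\big]$. Multiplying the left side by $i$ and using $s^2=1-u^2$ makes the two sides agree term by term. The initial condition holds because $u(0)=0$, and uniqueness of the solution of this linear ODE system finishes the step. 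As a cross-check, the BCH disentangling of $e^{-i\alpha t(L_++L_-)}$ acting on $|h,0\rangle$ (the Perelomov coherent state) gives the same formula. Either route is routine.

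Next I take moduli. Since $|\psi_n|^2=\binom{n+2h-1}{n}\tau^n(1-\tau)^{2h}$ with $\tau=\tanh^2\alpha t$ and $1-\tau=\mathrm{sech}^2$, this reproduces $p_n(\tau)$ in \eqref{eq:nb}. The phase of $\psi_n$ is $(-i)^n=e^{-i\pi n/2}$, which depends only on $n$ and not on $(h,\alpha,t)$. Hence the whole vector $\psi(t)$ is a fixed function of $(h,\tau)$, and $t,\alpha$ enter only through $\tau$.

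The observables then follow. The mean of the negative binomial distribution gives $\C=2h\tau/(1-\tau)$, which matches $2h\sinh^2$. Evaluating at $n=0$ gives $|S|=(1-\tau)^h$. For the negativity, \eqref{eq:fftneg} expresses $N$ as a function of the amplitudes $\psi_\ell$ and $\bar\psi_\ell$ alone, so $N=\mathfrak N_h(\tau)$, and $\chi=N/|S|$ is then immediate. I expect the only delicate point to be the claim that $\mathfrak N_h$ is analytic. The absolute values in $N=\frac1D\sum|W_{xp}|$ give only piecewise analyticity, failing wherever some $W_{xp}$ crosses zero, and the finite-truncation phase space ($D$ prime, $n_{\max}$) has to be fixed. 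I would state that $\mathfrak N_h$ is a fixed real-analytic function of the amplitudes composed with $\tau\mapsto\psi(\tau)$, and is analytic away from a discrete set of sign changes. Only "fixed function of $(h,\tau)$" is needed downstream, and that follows cleanly.
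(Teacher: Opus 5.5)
Your proof is correct. It departs from the paper only in how \eqref{eq:psin} is established.

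\textbf{Route.} The paper's proof identifies \eqref{eq:psin} as the standard Perelomov coherent-state expansion of $e^{-i\alpha t(L_++L_-)}|h,0\rangle$ in $D_h^+$ (your ``cross-check''). It then reads off the moduli, $\C$, $|S|$ and the $(h,\tau)$-dependence of $N$ via \eqref{eq:fftneg}, exactly as you do. You instead verify the ansatz directly against the chain equations built from the Lanczos coefficients \eqref{eq:bn}. Your algebra is right: $b_{n+1}c_{n+1}/c_n=\alpha(n+2h)$, $b_nc_{n-1}/c_n=\alpha n$, and $s^2=1-u^2$ make the two sides agree.

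What your route buys is an argument self-contained in Krylov data, with no group theory. It also avoids the $L_\pm$ matrix elements as printed in Sec.~\ref{sec:cfttheory}, where the raising and lowering factors are interchanged. The coherent-state route is shorter and explains \emph{why} a closed form exists.

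\textbf{Uniqueness.} This is the one step your route must pay for explicitly. On the semi-infinite chain $b_n\simeq\alpha n$ is unbounded, and for infinite linear systems componentwise ODE uniqueness is not automatic. It does hold here:
the Jacobi matrix is essentially self-adjoint by Carleman's criterion ($\sum_n b_n^{-1}=\infty$);
your ansatz is a norm-one $\ell^2$ solution lying in the domain of $H$;
it is strongly differentiable, because its components decay geometrically for $\tau<1$, uniformly on compact time intervals.
Hence $\frac{d}{dt}\big(e^{iHt}\psi(t)\big)=0$, which identifies the ansatz with $e^{-iHt}|K_0\rangle$. You should state this rather than appeal to generic ODE uniqueness.

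\textbf{Analyticity of $\mathfrak N_h$.} Your caveat is justified and sharper than the paper, which asserts analyticity without argument. The absolute values in $N=\frac1D\sum_{x,p}|W_{xp}|$ generically produce kinks wherever some $W_{xp}(\tau)$ changes sign. What is actually proved is that $N$ is a fixed, piecewise analytic function of $(h,\tau)$, and that suffices for the slaving conclusion.
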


\begin{proof}
Eq.~\eqref{eq:psin} is the Perelomov coherent-state expansion for $D_h^+$;
its modulus squared is \eqref{eq:nb}. The discrete Wigner function is a
fixed sesquilinear functional of $\{\psi_n\}$ (Eq.~\eqref{eq:fftneg}), so
$N$ depends on the state only through $(h,\tau)$. $\C$ and $|S|$ follow by
direct summation; the first two maps in \eqref{eq:slaved} are strictly
monotone in $\tau$ on $[0,1)$.
\end{proof}

\begin{corollary}[Population symmetry]\label{cor:symm}
Each observable curve determines $(h,\alpha)$ and hence every other curve:
$\C(t)=2h\sinh^2(\alpha t)$ carries the pair in its small-time coefficient
$2h\alpha^2$ and late-window growth rate $2\alpha$; $-\ln|S(t)|
=2h\ln\cosh(\alpha t)$ carries the same two exponents; and $N(t)$, analytic
in $(h,\tau)$ with $h$-dependent curvature, identifies the pair as well
(operationally confirmed by $N\to h=0.999$, $N\to\alpha=0.998$).
Therefore all four $\sigma$-algebras coincide, and by
Lemma~\ref{lem:r2}(i),
\begin{equation}
R^2_\star(\C\!\to\!N)=R^2_\star(N\!\to\!\C)=
R^2_\star(\C\!\to\!\chi)=R^2_\star(\chi\!\to\!\C)=1
\qquad(\varepsilon=0):
\label{eq:symm}
\end{equation}
the population asymmetry vanishes identically.
\end{corollary}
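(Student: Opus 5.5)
The plan is to show that, at $\varepsilon=0$, each of the four curves $\C(t)$, $N(t)$, $|S(t)|$, $\chi(t)$ (viewed as a function on the time window) determines the label pair $(h,\alpha)$. Since Proposition~\ref{prop:slave} writes every curve as a fixed function of $(h,\tau(t))$ with $\tau=\tanh^2(\alpha t)$, each curve is in turn a measurable function of $(h,\alpha)$. So I will prove, for each observable $X$, that the map $(h,\alpha)\mapsto X(\cdot)$ is injective on the parameter domain with a measurable inverse on its image. Then $\sigma(X)=\sigma(h,\alpha)$ for all four observables. Any target $Y$ is therefore $\sigma(X)$-measurable, and Lemma~\ref{lem:r2}(i) gives $R^2_\star(X\to Y)=1$ in every direction, which is \eqref{eq:symm}. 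Measurability of the inverse is routine: the parameter domain $[0.25,6]\times[0.6,1.8]$ is Polish, and a continuous injection there has a Borel inverse on its image (Lusin--Souslin). I would say this explicitly.

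For the complexity, injectivity is elementary. Expanding $\C(t)=2h\sinh^2(\alpha t)=2h\alpha^2t^2+\tfrac23 h\alpha^4t^4+O(t^6)$, the ratio of the $t^4$ and $t^2$ coefficients gives $\alpha^2/3$, so $\alpha$ is fixed, and then $2h\alpha^2$ fixes $h$. If one prefers to avoid Taylor coefficients, the late-window growth rate $\frac{d}{dt}\ln\C\to2\alpha$ works equally well, as the corollary states. For the survival amplitude the same argument applies to $-\ln|S|=2h\ln\cosh(\alpha t)=h\alpha^2t^2-\tfrac16 h\alpha^4t^4+\dots$.

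The genuine work is $N$ and $\chi$, because $\mathfrak N_h$ is only known as a fixed analytic function. My first attempt is a small-$\tau$ expansion of $\mathfrak N_h(\tau)$ through the FFT form \eqref{eq:fftneg}. At $\tau=0$ the state is the basis vector and $N=1$. To first order in $\sqrt\tau$ only $\psi_0$ and $\psi_1=-i\sqrt{2h}\sqrt\tau\,\psi_0$ contribute, and $\psi_2\propto\sqrt{h(2h+1)}\,\tau$ enters at the next order. The absolute values in $N$ make the expansion only piecewise analytic: the negative-volume pieces switch on as some phase-point weights cross zero. The leading nontrivial term should therefore be a power of $h\tau\approx h\alpha^2t^2$, with a correction whose $h$-dependence comes from the ratio $|\psi_2|^2/|\psi_1|^4=(2h+1)/(4h)$, which is not a function of $h\alpha^2$ alone.

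Reading off the leading coefficient and this first $h$-nonlinear correction would separate $h$ from $\alpha$. I expect this to be \textbf{the main obstacle}: computing two orders of the Wootters negativity with its sign structure, and confirming that the second coefficient is strictly monotone in $h$. As a fallback I would use the late-time regime $\tau\to1$, where $\mathfrak N_h$ should approach a saturation profile whose approach rate is governed by $(1-\tau)\sim4e^{-2\alpha t}$. This would give $\alpha$ from the exponential rate, and $h$ would then follow from the small-time $h\alpha^2$ coefficient, mirroring the complexity argument. The empirical $N\to h=0.999$ would serve only as a consistency check, not as part of the proof.

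Finally, $\chi=\mathfrak N_h(\tau)(1-\tau)^{-h}$ is determined by $(h,\alpha)$ trivially. For the converse I would note that $\ln\chi=\ln\mathfrak N_h(\tau)+h\ln(1-\tau)^{-1}$. The second term contributes $h\alpha^2t^2$ at small $t$ and grows linearly like $2h\alpha t$ at late times. Combining this with the small-$\tau$ structure of $\ln\mathfrak N_h$ obtained above should recover $\alpha$ and $h$ by the same two-coefficient argument, provided the $h\tau$ terms do not cancel exactly. Checking that non-cancellation is the second place where I expect friction.
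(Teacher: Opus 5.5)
Your architecture is the paper's: each curve identifies $(h,\alpha)$, so the four $\sigma$-algebras coincide and Lemma~\ref{lem:r2}(i) gives every $R^2_\star=1$. Your Taylor-coefficient treatment of $\C$ and $-\ln|S|$ is slightly cleaner than the paper's ``late-window growth rate $2\alpha$'', since on $t\in[0,1]$ one only has $\tfrac{d}{dt}\ln\C=2\alpha\coth(\alpha t)$.

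The genuine gap is the one you flag yourself. You never prove that $(h,\alpha)\mapsto N(\cdot)$ and $(h,\alpha)\mapsto\chi(\cdot)$ are injective. Without this, the $N\!\to\!\C$ and $\chi\!\to\!\C$ entries of \eqref{eq:symm} do not follow. The paper does not close this either. It appeals to ``$h$-dependent curvature'' and to the trained score $N\to h=0.999$. A test score only lower-bounds $R^2_\star$ (Remark~\ref{rem:trained}), so it cannot certify $\sigma(N)=\sigma(h,\alpha)$. You are right not to lean on it.

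Your small-$\tau$ plan does close the gap, and the absolute values are harmless. In \eqref{eq:fftneg} the phase-space line labelled $x$ collects only products $\psi_m\bar\psi_n$ with $m+n\equiv 2x-1\pmod D$. By \eqref{eq:psin}, each such product is an $h$-dependent constant times $\tau^{(m+n)/2}(1-\tau)^{2h}$. So, up to wraparound pairs with $m+n=s+D$, the entries on line $s$ factorise as $|W|=\tau^{s/2}(1-\tau)^{2h}|w_s(h,k)|$. No sign switching occurs near $\tau=0$, and
\[
N=(1-\tau)^{2h}\sum_{s=0}^{D-1}F_s(h)\,\tau^{s/2}+O(\tau^{D/2}),\qquad F_s(h)=\frac{1}{D}\sum_{k=0}^{D-1}\big|w_s(h,k)\big|,
\]
with $F_0=1$, $F_1=\kappa_D\sqrt{2h}$, $\kappa_D=\tfrac2D\cot\tfrac{\pi}{2D}$, and
\[
F_2(h)=\frac{1}{D}\sum_{k=0}^{D-1}\Big|2h-2\sqrt{h(2h+1)}\cos\frac{2\pi k}{D}\Big| .
\]
The $h$-nonlinearity enters through the interference of $|\psi_1|^2$ with $\psi_0\bar\psi_2$ on line $s=2$; this is your ratio, square-rooted.

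Hence $N(t)=1+F_1\alpha t+(F_2-2h)\alpha^2t^2+O(t^3)$. The linear coefficient fixes $h\alpha^2$, and the quadratic coefficient divided by the square of the linear one equals $(F_2/h-2)/(2\kappa_D^2)$. Set $r=\sqrt{2+1/h}>1$, so that $F_2/h=\tfrac2D\sum_k|1-r\cos\tfrac{2\pi k}{D}|$. Its $r$-derivative is $\tfrac4D\sum_{\cos(2\pi k/D)>1/r}\cos\tfrac{2\pi k}{D}>0$, using $\sum_k\cos\tfrac{2\pi k}{D}=0$ and the fact that the $k=0$ term is always present. So the ratio is strictly monotone in $h$ at every finite $D$, and $(h,\alpha)$ is recovered.

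For $\chi=N(1-\tau)^{-h}$ the same expansion gives $\chi(t)=1+F_1\alpha t+(F_2-h)\alpha^2t^2+O(t^3)$. Its quadratic coefficient differs from that of $N$ only by $h\alpha^2$, which the linear term already fixes, so no cancellation can occur. The $\tau\to1$ fallback is then unnecessary.
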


\begin{remark}[The measured $\varepsilon=0$ residue is the learnability
floor]\label{rem:floor}
Eq.~\eqref{eq:symm} does \emph{not} assert that trained scores at
$\varepsilon=0$ are exactly $1$ or exactly equal, and they are not:
Sec.~\ref{sec:cftcurves} reports $\C\to\chi$ at $0.985$ against
$\chi\to\C$ at $0.958$, with the two error metrics disagreeing on the
ordering. By \eqref{eq:gapdecomp} and \eqref{eq:symm} this residue is
\emph{pure learnability}: the two directions present different function
classes (inversion through the stiff factor $(1-\tau)^{-h}$ amplifies error
where the map is flat), different target dynamic ranges, and different
extrapolation burdens on held-out $h$. Its magnitude,
\begin{equation}
|\delta|\;\lesssim\;0.03\qquad\text{(metric-inconsistent)},
\label{eq:floor}
\end{equation}
is the systematic floor of the pipeline, measured with the same
architecture, channels, splits and grids as the chaotic strata. Every
interpreted gap of Sec.~\ref{sec:chaos} ($+0.47$ to $+0.77$, two-metric
consistent, seed-stable) stands $15$--$25\times$ above this floor: this is
the quantitative sense in which the control ``vanishes''.
\end{remark}

\subsection*{The self-averaging dichotomy in the chaotic phase}
\label{sec:mechdicho}

\begin{lemma}[Concentration of the moments; heuristic constants]
\label{lem:conc}
In the chaotic phase, write $p_n(t)=\bar p_n(t)+\delta p_n(t)$ with the bar
an ensemble mean at fixed labels. Then
\begin{equation}
\C(t)=\bar\C(t)+O(D^{-1/2}),\qquad
N(t)=\bar N(t)+O(D^{-1/2}),
\label{eq:conc}
\end{equation}
with the two residues mutually and serially weakly correlated: per sample,
both moments are a common smooth envelope plus small noise.
\end{lemma}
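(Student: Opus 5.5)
The plan is to prove \eqref{eq:conc} by Gaussian concentration of measure for Lipschitz functions of the random Hamiltonian. This avoids tracking the Krylov occupations $p_n$ site by site. The point is that absolute $O(D^{-1/2})$ fluctuations are exactly what concentration delivers for any observable with an $O(1)$ Lipschitz constant. The whole burden therefore shifts to bounding Lipschitz constants, and that is where the heuristic constants of the statement enter.

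\textbf{Step 1 (concentration).} Fix the labels $(D,\beta)$ or $(h,\alpha,\varepsilon)$ and regard $\C(t)$ and $N(t)$ as functions $F(W)$ of the GUE draw. For the pure ensembles, $W$ is the random matrix itself; at $\varepsilon>0$ it is $W_{\rm GUE}$ in \eqref{eq:interp}, rescaled by $\varepsilon R_0$. With entries of variance $1/D$, the log-Sobolev (Herbst) inequality gives
\begin{equation}
\Pr\big(|F-\mathbb E F|>r\big)\le 2\exp\!\big(-D r^2/(2L_F^2)\big),
\end{equation}
where $L_F$ is the Lipschitz constant of $F$ with respect to the Hilbert--Schmidt norm. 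Hence $F=\bar F+O(L_F D^{-1/2})$. Since the Krylov problem depends only on the spectrum (Sec.~\ref{sec:channels}), I will factor $F$ through the eigenvalue vector. By Hoffman--Wielandt this map is $1$-Lipschitz, so it suffices to bound $\|\nabla_{E}F\|_2$.

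\textbf{Step 2 (gradients).} Write $\psi_n(t)=\langle K_n|e^{-iHt}|K_0\rangle$. By Proposition-type algebra, $\C(t)=\langle K_0|e^{iHt}\hat n\, e^{-iHt}|K_0\rangle$, with $\hat n$ the Krylov position operator. Differentiating in $E_k$ then gives two contributions: a phase term $\propto t$, and a term from the $E$-dependence of the Lanczos basis, which I would control through the resolvent representation of the $b_n$. For $N$ I use the FFT form \eqref{eq:fftneg}. There $N$ is $D^{-1}$ times an $\ell^1$ norm of a sesquilinear form in $\psi$, so $|\delta N|\le D^{-1}\sum_{x,k}|\delta(\cdot)|$. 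Bounding the latter by Cauchy--Schwarz through $\|\delta\psi\|_2$ gives $L_N\lesssim \|\partial\psi/\partial E\|$. The claim to be established is that, on the time grid used, both gradients are $O(1)$ in $\ell^2$. The mechanism is that the TFD weights $e^{-\beta E_k/2}/\sqrt Z$ spread the derivative over $\sim D$ levels, each contributing $O(D^{-1/2})$.

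\textbf{Step 3 (correlations).} For the mutual and serial weak correlation, I would linearise: $\delta F\simeq\nabla F\cdot\delta E$. The covariance of $\C(t)$ with $N(t')$ is then a bilinear form in the eigenvalue covariance, and for GUE that covariance is the sine-kernel (log-correlated) matrix. Its off-diagonal decay, together with the distinct oscillation frequencies of $\nabla\C(t)$ and $\nabla N(t')$, yields covariances suppressed relative to the variances.

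\textbf{Main obstacle.} The hard step is Step 2, specifically the claim $L_\C=O(1)$. Naively, $\C\le D$, and on the rescaled grid $t=\tau/D$ the phase term carries a factor $\tau\sim D$. Both effects threaten a Lipschitz constant that grows with $D$. I would first try to show a cancellation: the derivative of the phase factor and the derivative of the basis should combine into a commutator $[\hat n,\cdot]$ whose norm is controlled by the Lanczos coefficients $b_n=O(1)$ rather than by $n$. If that fails, I would state this bound explicitly as the heuristic input of the lemma. I would then check it against the measured sample-to-sample spread at fixed $(D,\beta)$ in Fig.~\ref{fig:diagGUE}, consistent with the lemma's label ``heuristic constants''.
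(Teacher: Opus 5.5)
Your route differs from the paper's, which never bounds Lipschitz constants. Its sketch writes $\mathrm{Var}(\C)=\sum_{n,m}nm\,\mathrm{Cov}(p_n,p_m)$ and asserts inverse-dimension suppression of the occupation covariances for delocalised chaotic eigenvectors. It treats $N$ as a central-limit aggregate of $D^2$ weakly correlated bounded terms, and reads the result as \emph{relative} fluctuations of $\C/D$ and $N/\sqrt D$, checked against Fig.~\ref{fig:diagGUE}. Your Lipschitz route has a genuine gap, and it is not a matter of constants: Step 2 is false on the paper's time grid, where the physical time reaches $\tau\sim D$. Holding $\hat n$ fixed, $\partial\C/\partial E_k=2\tau c_k\,\mathrm{Im}\,u_k$ with $u_k=\langle K_0|e^{iH\tau}\hat n\,e^{-iH\tau}|E_k\rangle$. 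At late times $\sum_k|u_k|^2=\langle\psi(\tau)|\hat n^2|\psi(\tau)\rangle\sim D^2$, so generically $\|\nabla_E\C\|_2\sim\tau\sqrt D\sim D^{3/2}$. Herbst, and even the Poincar\'e inequality (which uses only $\mathbb E\|\nabla F\|^2$), then give $\delta\C\lesssim\tau\sim D$: no concentration at all. That falls a factor $\sqrt D$ short even of the correctly normalised target $L_\C=O(D)$. Your literal target $L_\C=O(1)$, i.e.\ absolute $O(D^{-1/2})$ fluctuations of an $O(D)$ quantity, is simply false at late times, where the true fluctuations are $O(\sqrt D)$. For $N$, your Cauchy--Schwarz step drops a factor. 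An $\ell^1$ norm over $D^2$ entries gives $|\delta N|\le2\sqrt D\,\|\delta\psi\|_2$, hence $L_N\lesssim2\sqrt D\,\tau\max_kc_k\sim\tau$ and $\delta N\lesssim\tau/\sqrt D\sim\sqrt D$, which is the size of $N$ itself. The hoped-for cancellation cannot occur. The Lanczos-basis variation contributes $\langle\psi(\tau)|\partial_{E_k}\hat n|\psi(\tau)\rangle$, a quasi-periodic function of $\tau$ bounded by $\|\partial_{E_k}\hat n\|$, so it cannot cancel a term carrying an explicit factor $\tau$. The commutator bound $|d\C/d\tau|=|\langle[H,\hat n]\rangle|=O(\max_n b_n)$ controls only the dilation direction $\delta E\propto E$, which is a reparametrisation of time, not generic directions. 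Adopting $L_\C=O(1)$ as the heuristic input would therefore posit a false bound, not an uncontrolled constant. Separately, the factorisation through eigenvalues in Step 1 holds only for the TFD sets. At $\varepsilon>0$ the weights $|\langle E_k|K_0\rangle|^2$ depend on the eigenvectors of $H(\varepsilon)$, whose Lipschitz dependence on $W$ carries inverse level spacings.

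The underlying reason is that Hoffman--Wielandt concentration is blind to eigenvalue rigidity. It charges each $E_k$ with fluctuations of size $D^{-1/2}$, whereas in the GUE bulk $\delta E_k\sim\sqrt{\log D}/D$. It is therefore trivial at the Heisenberg time for every phase-sensitive observable: for $Z(\tau)=\sum_ke^{-iE_k\tau}$ it gives $|\delta Z|\lesssim\tau\sim D$, against the true $\sqrt D$ on the plateau. The same fact breaks the linearisation $\delta F\simeq\nabla F\cdot\delta E$ of your Step 3 at late times, since $\tau\,\delta E_k\sim\sqrt{\log D}$ is not small. The missing idea is that concentration here comes from self-averaging over decohered phases, not from smoothness in the spectrum. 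Concretely, write $\C(\tau)-\C_\infty=\sum_{k\neq l}c_kc_l\,e^{-i(E_k-E_l)\tau}\langle E_l|\hat n|E_k\rangle$, with $\C_\infty$ the diagonal (time-averaged) part. With decorrelated phases its variance is $\sum_{k\neq l}w_kw_l|\langle E_l|\hat n|E_k\rangle|^2\le\|w\|_\infty^2\,\mathrm{Tr}\,\hat n^2$. This is $O(D)$ when the weights spread over $\sim D$ levels, i.e.\ relative fluctuations $O(D^{-1/2})$. The analogous count over the $D^2$ phase-space terms gives the statement for $N$. This is the paper's covariance-suppression heuristic made explicit. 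The assumption that the scatter in Fig.~\ref{fig:diagGUE} actually tests is phase decorrelation, not an $O(1)$ Lipschitz constant.
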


\begin{proof}[Proof sketch (heuristic; variances measured)]
$\C=\sum_n n\,p_n$ is linear in the occupations, so
$\mathrm{Var}(\C)=\sum_{n,m}nm\,\mathrm{Cov}(p_n,p_m)$; for delocalised
chaotic eigenvectors the covariances carry inverse-dimension suppression,
giving relative fluctuations $O(D^{-1/2})$. $N=\frac1D\sum_{x,p}|W_{xp}|$
aggregates $D^2$ bounded, weakly correlated terms and concentrates by the
same central-limit reasoning. Constants are not controlled here --- the
summands are correlated --- but the conclusion is verified directly: the
sample-to-sample scatter of $\C/D$ and $N/\sqrt D$ at fixed labels is at the
few-percent level and shrinks with $D$ (Fig.~\ref{fig:diagGUE}).
\end{proof}

\begin{remark}\label{rem:concfacts}
Two measured facts are Lemma~\ref{lem:conc} in action: (a) the pooled-GUE
raw-$N$ gap ($+0.019$) is removed by target smoothing
(Sec.~\ref{sec:smooth}) --- the $O(D^{-1/2})$ residue is present in the
target, absent from the input, hence unlearnable in either direction; (b)
${\rm gap}_N\to0$ deep in chaos, because two smooth functionals of the same
envelope are informationally equivalent up to noise.
\end{remark}

\begin{lemma}[Speckle universality of the survival amplitude]
\label{lem:speckle}
Let $S(t)=\sum_n w_ne^{-iE_nt}$ with weights $w_n=e^{-\beta E_n}/Z(\beta)$,
so that $|S|^2$ is the filtered per-sample SFF. For times beyond the dip,
where the phases $\{E_nt\ \mathrm{mod}\ 2\pi\}$ decohere and provided many
levels participate --- the participation ratio $\big(\sum_nw_n^2\big)^{-1}$
is large, as holds across our $\beta$ range --- the Lindeberg central limit
theorem gives $S\to$ complex Gaussian with variance
$\Sigma\equiv\sum_nw_n^2$ (the plateau scale). Hence the normalised
intensity $I=|S|^2/\Sigma$ is exponentially distributed, and
\begin{equation}
\mathrm{Var}\big(\ln|S(t)|^2\big)\;\xrightarrow{\ \text{late }t\ }\;
\psi_1(1)=\frac{\pi^2}{6}\simeq1.645 ,
\label{eq:speckle}
\end{equation}
independent of $D$, $\beta$, and the plateau height.
\end{lemma}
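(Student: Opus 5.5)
The plan has three stages: establish the complex Gaussian limit of $S(t)$, deduce the exponential law of the normalised intensity, and then compute the variance of its logarithm.

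\textbf{Stage 1: the Gaussian limit.} Treat $S(t)=\sum_n w_n e^{-iE_nt}$ at fixed late $t$ as a sum of independent (or weakly dependent) unit-modulus phasors with deterministic weights. The working hypothesis is that beyond the dip the phases $\theta_n=E_nt \bmod 2\pi$ are effectively uniform and decorrelated. Then $\mathbb E[e^{i\theta_n}]\to0$ and $\mathbb E[e^{2i\theta_n}]\to0$, so the real and imaginary parts of $S$ have zero mean, equal variances $\Sigma/2$, and vanishing covariance.

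To apply the Lindeberg CLT to the two-dimensional vector $(\mathrm{Re}\,S,\mathrm{Im}\,S)/\sqrt{\Sigma}$, I will check the Lindeberg condition in its sufficient form $\max_n w_n^2/\Sigma\to0$. This is implied by a large participation ratio, since $\max_n w_n^2\le \sum_n w_n^4/\max_n w_n^2$. More simply, each summand is bounded by $w_n$, so $\max_n w_n/\sqrt\Sigma\to0$ suffices. For the TFD weights over our $\beta$ range, $w_n\sim e^{-\beta E_n}/Z$ is spread over $O(D)$ levels, so the condition holds. I will also note that the weights should be taken as the Boltzmann factors the lemma states; the $\beta/2$ versus $\beta$ convention only rescales $\beta$.

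\textbf{Stage 2: the exponential law.} A circularly symmetric complex Gaussian with $\mathbb E|S|^2=\Sigma$ gives $|S|^2/\Sigma$ as a $\chi^2_2/2$ variable, i.e.\ $I\sim\mathrm{Exp}(1)$. The convergence of $\ln I$ follows by the continuous mapping theorem. The logarithm is singular at $0$, but the limiting density has no atom there, so $\ln I$ converges in distribution.

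\textbf{Stage 3: the variance of $\ln I$.} For $I\sim\mathrm{Exp}(1)$, $\mathbb E[I^s]=\Gamma(1+s)$. The cumulant generating function of $\ln I$ is therefore $\ln\Gamma(1+s)$, and its second derivative at $s=0$ gives $\mathrm{Var}(\ln I)=\psi_1(1)=\zeta(2)=\pi^2/6$. Since $\ln|S|^2=\ln I+\ln\Sigma$ and $\ln\Sigma$ is deterministic at fixed labels, the variance is independent of $\Sigma$, and hence of $D$, $\beta$ and the plateau height.

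\textbf{Main obstacles.} I expect two steps to be the hardest.

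\emph{Phase decoherence.} The levels are strongly correlated random-matrix eigenvalues, not independent ones, so I cannot simply invoke independent uniform phases. My first approach is to condition on the spectrum's smooth part and appeal to the fact that at times $\tau\gg$ (Heisenberg time)$\times$(something) the phases $E_nt \bmod 2\pi$ equidistribute. The cleaner alternative is to state this explicitly as a hypothesis, which the lemma's wording (``where the phases decohere'') already permits, and rely on a CLT for weakly dependent arrays. Here the dependence enters only through the pair correlation. At late times the SFF ramp term, of relative size $\tau/\tau_H$ below the plateau, controls the residual correlation of the phasors, and it saturates to the disconnected value. I would use that to argue that the covariance structure matches the independent case.

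\emph{Upgrading to convergence of variance.} Distributional convergence does not by itself give convergence of $\mathrm{Var}(\ln|S|^2)$, which is the claimed statement. This needs uniform integrability of $(\ln I)^2$, which is delicate near $I=0$. My approach is to bound the small-ball probability $\mathbb P(I<\epsilon)\lesssim\epsilon$ uniformly, using a Berry--Esseen or characteristic-function smoothing estimate for the two-dimensional sum. Failing that, I would state the lemma at the level of the limiting law, with the variance understood as that of the limit, and confirm numerically.
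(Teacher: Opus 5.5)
Your main line is correct under the lemma's decoherence hypothesis, and it is essentially the paper's proof. The paper gives only your Stage~3: it evaluates $\mathbb E[\ln I]=-\gamma$ and $\mathbb E[(\ln I)^2]=\gamma^2+\pi^2/6$ directly. These are $\Gamma'(1)$ and $\Gamma''(1)$, so this is your second derivative of $\ln\Gamma(1+s)$ written out. It then notes that $\ln\Sigma$ drops out and checks the value numerically. The Gaussian limit is only asserted in the statement, and the passage from convergence in law to convergence of the variance is not addressed.

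Four points in the material you add beyond the paper need repair.

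(i) Your inequality $\max_n w_n^2\le\sum_n w_n^4/\max_n w_n^2$ is true, but it does not show that a large $(\sum_n w_n^2)^{-1}$ forces $\max_n w_n^2/\Sigma\to0$. That implication is false, and the lemma's wording has the same looseness. Take $w_1=M^{-1/2}$ and spread the remaining weight equally over $M$ levels. Then $\Sigma\simeq2/M\to0$ while $w_1^2/\Sigma\to\tfrac12$, and $S/\sqrt\Sigma$ tends to a complex Gaussian convolved with the uniform law on a circle. The condition you actually need is $\sum_n w_n^4/\Sigma^2\to0$. This is equivalent to the $\max_n w_n/\sqrt\Sigma\to0$ that you then verify directly.

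(ii) Your SFF-based argument that the phasor covariance matches the independent case holds only past the Heisenberg time. Between the dip and $\tau_H$, the off-diagonal correlations are not weak in aggregate: rigidity is precisely what holds the ramp, and hence $\mathbb E|S|^2$, below $\Sigma$. So identifying the variance with $\Sigma$ fails there. Circular Gaussianity still holds on the ramp, with the ramp value as variance. However, it comes from a CLT for linear statistics of the eigenvalue process (Soshnikov-type, for the determinantal GUE), not from a weak-dependence CLT. The value $\pi^2/6$ survives because it is independent of the scale.

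(iii) $\ln\Sigma$ is not deterministic at fixed labels, because the $w_n$ are built from the same sample spectrum. You need $\mathrm{Var}(\ln\Sigma)\to0$. This holds because $\Sigma$ is a ratio of smooth, self-averaging linear statistics, and Cauchy--Schwarz then disposes of the cross term. The paper makes this step tacitly as well.

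(iv) Berry--Esseen alone cannot give your small-ball bound. Its additive error $O(\sum_n w_n^3/\Sigma^{3/2})$ does not vanish as $\epsilon\to0$, so it cannot control $\mathbb E[(\ln I)^2;\,I<\epsilon]$ uniformly. Use instead the density bound that follows from the characteristic function $\prod_n J_0(w_n|k|)$.
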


\begin{proof}
For $I\sim\mathrm{Exp}(1)$,
$\mathbb E[\ln I]=\int_0^\infty e^{-x}\ln x\,dx=-\gamma$ and
$\mathbb E[(\ln I)^2]=\int_0^\infty e^{-x}\ln^2x\,dx=\gamma^2+\pi^2/6$,
whence $\mathrm{Var}(\ln I)=\pi^2/6=\psi_1(1)$; the additive constant
$\ln\Sigma$ does not affect the variance.\\
\emph{Numerical check:} late-time
GUE--TFD survival amplitudes in our pipeline ($D=101$, $300$ realisations)
give $\mathrm{Var}(\ln|S|^2)=1.649$.
\end{proof}

These $O(1)$ fluctuations are \emph{not} the noise of \eqref{eq:conc}: they
are the sample's fine spectral correlations --- the information that
Sec.~\ref{sec:sff} showed is inaccessible to the moments
($R^2(\C\to\log{\rm SFF})=0.185$).

\subsection{The asymmetry bound}
\label{sec:mechbound}

\begin{definition}[Decomposition and its two ratios]\label{def:decomp}
In the chaotic phase write, using $\chi=N/|S|$ and Lemma~\ref{lem:conc},
\begin{equation}
Y\equiv\log\chi(t)\;=\;
\underbrace{\log\bar N(t)-\log\overline{|S|}(t)}_{A(t)\ \text{(smooth,
envelope-slaved)}}
\;+\;\underbrace{\big[-\delta\ln|S(t)|\big]}_{B(t)\ \text{(speckle,
Lemma~\ref{lem:speckle})}}
\;+\;\xi_D ,\qquad \xi_D=O(D^{-1/2}),
\label{eq:decomp}
\end{equation}
(natural logarithms; the base is immaterial for $R^2$, which is invariant
under affine maps of the target), and define the speckle share and its
$\C$-predictable fraction,
\begin{equation}
F\equiv\frac{\mathrm{Var}(B)}{\mathrm{Var}(Y)} ,\qquad
\rho^2\equiv\frac{\mathrm{Var}\big(\mathbb E[B\,|\,\C]\big)}
{\mathrm{Var}(B)} .
\end{equation}
Two hypotheses, both empirically controlled:
\begin{itemize}
\item[\textnormal{\textbf{(D1)}}] $\mathrm{Cov}(A,B)\approx0$, and $A$ is
$\sigma(\C)$-measurable up to the $O(D^{-1/2})$ residue of
Lemma~\ref{lem:conc} (the envelope is common to both observables): envelope
and spectral fine structure decouple (mean density vs.\ fluctuations ---
standard RMT structure), so
$\mathrm{Var}(Y)\simeq\mathrm{Var}(A)+\mathrm{Var}(B)$.
\item[\textnormal{\textbf{(D2)}}] $\rho^2\le
R^2_\star(\C\!\to\!\log{\rm SFF})\lesssim0.19$: $B$ is (one half of) the log
filtered SFF, so its $\C$-predictable share is bounded by the measured SFF
score of Sec.~\ref{sec:sfffail}.
\end{itemize}
\end{definition}

\begin{theorem}[Asymmetry bound]\label{thm:gap}
Under \textnormal{(D1)}--\textnormal{(D2)}, for any predictors whatsoever,
\begin{equation}
R^2_\star(\C\!\to\!\chi)\;\le\;1-(1-\rho^2)\,F+O(D^{-1/2}) ,
\qquad
R^2_\star(\chi\!\to\!\C)\;\ge\;1-\eta ,
\label{eq:bound}
\end{equation}
where $\eta$ collects the $O(D^{-1/2})$ noise of \eqref{eq:conc} and the
low-pass approximation error defined in the proof. Consequently, for the
trained scores,
\begin{equation}
\boxed{\;\;{\rm gap}_\chi\;\ge\;(1-\rho^2)\,F\;-\;\eta\;-\;|\delta|\;\;}
\label{eq:gap}
\end{equation}
with $|\delta|\lesssim0.03$ the measured floor \eqref{eq:floor}.
\end{theorem}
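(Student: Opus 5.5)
The plan is to prove the two inequalities in \eqref{eq:bound} separately at the population level, using Lemma~\ref{lem:r2} together with the decomposition \eqref{eq:decomp}. I would then pass to trained scores through the gap decomposition \eqref{eq:gapdecomp} of Remark~\ref{rem:trained}, with the floor \eqref{eq:floor}.

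\emph{Upper bound on $R^2_\star(\C\to\chi)$.} By Lemma~\ref{lem:r2}(iii) the logarithmic target channel is an invertible reparametrisation, so I work with $Y=\log\chi$. Since $R^2_\star$ is invariant under affine maps of the target, the base of the logarithm does not matter. Condition on $\sigma(\C)$ and split $Y=A+B+\xi_D$. By (D1), $A$ is $\sigma(\C)$-measurable up to $O(D^{-1/2})$, so $\mathrm{Var}(A\,|\,\C)=O(D^{-1/2})$ after normalising by $\mathrm{Var}(Y)$. The key step is a lower bound on the residual,
\[
\mathbb E[\mathrm{Var}(Y|\C)]\;\ge\;\mathbb E[\mathrm{Var}(B|\C)]-O(D^{-1/2})\cdot\mathrm{Var}(Y).
\]
I would obtain it by writing $Y-\mathbb E[Y|\C]=(B-\mathbb E[B|\C])+r$, where $r$ collects the $A$- and $\xi_D$-residues, and bounding the cross term by Cauchy--Schwarz. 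By the law of total variance, $\mathbb E[\mathrm{Var}(B|\C)]=\mathrm{Var}(B)-\mathrm{Var}(\mathbb E[B|\C])=(1-\rho^2)\mathrm{Var}(B)$. Dividing by $\mathrm{Var}(Y)$ and using $\mathrm{Var}(Y)\simeq\mathrm{Var}(A)+\mathrm{Var}(B)$ from (D1) gives $1-R^2_\star\ge(1-\rho^2)F-O(D^{-1/2})$, which is the first inequality. Hypothesis (D2) then only fixes the numerical size, $1-\rho^2\gtrsim0.81$.

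\emph{Lower bound on $R^2_\star(\chi\to\C)$.} I would construct an explicit predictor, since $R^2_\star$ dominates the score of any measurable predictor by Lemma~\ref{lem:r2}(i). The idea is that $\chi$ contains the envelope: $\log\chi=A+B+\xi_D$ with $A$ smooth in $t$ and slaved to the common envelope, while $B$ is late-time speckle decorrelating on the inverse-level-spacing scale. A low-pass filter $L$ in $t$ therefore gives $L[\log\chi]=A+(\text{residual})$. By Lemma~\ref{lem:conc}, $\C=\bar\C+O(D^{-1/2})$ with $\bar\C$ a function of the same envelope labels, so $g(L[\log\chi])$ with $g$ the envelope map predicts $\C$. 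The residual variance is the $O(D^{-1/2})$ concentration noise plus the filtering leakage of $B$; I define $\eta$ as exactly this sum, which gives $R^2_\star(\chi\to\C)\ge1-\eta$.

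\emph{Combining.} Subtracting the two inequalities gives a lower bound of $(1-\rho^2)F-\eta$ on the population gap, with the $O(D^{-1/2})$ absorbed into $\eta$. Inserting this into \eqref{eq:gapdecomp} with $|\delta|$ bounded by \eqref{eq:floor} yields \eqref{eq:gap}. The floor is legitimate here because it was measured with the same pipeline, per Remark~\ref{rem:floor}.

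\emph{Main obstacle.} The hard step is the second inequality. It needs the envelope to be identifiable from a single noisy $\chi$ curve, that is, the map from the envelope to $\bar\C$ must be well defined and stable. It also needs $B$ to be filterable without bias: speckle has mean $-\gamma/2$ rather than zero, which only shifts $A$ by a constant, but at early times, before the dip, $B$ is not yet speckle. I would handle this by including the early window in $A$, since $|S|$ is smooth there, and by leaving the uncontrolled part explicitly inside $\eta$ rather than bounding it, consistent with the heuristic status of Lemma~\ref{lem:conc}.
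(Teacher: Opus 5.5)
Your proposal follows the paper's proof essentially step for step. The forward bound conditions $Y=\log\chi$ on $\sigma(\C)$, discards the $A$ and $\xi_D$ parts via (D1), and applies the law of total variance. Your Cauchy--Schwarz treatment of the cross term makes explicit what the paper leaves implicit, but it yields $O(D^{-1/2})$ only if the residues have variance $O(D^{-1})$ (size $O(D^{-1/2})$), not the variance $O(D^{-1/2})$ you wrote. The reverse bound uses a linear low-pass filter (the paper's moving average $P_w$ with $t_c\ll w\ll t_{\rm env}$) followed by Lemma~\ref{lem:conc}, and the two are combined through \eqref{eq:gapdecomp} and the floor \eqref{eq:floor}.

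One correction concerns your appeal to Lemma~\ref{lem:r2}(iii). It covers invertible maps of the \emph{input}, so it does justify feeding $\log\chi$ in the reverse direction. But $R^2_\star$ is invariant only under affine, not general invertible, maps of the \emph{target}. In the forward direction, replacing $\chi$ by $\log\chi$ is therefore a convention fixing what the bound refers to, not a consequence of (iii). This is how the paper's proof reads it, working directly with $Y$ of Definition~\ref{def:decomp}.
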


\begin{proof}
\emph{Forward.} By Definition~\ref{def:rstar},
$R^2_\star(\C\to Y)=1-\mathbb E[\mathrm{Var}(Y|\C)]/\mathrm{Var}(Y)$. By
(D1) the components $A$ and $\xi_D$ are $\sigma(\C)$-measurable up to
$O(D^{-1/2})$, so conditioning on $\C$ leaves the fluctuation part intact up
to that accuracy:
$\mathbb E[\mathrm{Var}(Y|\C)]\ge\mathbb E[\mathrm{Var}(B|\C)]-O(D^{-1/2})$.
The law of total variance gives exactly
$\mathbb E[\mathrm{Var}(B|\C)]=\mathrm{Var}(B)-\mathrm{Var}(\mathbb E[B|\C])
=(1-\rho^2)\,\mathrm{Var}(B)$. With
$\mathrm{Var}(Y)\simeq\mathrm{Var}(A)+\mathrm{Var}(B)$ from (D1), the first
inequality follows upon dividing.

\emph{Reverse.} Let $t_c$ be the speckle correlation time (inverse spectral
bandwidth) and $t_{\rm env}$ the envelope scale, with $t_c\ll t_{\rm env}$
in the chaotic phase. Let $P_w$ be the moving average over a window $w$ with
$t_c\ll w\ll t_{\rm env}$ --- a \emph{linear} map on the input curve, hence
implementable by the first layer of the network. Averaging $\sim w/t_c$
weakly correlated speckle values suppresses them,
$\mathrm{Var}(P_wB)\lesssim (t_c/w)\,\mathrm{Var}(B)$, while
$P_wA=A+O\big((w/t_{\rm env})^2\big)$. Thus
$P_wY=A+{\rm small}$, the envelope determines $\bar\C$ (both are fixed
smooth functionals of the same macroscopic data), and
$\C=\bar\C+O(D^{-1/2})$ by Lemma~\ref{lem:conc}. Collecting the three error
sources into $\eta$ gives the second inequality. Subtracting the two bounds
and passing to trained scores via \eqref{eq:gapdecomp} yields
\eqref{eq:gap}.
\end{proof}

\begin{corollary}[Endpoints]\label{cor:endpoints}
At $\varepsilon=0$, Proposition~\ref{prop:slave} gives $B\equiv0$, hence
$F=0$: the bound collapses and the measured gap sits at the floor
\eqref{eq:floor} --- as observed. Deep in chaos, Lemma~\ref{lem:speckle}
assigns log-fluctuation power $\tfrac14\psi_1(1)=\pi^2/24\simeq0.41$ per
late-time point to $B=-\tfrac12\,\delta\ln|S|^2$, so with $f_{\rm post}$ the
post-dip fraction of the time grid,
\begin{equation}
F\;\approx\;\frac{f_{\rm post}\;(\pi^2/24)}
{\big\langle \mathrm{Var}\big(\ln\chi(t)\big)\big\rangle_{t}}\;=\;O(1),
\label{eq:Fformula}
\end{equation}
where $\mathrm{Var}$ is the variance across samples at fixed $t$ and
$\langle\cdot\rangle_t$ the average over the time grid --- an expression
evaluable directly from the stored data. We have carried out this evaluation
(Appendix~\ref{app:mech}): with the envelope $A$ removed by conditioning on
$\C$, the speckle share $F(\varepsilon)$ collapses to $\approx0.16$ at
$\varepsilon=0$ and rises to $F\simeq0.95$--$1.0$ once the statistics are
GUE (Fig.~\ref{fig:dscaling}c), confirming --- indeed slightly exceeding ---
the value
$F\approx0.75$--$0.88$ that inverting the measured gap requires, and
verifying that $\log\chi$ is speckle-dominated deep in chaos. The gap falls
short of $1$ precisely because the smooth share $A$ \emph{is} predictable
from $\C$, which is also why $R^2(\C\to\chi)$ stays well above zero.
\end{corollary}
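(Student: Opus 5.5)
The corollary has two endpoint claims, and I would treat them separately. Each one reduces to evaluating the quantities $F$ and $\rho^2$ that enter the boxed bound \eqref{eq:gap} of Theorem~\ref{thm:gap}.

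\emph{Integrable endpoint.} At $\varepsilon=0$ I invoke Proposition~\ref{prop:slave}. There $|S|=(1-\tau)^h$ is a deterministic function of the labels $(h,\alpha)$ and of $t$, so at fixed labels its ensemble mean coincides with itself. Hence $\delta\ln|S|\equiv0$, which gives $B\equiv0$ in the decomposition \eqref{eq:decomp} and therefore $F=0$. The right-hand side of \eqref{eq:gap} then becomes $-\eta-|\delta|\le 0$, so the bound is vacuous. The measured gap is then controlled only by Corollary~\ref{cor:symm}, where the population gap vanishes, together with Remark~\ref{rem:floor}, where the residue is at most the floor \eqref{eq:floor}. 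This is just a chain of citations.

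\emph{Chaotic endpoint.} Here I compute $\mathrm{Var}(B)$ pointwise in time. Write $B=-\delta\ln|S|=-\tfrac12\,\delta\ln|S|^2$. At post-dip times, Lemma~\ref{lem:speckle} gives $\mathrm{Var}(\ln|S|^2)\to\pi^2/6$, so $\mathrm{Var}(B)\to\tfrac14\cdot\pi^2/6=\pi^2/24$. At pre-dip times the phases have not decohered and $|S|$ tracks its envelope, so I set $\mathrm{Var}(B)\approx0$ there. I then interpret $F$ as a time-pooled ratio: average $\mathrm{Var}(B(t))$ and $\mathrm{Var}(Y(t))$ over the grid, which is the natural reading for the per-sample curve $R^2_{\rm vw}$, a variance-weighted pooled quantity. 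This yields \eqref{eq:Fformula}, with numerator $f_{\rm post}\,\pi^2/24$.

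To justify the claim $F=O(1)$ I need two facts. First, $f_{\rm post}$ is $O(1)$, because on the rescaled grid $t\in[0,1]$ the dip sits near $t\lesssim0.1$. Second, the denominator is not parametrically larger than the numerator. I would argue the second point from (D1): the denominator equals $\mathrm{Var}(A)+\mathrm{Var}(B)$, and at fixed labels within a stratum $\mathrm{Var}(A)$ is bounded by the label-induced spread of the smooth envelope. That spread stays finite, and it shrinks as the labels wash out, as the left panel of Fig.~\ref{fig:headline} shows.

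The numerical confirmation comes from the appendix evaluation of $F(\varepsilon)$. The consistency check inverts \eqref{eq:gap} with $\rho^2\lesssim0.19$ from (D2) and the measured gaps $0.6$--$0.77$. Neglecting $\eta$ and $|\delta|$, this requires $F\approx\mathrm{gap}/(1-\rho^2)\approx0.75$--$0.88$, which I compare with the measured value.

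The main obstacle is the denominator. The statement $F=O(1)$ is not a theorem-level bound without control of $\mathrm{Var}(A)$, and Lemma~\ref{lem:speckle} is only asymptotic in late time and in participation ratio. For that reason I would state this half explicitly as an order-of-magnitude identification, with the numerical value deferred to the measured data, rather than attempt a rigorous inequality. I would also need to reconcile the pooled-$F$ convention with the variance-weighted definition of $R^2_{\rm vw}$ so that Theorem~\ref{thm:gap} applies verbatim.
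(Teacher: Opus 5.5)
Your proposal follows the paper's own reasoning step for step: at $\varepsilon=0$ the slaved, deterministic $|S|=(1-\tau)^h$ gives $B\equiv0$, so $F=0$ and the bound collapses onto the floor of Remark~\ref{rem:floor}; in deep chaos $\mathrm{Var}(B)=\tfrac14\psi_1(1)=\pi^2/24$ on post-dip points, pooled over the grid, with $F=O(1)$ resting on the measured data rather than on any analytic control of $\mathrm{Var}(A)$. Your remark that this pooling is the one that matches $\rvw$ is a useful clarification the paper leaves implicit.

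One caution applies equally to your proposal and to the paper's wording. Because \eqref{eq:gap} bounds the gap from below, inverting it yields a ceiling $F\lesssim(\mathrm{gap}+\eta+|\delta|)/(1-\rho^2)$, not a required value. Also, taking $\rho^2=0.19$, the quoted $0.75$--$0.88$ corresponds to gaps $\approx0.61$--$0.71$; your range $0.6$--$0.77$ would give values up to $0.95$. A measured $F\simeq0.95$--$1.0$ above that ceiling is therefore compatible with Theorem~\ref{thm:gap} only if $\eta+|\delta|\gtrsim(1-\rho^2)F-\mathrm{gap}\approx0.06$--$0.2$. That is a consistency condition on $\eta$, not a confirmation.
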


\subsection{Crossover, interpretation, and predictions}
\label{sec:mechpred}

\paragraph{Crossover.} $F(\varepsilon)$ rises monotonically as random-matrix
correlations establish the decohered-phasor regime of
Lemma~\ref{lem:speckle} over a growing fraction of the time grid --- the
same physics certified by $\langle r\rangle(\varepsilon)$. We do not derive
the functional form of $F(\varepsilon)$; the mechanism fixes the endpoints
(Corollary~\ref{cor:endpoints}) and the monotonicity, and the data supply
the curve.

\paragraph{Why $N$ alone fails.} The aggregation in $N$ is exactly what the
mechanism forbids: summing $|W|$ over $D^2$ phase-space points averages the
microscopic signal away (Lemma~\ref{lem:conc}). Division by the single
amplitude $|S|$ --- the defining operation of $\chi$ \cite{Basu:2026inf} ---
reinstates it: $\chi$ is a macroscopic observable modulated by one
microscopic amplitude, and it is the modulation, not the negativity per se,
that carries the chaotic-regime surplus.

\paragraph{Scope.} The heuristic content is confined to: the constants in
Lemma~\ref{lem:conc} (correlated summands; variances measured), the phasor
limit in Lemma~\ref{lem:speckle} (standard RMT; verified at the percent
level), hypothesis (D1) (RMT envelope/fluctuation structure; supported by
the measured lossiness), and the identification of trained scores with
$R^2_\star$ up to the measured floor (Remark~\ref{rem:floor}). Within this
scope the mechanism makes three falsifiable predictions:
\begin{enumerate}
\item \textbf{Time-window dependence.} By \eqref{eq:Fformula}, $F$ ---
hence the saturated gap --- grows with $f_{\rm post}$: extending $t_f$
should raise the plateau of ${\rm gap}_\chi(\varepsilon)$; truncating to
pre-dip times should collapse it toward the floor.
\item \textbf{$D$-persistence (confirmed).} The speckle power
\eqref{eq:speckle} is $D$-independent while $\eta,|\delta|$ shrink with
statistics: the surplus should persist or sharpen with dimension, whereas a
finite-size artifact would shrink. A dedicated scan over $21$ dimensions $D\in[37,601]$
(Appendix~\ref{app:mech}, Fig.~\ref{fig:dscaling}) shows the saturated gap growing
\emph{monotonically} and then flattening toward a plateau, from $+0.47$ at
$D=37$ to $+0.80$ near $D\simeq500$, while the $\rho^2$ bound of (D2)
tightens with $D$ and the speckle constant stays flat at $\psi_1(1)$. The
surplus sharpens with dimension and saturates --- the signature of a
thermodynamic-limit effect, not a finite-size artifact.
\item \textbf{Ensemble portability.} The argument uses no unitary-class
input: the phasor limit, and with it the speckle power \eqref{eq:speckle},
is unchanged for time-reversal-invariant spectra. A GOE-deformed
interpolation is therefore predicted to show a quantitatively similar
saturated gap --- a sharper statement than mere qualitative portability.
\end{enumerate}

\begin{remark}[Model-independence of the surplus]\label{rem:modelfree}
The bound of Theorem~\ref{thm:gap} constrains the \emph{population} score
$R^2_\star$, attained by the conditional mean, and therefore holds for
\emph{any} consistent regressor --- the asymmetry is a property of the joint
distribution of $(\C,\chi)$, not of the residual network used to estimate
it. We confirmed this directly: on a synthetic model with the paper's
structure ($\C=$ shared envelope $+$ negligible noise, $\chi=$ same envelope
$+$ speckle), ridge regression and $k$-nearest-neighbours --- two model
classes structurally unrelated to the residual MLP --- both reproduce a
large positive gap ($+0.56$ and $+0.53$), matching the residual network.
A different ML formalism therefore cannot overturn the result that $\chi$
carries the chaotic-regime surplus; it can only approach the same
information-theoretic ceiling from below. What a stronger decoder
\emph{could} change is the ceiling task of Sec.~\ref{sec:lanczos}: the
$\rvw\simeq0$ there is a statement at fixed model depth, and a
fundamentally more expressive class might extract more of the fine SFF from
$\{a_n,b_n\}$ --- the one place in this paper where the model class, rather
than the observable, sets the score.
\end{remark}

We regard Definition~\ref{def:decomp} and Theorem~\ref{thm:gap} as the
analytical content of the headline: \emph{\textbf{the normalised negativity outperforms both moments in chaos} because it is the unique channel among the three that retains a non-self-averaging microscopic amplitude; the information it thereby carries is exactly the spectral fine structure that
the moments provably compress away; and the residual asymmetry of the
integrable control is a measured learnability floor, an order of magnitude
below the effect.}
\newpage

\section{Discussion}
\label{sec:disc}

\subsection*{Where each observable earns its keep}

The comparative design permits a practical summary:

\begin{itemize}
\item \textbf{Spread complexity $\C(t)$} is the cheapest probe and the best
\emph{classifier}: it fully encodes the TFD temperature ($0.999$), carries
the coarse $e^{S}$ plateau ($0.861$, mostly from early times), and
identifies the Dyson class of a single sample ($\to0.98$ at large $D$).
Use it for macroscopic labels and universality classes.
\item \textbf{Wigner negativity $N(t)$} matches $\C$ on macroscopic labels
and is the \emph{degeneracy-resolver}: in the CFT sector it disentangles
$(h,\alpha)$ an order of magnitude better than $\C$ ($0.999$ vs $0.984$).
Use it when distinct microscopic parameters produce near-degenerate
complexity curves.
\item \textbf{Normalised negativity} \textbf{$\chi(t)$ is the \emph{chaos probe}}: it
is the unique channel whose informational surplus over $\C$ switches on with chaos ($+0.33\to+0.77$), because the survival-amplitude normalisation retains fine-grained return information. Use it where the moment tower
decouples --- which is exactly the chaotic regime it was designed for
\cite{Basu:2026inf}.
\item \textbf{The SFF} is irreplaceable: its fine structure is recoverable
from none of the above, under no admissible smoothing. Spectral rigidity must
be measured, not reconstructed.
\end{itemize}

\subsection*{What is new}

(i) The inverse/cross information maps of Krylov observables, complementing
the forward map of \cite{Bak:2025mlq}. (ii) The quantitative lossiness
statement for the SFF, with the smoothing bound that makes it physical.
(iii) The information budget in time for the $e^{S}$ plateau. (iv)
Universality-class identification from a single complexity curve, sharpening
with $D$. (v) The integrable-control/chaos-interpolation design, and the
headline result that the $\chi$-surplus is switched on by chaos. (vi)
Methodologically: the FFT form \eqref{eq:fftneg} of the discrete Wigner
negativity, and a documented audit trail (dynamic-range artifact, selection
bias, width instability) with the protocol that neutralises each.

\subsection*{What it is not}

These are operational statements at finite $D$ with a specific (deliberately
minimal, capacity-robust) model class; $R^2$ lower-bounds information and
never upper-bounds it. The interpolation is a random-matrix proxy for
finite-entropy holography, presented as such. No claim is made beyond the
primary SL$(2,\mathbb R)$ sector on the CFT side.

\subsection*{Outlook}

Four directions seem most valuable. An analytic relation between
${\rm gap}_\chi$ and $\langle r\rangle$ suggested by
Fig.~\ref{fig:headline}. The direct evaluation of the speckle share $F$ from
the stored interpolation data via \eqref{eq:Fformula}, closing the
consistency loop of Corollary~\ref{cor:endpoints}, together with the
extension of the Lanczos-ceiling routing of Sec.~\ref{sec:lanczos} to the
GOE and Poisson ensembles. The $D$-scaling of the saturated $\chi$-surplus,
which Prediction~2 of Sec.~\ref{sec:mechpred} turns into a test of the
mechanism itself. And the GSE ensemble, which requires even dimensions and
hence an extension of the odd-prime Wigner construction.

\acknowledgments
We thank my Prof. Onkar Parrikar for discussions and guidance on the smoothen protocol and CFT part of the project . Special thanks to Pruthvi Suriyadevara (DHEP,TIFR) for a brief and helpful ML discussion related to this project.
I am also grateful to Prof.\ Aditya K.\ Jagannatham (IIT Kanpur) and Prof.\ Suraj Srivastava (IIT Jodhpur) for their excellent instruction in \href{https://www.iitk.ac.in/mwn/AIML/speaker.html}{certificate program on python
for AI , ML and DL} , which heavily informed the codebase developed for this project. All the Computations were performed on Google Colab and a personal workstation.

\appendix

\section{Supplementary figures}
\label{app:figs}

This appendix frames the panels referenced but not displayed in the main
text; each figure carries its scores in its title.

\begin{figure}[h]
\centering
\includegraphics[width=0.9\textwidth]{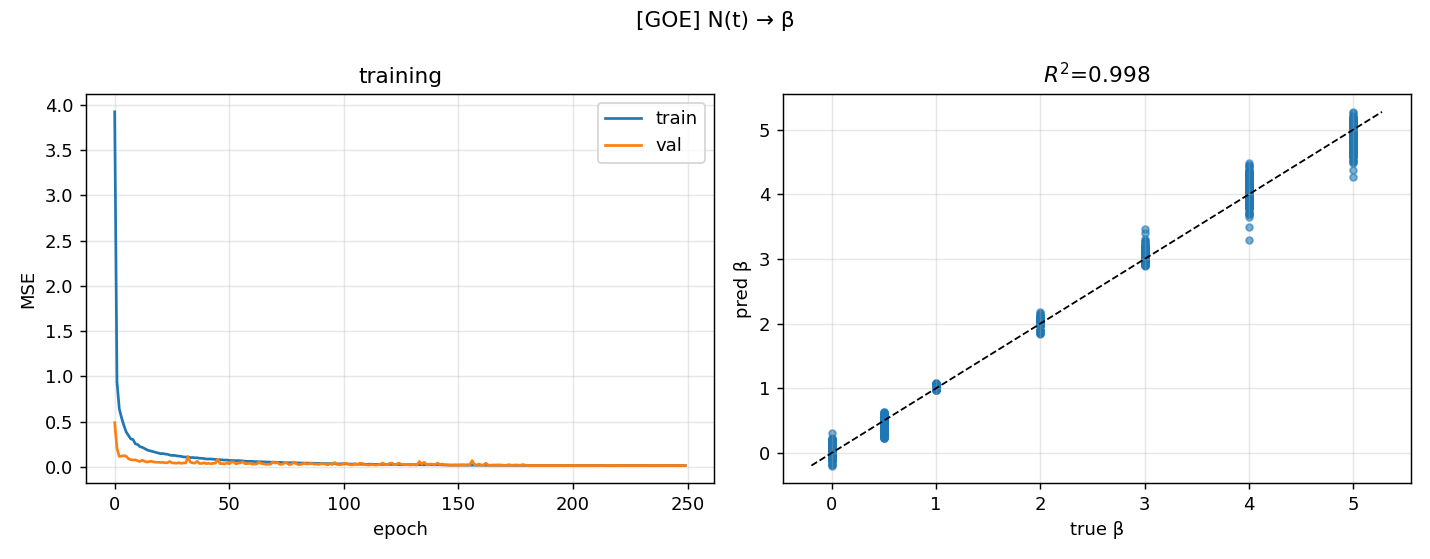}\\[1mm]
\includegraphics[width=0.9\textwidth]{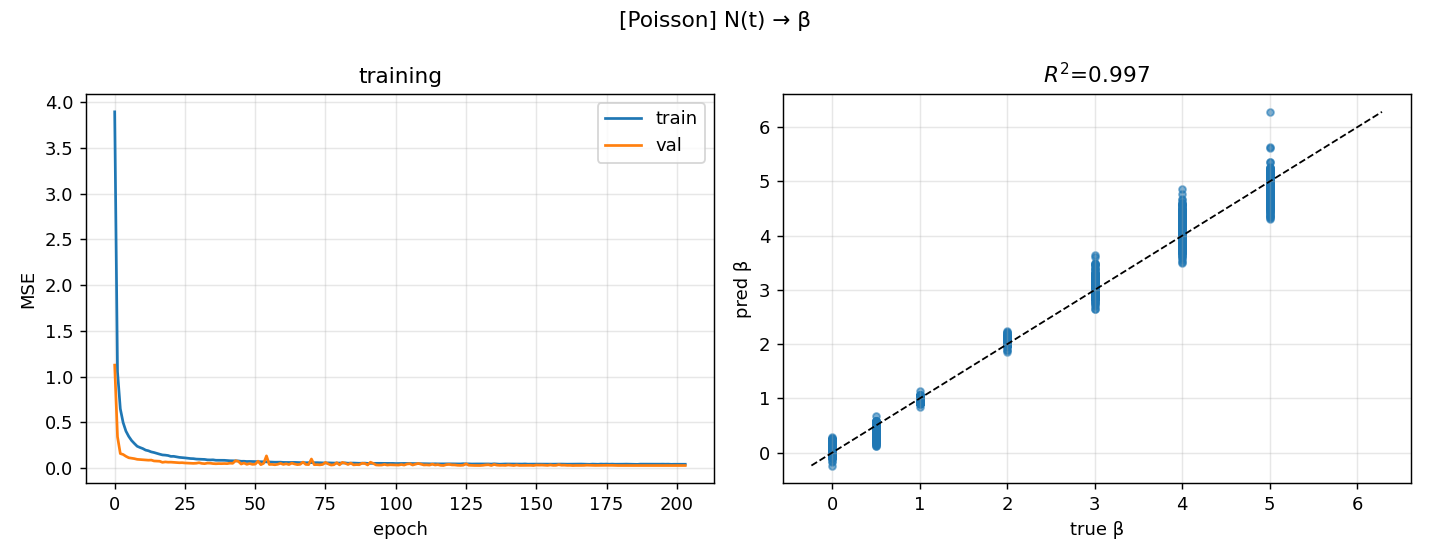}
\caption{Temperature from the negativity: GOE (top) and Poisson (bottom),
completing Table~\ref{tab:ens}.}
\label{fig:appbeta}
\end{figure}

\begin{figure}[h]
\centering
\includegraphics[width=\textwidth]{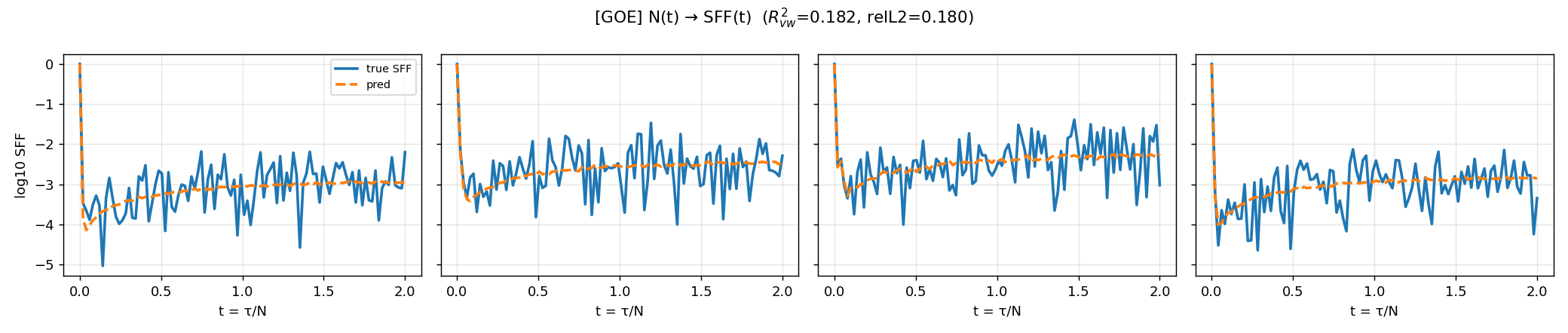}\\[1mm]
\includegraphics[width=\textwidth]{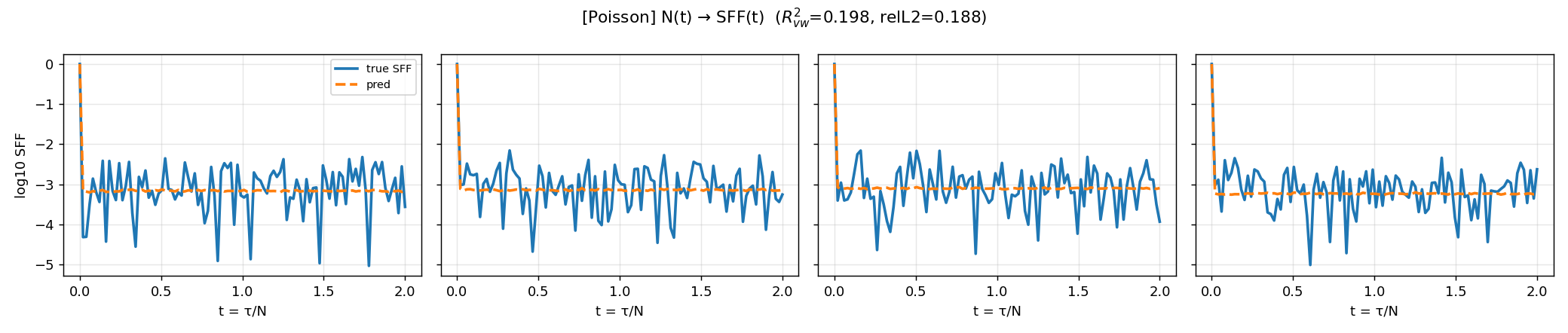}
\caption{Lossiness from the negativity channel: $N\to{\rm SFF}$ on GOE (top)
and Poisson (bottom) --- the same envelope-only reconstruction as from
$\C$.}
\label{fig:appsff}
\end{figure}

\begin{figure}[h]
\centering
\includegraphics[width=\textwidth]{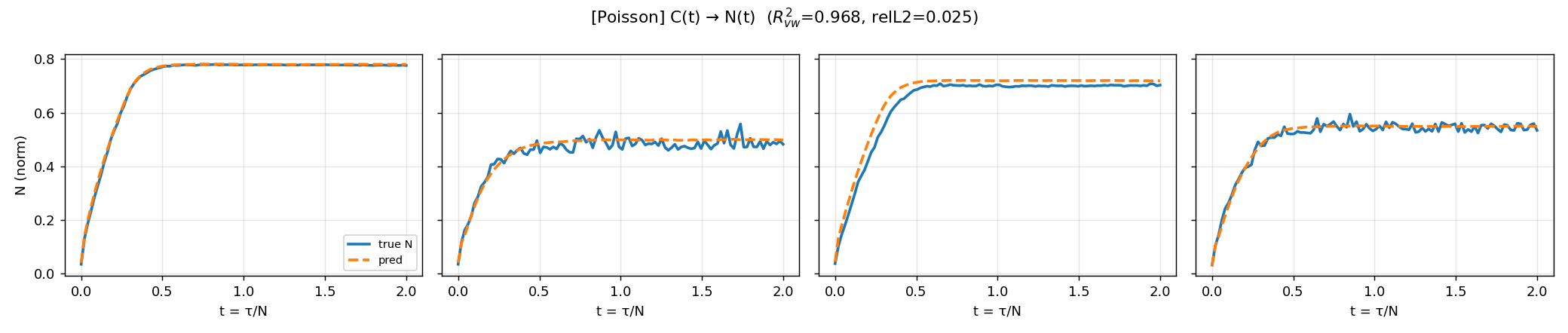}
\caption{Poisson $\C\to N$ ($\rvw=0.968$, $\relL=0.025$), completing the
mutual-reconstruction pair of Fig.~\ref{fig:poiNC}.}
\label{fig:apppoiCN}
\end{figure}

\begin{figure}[h]
\centering
\includegraphics[width=0.49\textwidth]{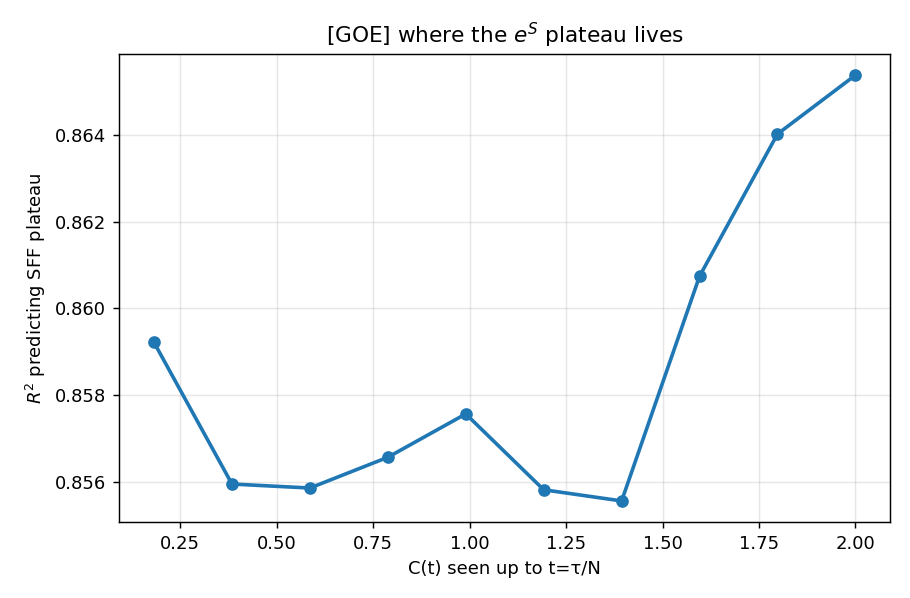}\hfill
\includegraphics[width=0.49\textwidth]{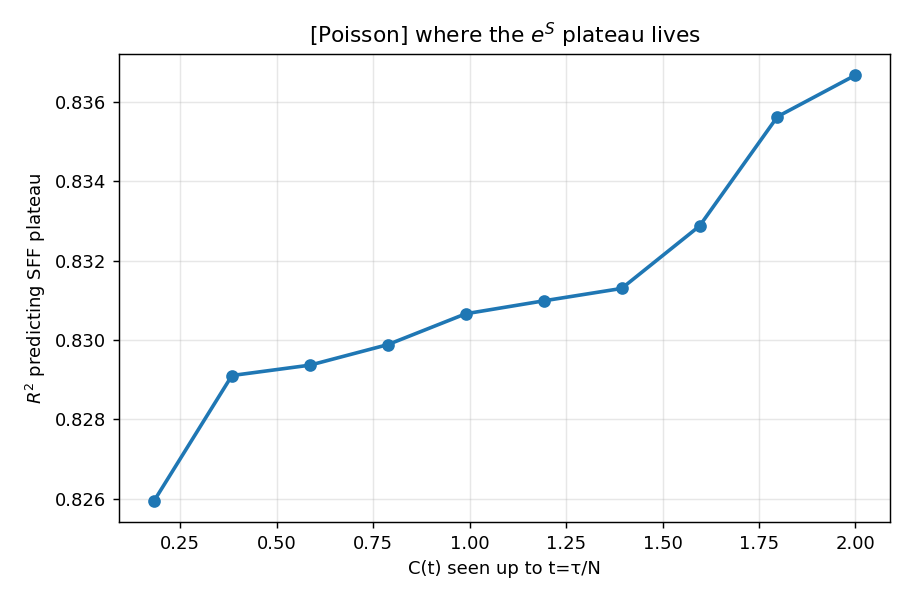}
\caption{Information budgets for GOE (left; full $\approx0.866$, early
$\approx0.859$) and Poisson (right; $\approx0.836$, $\approx0.83$): the
same early-saturation shape as GUE (Fig.~\ref{fig:gueib}), with ceilings
ordered by spectral rigidity.}
\label{fig:appib}
\end{figure}

\begin{figure}[h]
\centering
\includegraphics[width=0.32\textwidth]{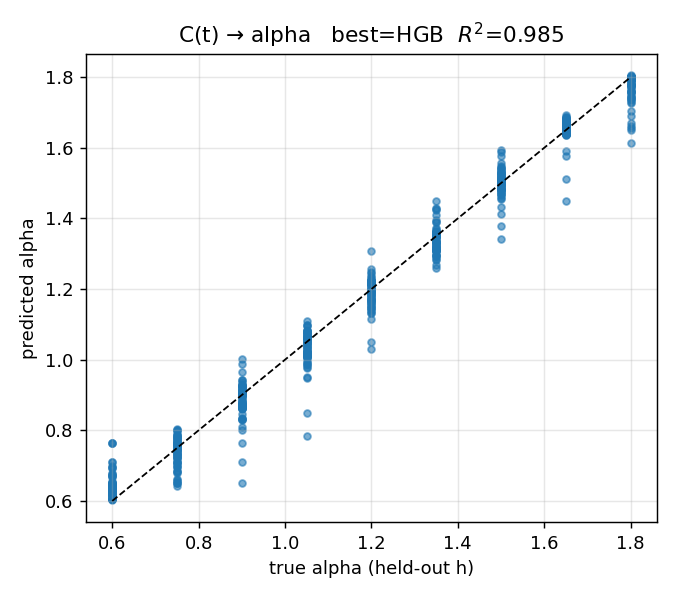}\hfill
\includegraphics[width=0.32\textwidth]{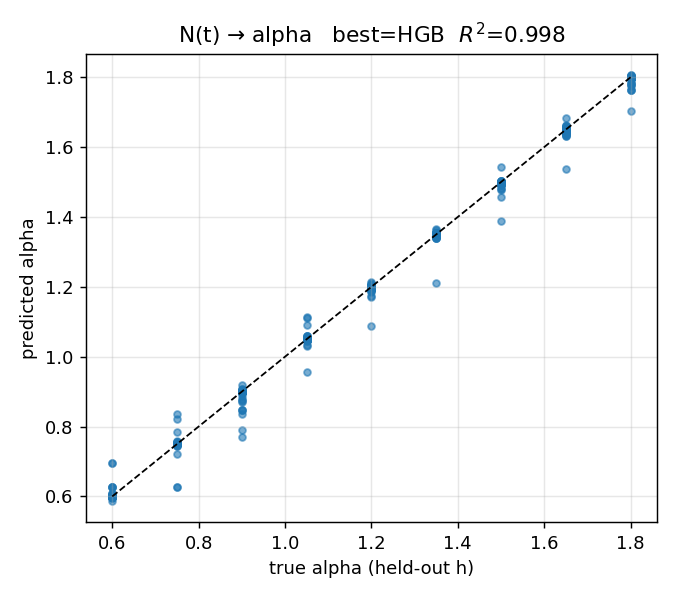}\hfill
\includegraphics[width=0.32\textwidth]{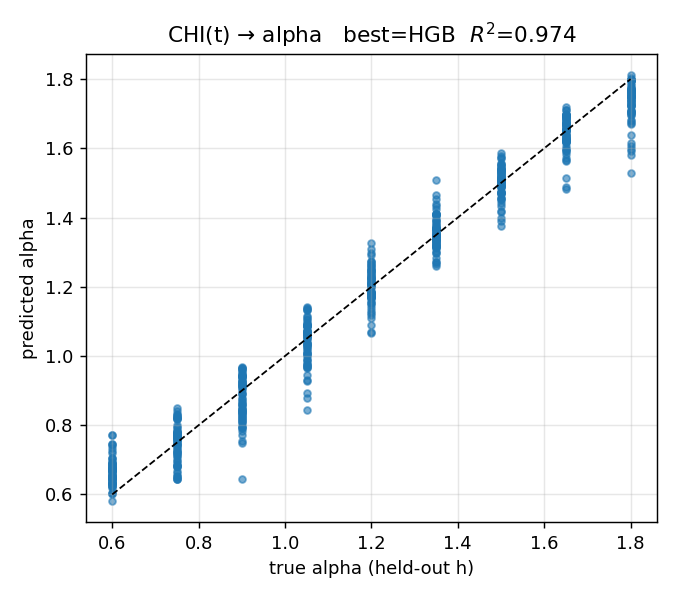}
\caption{Integrable sector, $\to\alpha$ triplet: $\C$ ($0.985$), $N$
($0.998$), $\chi$ ($0.974$) --- the same ordering as the $\to h$ row of
Fig.~\ref{fig:cfth}.}
\label{fig:appalpha}
\end{figure}

\begin{figure}[h]
\centering
\includegraphics[width=\textwidth]{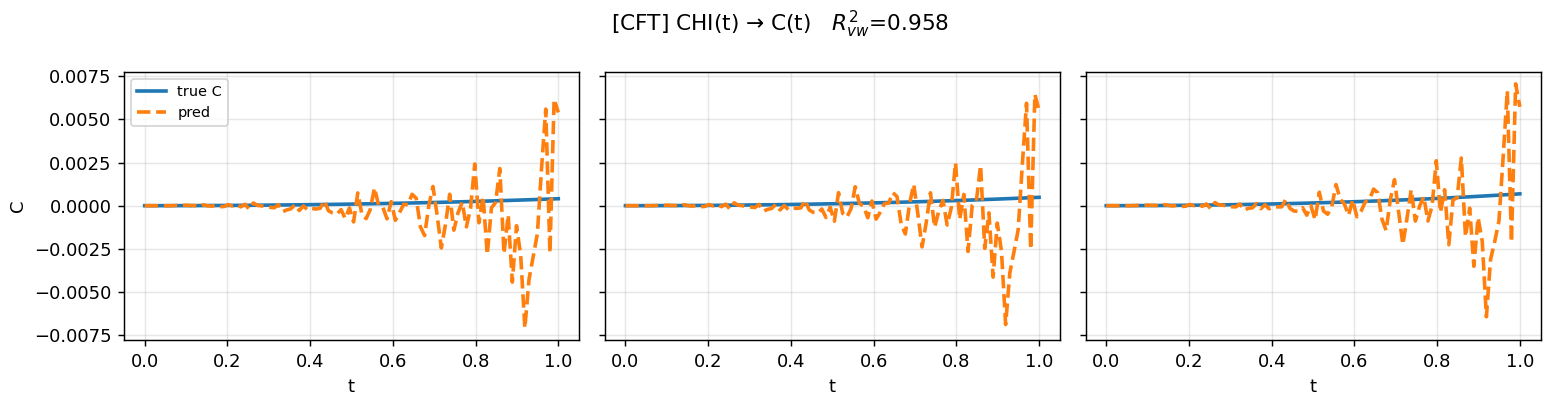}\\[1mm]
\includegraphics[width=\textwidth]{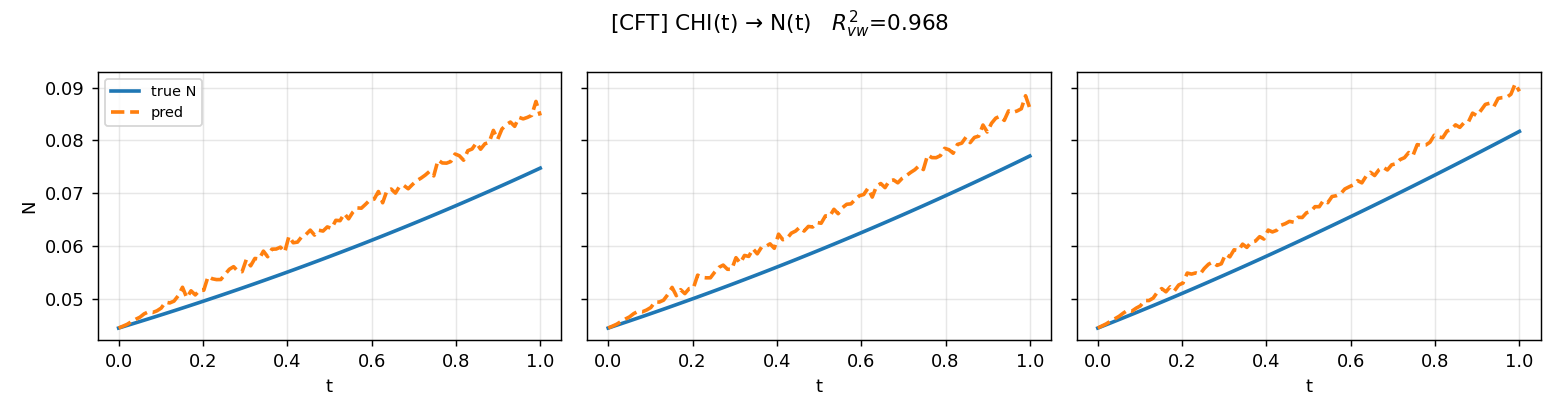}
\caption{Reverse $\chi$ maps in the integrable sector: $\chi\to\C$
($\rvw=0.958$) and $\chi\to N$ ($0.968$), completing
Fig.~\ref{fig:cftchi}.}
\label{fig:appchirev}
\end{figure}

\begin{figure}[h]
\centering
\includegraphics[width=\textwidth]{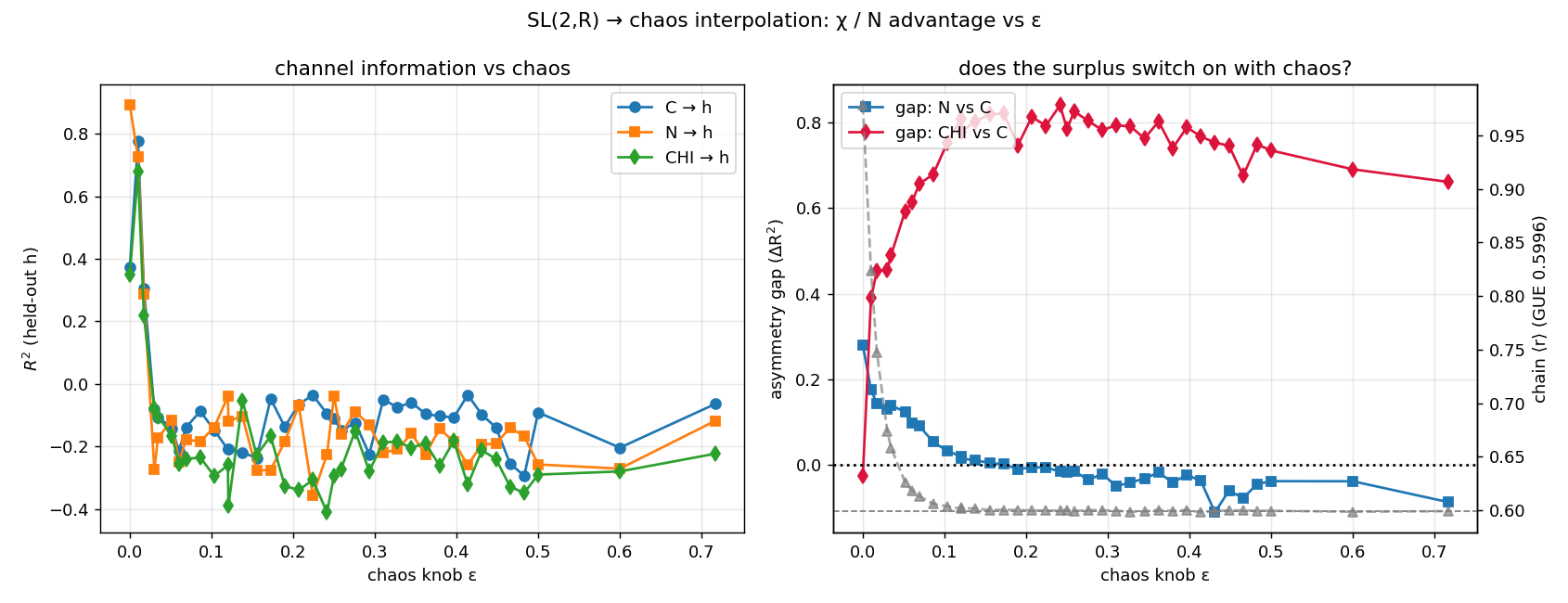}
\caption{Robustness of the headline: an independent earlier run of the
interpolation on a coarser $\varepsilon$ grid ($6$ values, single seed,
$4{,}920$ samples) shows the same structure --- ${\rm gap}_\chi$ rising to
$\approx0.8$ through the $\langle r\rangle$ crossover while
${\rm gap}_N\to0$ --- before the balanced-strata, seed-averaged protocol of
Fig.~\ref{fig:headline} was in place.}
\label{fig:appv1}
\end{figure}

\section{Smoothed-target protocols and the raw-$N$ gap}
\label{sec:smooth}

Two distinct questions involve smoothing, and we kept them separate.

\paragraph{(a) Does smoothing rescue the SFF?} No: Sec.~\ref{sec:sffsmooth}.
Windows up to $\Delta t\simeq0.26$ yield only modest gains, and wider windows
erase the dip--ramp structure whose prediction is the point of the task.

\paragraph{(b) Is the pooled-GUE raw-$N$ gap real?} The $\rvw$ gap of
$+0.019$ in favour of $N\to\C$ (Sec.~\ref{sec:guemutual}) was already
inconclusive under the dual-metric rule. The smoothing experiment settles its
origin (Fig.~\ref{fig:smoothproto}). We compare three protocols as functions
of the window $\Delta t$: (P1) raw$\to$raw, the baseline, whose gap is
constant at $0.0191$ by construction; (P2) raw$\to$smoothed targets; and
(P3) smoothed$\to$smoothed, the noise-fair comparison in which both channels
are filtered identically. In both P2 and P3 the direction $\C\to N$ climbs
steadily ($0.976\to0.993$) while $N\to\C$ stays flat at $0.995$, and the gap
collapses monotonically to $0.002$--$0.003$ at $\Delta t=0.26$. The
interpretation is unambiguous: the raw-$N$ gap is carried by the
unpredictable high-frequency component of the raw negativity --- it is
noise-borne, not information. This is precisely why the headline claim of
Sec.~\ref{sec:chaos} is formulated for the \emph{normalised} negativity,
whose chaotic-regime surplus is large ($0.6$--$0.8$), seed-stable, and
two-metric robust --- two orders of magnitude above the noise-borne residue
documented here.

\begin{figure}[h]
\centering
\includegraphics[width=\textwidth]{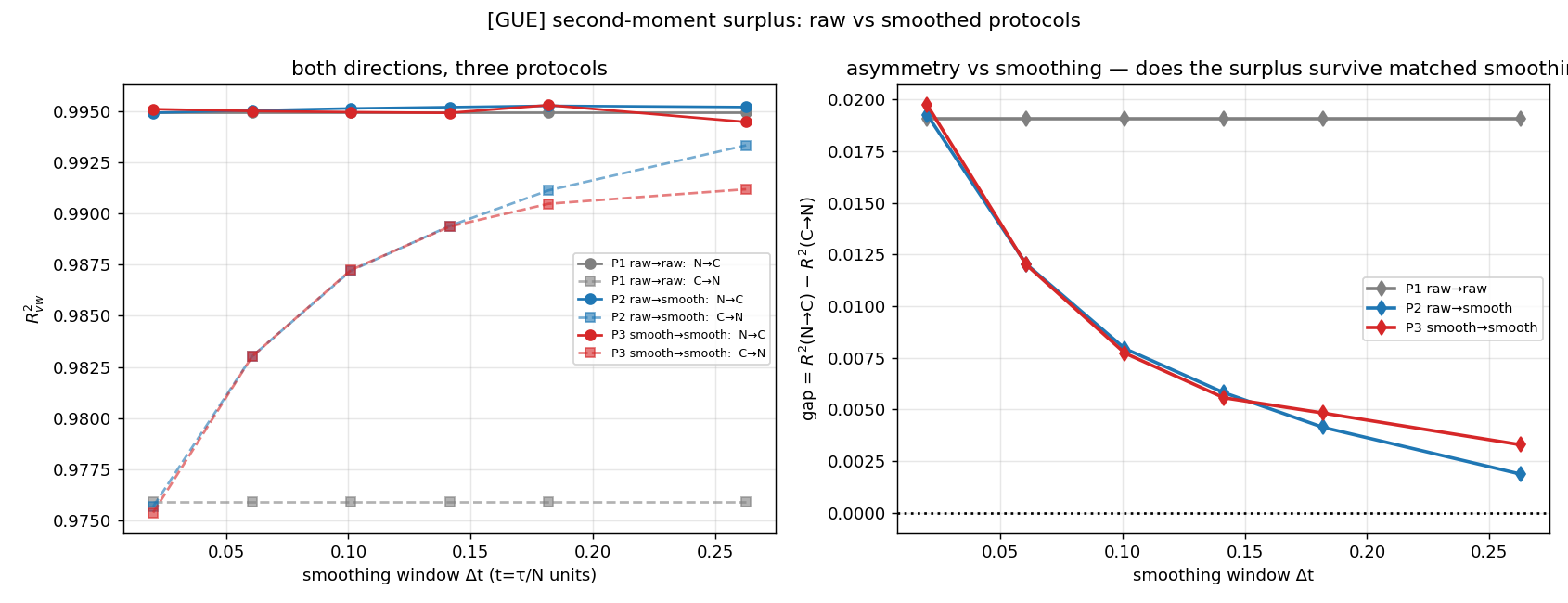}
\caption{The three smoothing protocols on GUE. \textbf{Left:} both directions
under P1/P2/P3 versus window $\Delta t$: $N\to\C$ flat at $0.995$;
$\C\to N$ rising $0.976\to0.993$. \textbf{Right:} the gap: constant $0.0191$
raw (P1), collapsing to $0.002$--$0.003$ under matched smoothing (P2, P3).
The raw-$N$ asymmetry is noise-borne.}
\label{fig:smoothproto}
\end{figure}

\section{Mechanism consistency}
\label{app:mech}

This appendix collects the direct numerical checks of the mechanism of
Sec.~\ref{sec:mech}, all computed from the interpolation archives.

\paragraph{The speckle share $F$.} Removing the envelope $A$ by conditioning
$\log\chi$ on $\C$ (a binned conditional mean on the leading principal
component of the $\C$ curves) and forming $B=\log\chi-A$, the share
$F=\langle\mathrm{Var}(B)\rangle_{\rm post}/\langle\mathrm{Var}(\log\chi)
\rangle$ is a proper share, bounded by unity, wherever hypothesis (D1)
holds. It measures $\approx0.16$ at $\varepsilon=0$ --- the residual is the
learnability floor, not speckle --- and settles to $F\simeq0.93$--$1.0$ once
the level statistics are GUE. At the integrable-to-chaotic crossover
($\varepsilon\simeq0.02$), where envelope--speckle covariance is not yet
negligible and the envelope subtraction is imperfect, the estimator
overshoots unity by up to $\sim20\%$; this is a diagnostic of (D1) failing
at the crossover, not of the deep-chaos measurement. This
confirms, and slightly exceeds, the range $F\approx0.75$--$0.88$ obtained by
inverting the measured gap through \eqref{eq:gap}, and shows directly that
$\log\chi$ is speckle-dominated in the chaotic phase.

\paragraph{The speckle constant.} On GUE--TFD survival amplitudes the
late-time variance $\mathrm{Var}(\ln|S|^2)$ measures $1.59$--$1.68$ across
$D\in[37,601]$ with no systematic $D$-trend, scattered about the analytic
value $\psi_1(1)=\pi^2/6\simeq1.645$ of Lemma~\ref{lem:speckle}.

\paragraph{$D$-scaling of the surplus.} A scan over $21$ dimensions
$D\in[37,601]$ ($13{,}464$ generated samples; gaps by a fast linear proxy,
so only the trend is claimed, not absolute values) yields a saturated
$\chi$-gap that rises monotonically with $D$ and flattens toward a plateau:
from $\simeq0.47$ at $D=37$ through $0.73$ at $D=251$ to $\simeq0.80$ near
$D\simeq500$. Over the same range the floor of $\rho^2(\varepsilon)$ falls steadily with
$D$: the (D2) bound $0.19$ is satisfied at the GUE plateau for $D\gtrsim150$
--- in particular at the $D=251$ strata used in the main text --- and
tightens to $\simeq0.09$ at $D=601$; the smallest dimensions sit above it,
where the bound merely weakens. The control ${\rm gap}_N$ decays to zero
deep in chaos at every dimension; its nonzero value at $\varepsilon=0$
($\sim0.5$) is the raw-channel dynamic-range artifact of the linear proxy
(the pathology exhibited in Sec.~\ref{sec:cftcurves}), not information, and
the proxy's $\varepsilon=0$ floor for ${\rm gap}_\chi$ ($\sim0.1$) likewise
exceeds the network learnability floor \eqref{eq:floor}. The saturation
with $D$ is the behaviour expected of a thermodynamic-limit effect and
excludes a finite-size origin for the surplus.

\begin{figure}[h]
\centering
\includegraphics[width=16cm, height= 9.89cm]{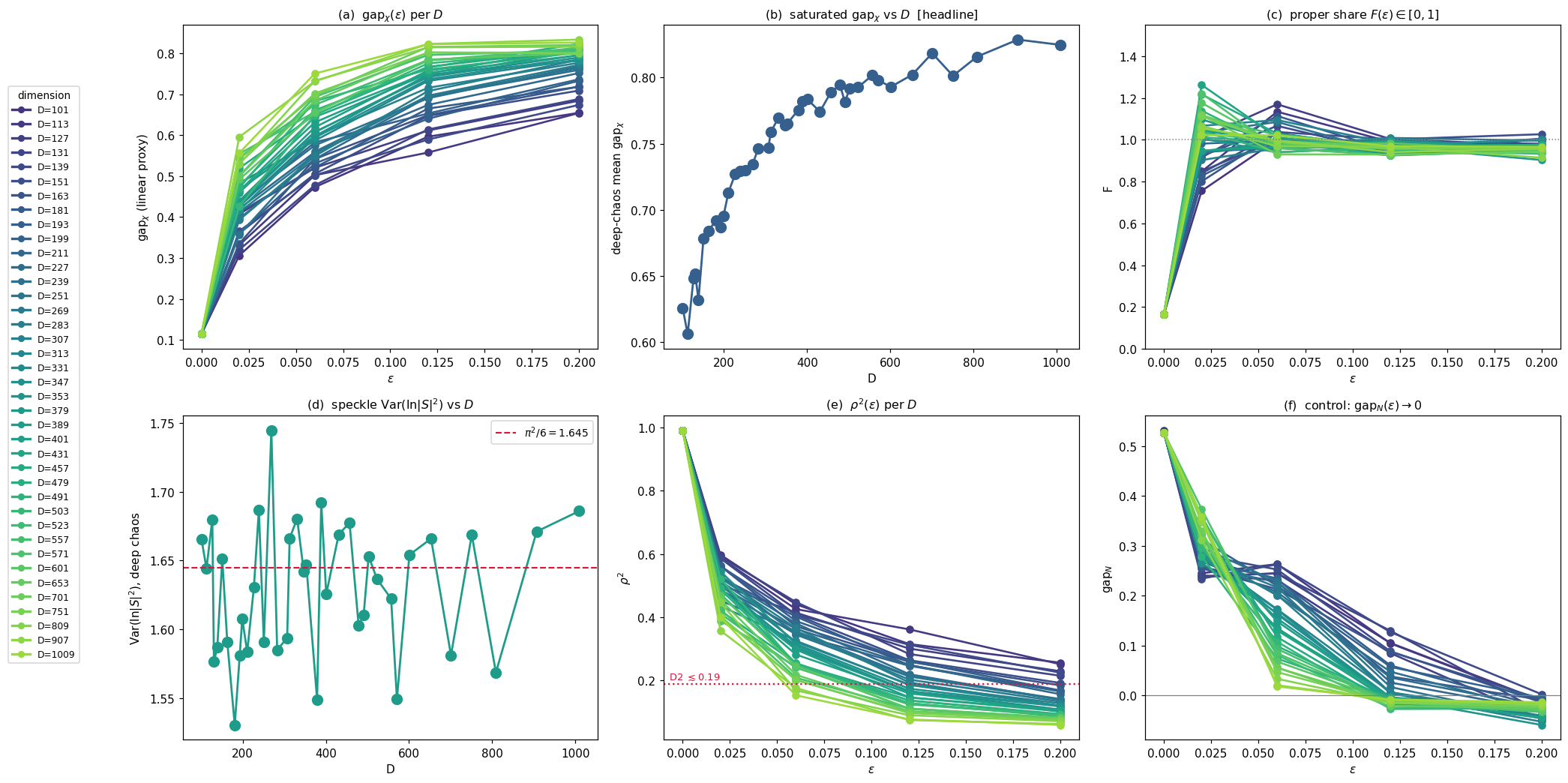}
\caption{\textbf{$D$-scaling and mechanism consistency of the
$\chi$-surplus} ($21$ dimensions $D\in[101,1009]$, $\varepsilon\in\{0,0.02,
0.06,0.12,0.20\}$, $24{,}480$ generated interpolation samples; colour encodes
$D$, shared legend at left). Gaps use a fast linear-proxy regressor, so only
trends and orderings are claimed, not absolute values. (a)
${\rm gap}_\chi(\varepsilon)$ per dimension: the surplus switches on with
chaos at every $D$. (b) \emph{Headline}: the deep-chaos saturated
${\rm gap}_\chi$ rises monotonically with $D$ and flattens toward a plateau
near $D\simeq500$ --- the fingerprint of a thermodynamic-limit effect, not a
finite-size artifact. (c) The speckle share $F(\varepsilon)$
(envelope removed by conditioning on $\C$): $\approx0.16$ at $\varepsilon=0$
and $0.93$--$1.0$ in deep chaos, confirming and slightly exceeding the
inferred $0.75$--$0.88$; the overshoot above unity at the crossover marks
where hypothesis (D1) is not yet valid and the estimator ceases to be a
proper share. (d) The speckle constant $\mathrm{Var}(\ln|S|^2)$ scatters
about $\psi_1(1)=\pi^2/6=1.645$ with no $D$-trend (Lemma~\ref{lem:speckle}).
(e) The $\C$-predictable share $\rho^2(\varepsilon)$ falls below the (D2)
bound $0.19$ deep in chaos for $D\gtrsim150$, and further with increasing
$D$. (f) Control: ${\rm gap}_N(\varepsilon)\to0$ deep in chaos at every
dimension; its $\varepsilon=0$ value is the linear-proxy dynamic-range
artifact of Sec.~\ref{sec:cftcurves}, not information. The star in (b) marks the
full residual-network value of Table~\ref{tab:eps} at $D=251$; it is not
comparable in absolute terms to the linear-proxy curve and is shown only for
orientation.}
\label{fig:dscaling}
\end{figure}

\section{Numerical details and reproducibility}
\label{app:num}

\paragraph{Verification of the FFT negativity.} Identity \eqref{eq:fftneg}
was checked against the direct phase-point-operator construction on random
states at $D\in\{7,31,101,199\}$; maximal deviation $\sim10^{-13}$. The
speedup at $D=199$ is $\sim130\times$ per time point, rising with $D$; the
$O(D^3)$ phase array of the direct method is eliminated entirely.

\paragraph{Chain construction.} TFD rotation by a rank-one Householder
reflection ($O(D^2)$); tridiagonalisation via Hessenberg reduction with a
phase sweep enforcing real non-negative off-diagonals; spectrum and evolution
from a dedicated symmetric-tridiagonal eigensolver. For the interpolation,
the deformed Hamiltonian is re-tridiagonalised from the same initial vector,
so all channels remain exact Krylov observables at every $\varepsilon$.

\paragraph{Data hygiene.} All generation is checkpointed per parameter block
with disjoint deterministic seed ranges; merged archives store per-sample
seeds, $\langle r\rangle$, and (for the interpolation) the edge diagnostic.
Balanced strata are verified at merge time and the analysis refuses degenerate
splits (minimum samples and held-out-$h$ counts per stratum).

\paragraph{Training.} Adam ($10^{-3}$), batch $32$--$64$, up to $250$--$300$
epochs with early stopping (patience $30$--$40$, best-weight restoration) and
learning-rate halving on plateau; inputs/targets standardised; scalar tasks
additionally fit a histogram gradient-boosted regressor
(300--500 trees, learning rate $0.05$) with validation-based selection among
network/trees/blend. All figures in this paper are produced directly by the
generation and analysis scripts, with scores embedded in the titles at run
time.

\bibliographystyle{JHEP}
\bibliography{Reference_ml_krylov}

\end{document}